\newtheorem*{rep@theorem}{\rep@title}
\newcommand{\newreptheorem}[2]{%
\newenvironment{rep#1}[1]{%
 \def\rep@title{#2 \ref{##1}}%
 \begin{rep@theorem}}%
 {\end{rep@theorem}}}
\newenvironment{tightenumerate}{
	\begin{enumerate}[topsep=0pt,itemsep=0ex,partopsep=0ex,parsep=-1ex,leftmargin=15pt]
		\setlength{\parskip}{0pt}
	}{\end{enumerate}}
\newcommand{\mydriver}{hypertex}
 \renewcommand{\mydriver}{pdftex}
\theoremstyle{plain}
\newtheorem{theorem}{Theorem}[section]%
\newtheorem{fact}[theorem]{Fact}
\newtheorem{lemma}[theorem]{Lemma}
\newtheorem{claim}[theorem]{Claim}
\newtheorem{definition}[theorem]{Definition}
\theoremstyle{definition}
\newcommand{\TV}{\textrm{TV}}
\newcommand{\vol}{\textrm{vol}}
\newcommand{\E}{\textrm{E}}
\newcommand{\poly}{\textrm{poly}}
\newcommand{\HH}{\ensuremath{\mathcal{H}}}
\newcommand{\MM}{\ensuremath{\mathcal{M}}}
\newcommand{\p}{\textbf{p}}
\newcommand{\R}{\mathbb{R}}
\newcommand{\pp}{{\textbf{a}}}
\newcommand{\PP}{\ensuremath{\mathcal{P}}}
\newcommand{\qq}{{\textbf{b}}}
\newcommand{\vv}{\textbf{v}}
\newcommand{\uni}{\ensuremath{\mathbf{\mathcal{U}}}}
\newcommand{\bb}{\textbf{b}}
\newcommand{\1}{\textbf{1}}
\newcommand{\D}{\textbf{D}}
\newcommand{\A}{\textbf{A}}
\newcommand{\rcp}{\ensuremath{\mathrm{rcp}}}
\newcommand{\EstRCP}{\textsc{EstimateRCP}}
\newcommand{\Span}{\textrm{span}}
\newcommand{\I}{\textbf{I}}
\renewcommand{\qed}{\nobreak \ifvmode \relax \else
	\ifdim\lastskip<1.5em \hskip-\lastskip
	\hskip1.5em plus0em minus0.5em \fi \nobreak
	\vrule height0.75em width0.5em depth0.25em\fi}
\providecommand{\abs}[1]{\lvert#1\rvert} \providecommand{\norm}[1]{\lVert#1\rVert}
\newcommand{\cst}{\ensuremath{\kappa}}
\newcommand{\ccss}{\ensuremath{\gamma}}
\newcommand{\myout}{\textrm{out}}
\newcommand{\myin}{\textrm{in}}
\newcommand{\Conn}{\textrm{Conn}}
\providecommand{\abs}[1]{\left|#1\right|}
\providecommand{\norm}[1]{\lVert#1\rVert}
\providecommand{\vect}[1]{\ensuremath\mathbf#1}
\providecommand{\agbracket}[1]{\langle#1\rangle}
\providecommand{\mat}[1]{\ensuremath\mathbf#1}
\title{Robust Clustering Oracle and Local Reconstructor of Cluster Structure of Graphs}
\date{}
\author{
	Pan Peng\footnote{	Department of Computer Science, University of Sheffield, Sheffield, U.K. Email: \url{p.peng@sheffield.ac.uk}. %
	}
}
\begin{document}

\begin{titlepage}
		
\maketitle

\begin{abstract}
Due to the massive size of modern network data, local algorithms that run in \emph{sublinear} time for analyzing the cluster structure of the graph are receiving growing interest. Two typical examples are local graph clustering algorithms that find a cluster from a seed node with running time proportional to the size of the output set, and clusterability testing algorithms that decide if a graph can be partitioned into a few clusters in the framework of property testing. 
  
In this work, we develop sublinear time algorithms for analyzing the cluster structure of graphs with noisy partial information. By using \emph{conductance} based definitions for measuring the quality of clusters and the cluster structure, we formalize a definition of \emph{noisy clusterable graphs} with bounded maximum degree. The algorithm is given query access to the adjacency list to such a graph. We then formalize the notion of \emph{robust clustering oracle} for a noisy clusterable graph, and give an algorithm that builds such an oracle in sublinear time, which can be further used to support typical queries (e.g., \textsc{IsOutlier}($s$), \textsc{SameCluster}($s,t$)) regarding the cluster structure of the graph in sublinear time. All the answers are consistent with a partition of $G$ in which all but a small fraction of vertices belong to some good cluster. We also give a \emph{local reconstructor} for a noisy clusterable graph that provides query access to a reconstructed graph that is guaranteed to be clusterable in sublinear time. All the query answers are consistent with a clusterable graph which is guaranteed to be close to the original graph.

To obtain our results, we give new analysis of the behavior of random walks on a noisy clusterable graph, which consists of a large subset that induces a clusterable graph and a small unknown subgraph (the noise). We show that a random walk of appropriately chosen length from a typical vertex in a large cluster of the clusterable part will mix well in the corresponding cluster. Using this we are able to distinguish vertices from the clusterable part from those in the noisy part. 
\end{abstract}

\end{titlepage}

\section{Introduction}
Graph clustering is a fundamental task arising from many domains, including computer science, social science, network analysis and statistics. Given a graph, the task is to group the vertices into \emph{reasonably good} clusters, where vertices inside the same cluster are well-connected to each other, and any two different clusters are well-separated. Such clusters convey valuable information of large graphs, and have concrete applications in recommendation systems, search engine, network routing and many others~(see e.g., surveys~\cite{Sch07:graph,POM09:communities,For10:community,New12:communities}). %
Many efficient \emph{global} clustering algorithms that run in polynomial time have been proposed for analyzing the structure of graphs, where the goal is to find the overall cluster structure of a graph. Almost all such algorithms need to at least read the whole input of the graph and thus run in linear time. Actually, even just outputting all the clusters will require $\Omega(n)$ time, where $n$ is the number of the vertices of the graph. These algorithms, though considered to be efficient in the classical algorithm design, are becoming impractical (and sometimes even impossible) to be used for processing and analyzing modern very large networks/graphs (e.g., WWW and social networks). 

Therefore, local algorithms that run in \emph{sublinear} time for analyzing the cluster structure of the graph are receiving growing interest. Such algorithms are typically assumed to be able to explore the input graph by performing appropriate queries, e.g., query the degree or the neighbor of any node. %
There have been two main frameworks for designing sublinear algorithms for graph clustering, if one uses the well-motivated notion \emph{conductance} (see below) to measure the quality of clusters. In the first one, called \emph{local graph clustering}, the goal is to find a cluster from a specified vertex with running time that is bounded in terms of the size of the output set (and with a weak dependence on $n$)~(see e.g., \cite{ST13:local,ACL06:local,AP09:sparse,OT12:local,AOPT16:local,ZLM2013:local,OZ14:flow}). If the target cluster has much small size, then the running time of the resulting algorithm will be sublinear in the input size. In the second one, called \emph{testing cluster structure} in the framework of \emph{property testing}, the goal is to distinguish if an input graph has a typical cluster structure or is far from such cases~(see \cite{CPS15:cluster,CKKMP18:cluster} and more discussions below). Such algorithms make decisions on the global cluster structure of the input graph by sampling vertices and locally exploring a small portion of the graph, and they can be served as a preliminary step before learning the cluster structure.

In this work, we study local and sublinear algorithms for analyzing the cluster structure of graphs that may contain  noise and/or outliers. In many real applications, due to external noise or errors, the network data set may fail to have the desired property (here, the cluster structure), while it might still be close to have this property. That is, the graph $G$ under our consideration is some kind of \emph{perturbation of a clusterable graph} or a \emph{noisy clusterable graph}: $G$ is first chosen from some class of clusterable graphs with an underlying while unknown partition, and then some noise and/or outliers are introduced by some adversary or in some random way. This is a relaxation of a common assumption for many existing clustering algorithms that the input graph is simply well clusterable. We would like to very efficiently process such a noisy clusterable graph and extract useful information regarding its cluster structure. Slightly more precisely, we study two types of sublinear algorithms for analyzing the cluster structure of graphs with noisy partial information. 

The first type of algorithm is driven by the following natural question: \emph{Given a noisy clusterable graph, can we build an oracle (or implicit representation) in sublinear time, that can support typical queries regarding the cluster structure of the graph in sublinear time?} For example, we would like to query ``Is a vertex $s$ a noise/outlier?''. If the answer is ``No'', we would further like to know ``Which cluster does $s$ belong to?'', and ``Do $s$ and $t$ belong to the same cluster?'', given that both vertices $s,t$ are not outliers. We would require that all the query answers will be consistent, e.g., if $u,v$ are reported to belong to the same cluster, $v,w$ are reported to belong to the same cluster, then $u,w$ will also be reported to belong to the same cluster. Furthermore, we would like to minimize the number of vertices for which the oracle returns the ``wrong'' answers in the sense that the output partition of the algorithm should be close to an underlying maximal good clustering of the graph. We will call such an oracle as a \emph{robust clustering oracle}. Such oracles might be already interesting from real-world applications. For example, quickly identifying outliers might be valuable in road networks and medical data. Sometimes, we only want the cluster information of a small group of vertices while do not care about other parts of the graph. Furthermore, it will be desirable to work on-the-fly on a clean data after removing a small fraction of outliers. Besides these real-world applications, such oracles might be given as input for other clustering algorithms that are equipped with the power of making the above mentioned clustering queries (see e.g., \cite{mazumdar2017query,MS17:clustering,AKB16:clustering,ailon2018approximate,ABJ18:approximate}). %

Our second type of algorithm is motivated by a very related question: \emph{Given a noisy clusterable graph, can we fix it by minimally modifying the original graph, and provide query access to the reconstructed clusterable graph in sublinear time?} We address this question in the \emph{online reconstruction} framework introduced by~\cite{ACCL08:reconstruction}. In this framework (for graphs), given a property $\Pi$ and query access to a graph $G$ that is close to have $\Pi$, we want to output a graph $G'$ such that $G'$ has the property $\Pi$ and $G$ is modified minimally to get $G'$. Furthermore, we would like to output $G'$ in a local and consistent way that can provide query access to $G'$ by making as few queries to the input graph $G$. The corresponding algorithm will be called a \emph{local reconstructor} or \emph{local filter} for property $\Pi$~\cite{ACCL08:reconstruction,SS10:monotonicity,AT10:testability}. The natural application of such local reconstructors is when only a small portion of the corrected graph $G'$ is needed or when we want to make use of the graph $G'$ in a distributed manner. (Note that in many applications, queries are made to a large graph which are assumed to exhibit some structural property.) Here, we would focus on designing a local filter for cluster structure of graphs and providing consistent query access to a clusterable graph. In practice, such algorithms might be used for fast recommending products to users even if there are some noise in the data. %

In this work, we give both sublinear robust clustering oracle and local reconstructors for the cluster structure of graphs. Now we give basic definitions of clusters and (noisy) clusterable graphs, formalize our algorithmic problems, state our main results and sketch our technical ideas.

\subsection{Basic Definitions}
\paragraph{Conductance based clustering.}
Following a recent line of research on graph clustering~(e.g., \cite{OT14:partitioning,CPS15:cluster,PSZ17:partition,DPRS14:spectra}, which were built upon \cite{KVV04:clustering}), we will use \emph{conductance} based definition for measuring the quality of clusters and the cluster structure of graphs. In this paper, we will focus on undirected graphs with bounded maximum degree. We call an undirected graph $G=(V,E)$ a \emph{$d$-bounded} graph if its maximum degree is upper bounded by some parameter $d$, which is always assumed to be some sufficiently large constant (at least $10$). For any two subsets $S,T\subseteq V$, we let $E(S,T)$ denote the set of edges with one endpoint in $S$ and the other point in $T$. The \emph{conductance $\phi_G(S)$} of a set $S$ in $G$ is defined to be the ratio between the number of edges crossing $S$ and its complement $V\setminus S$ and the maximum number of edges possible incident to $S$, that is, $\phi_G(S):=\frac{|E(S,V\setminus S)|}{d|S|}.$ The \emph{conductance $\phi(G)$} of the graph $G$ is defined to be the minimum value of conductance of set $S$ with size at most $n/2$, that is, $\phi(G): = \min_{S:|S|\leq n/2} \phi_G(S).$ For convenience, for the singleton graph $G$ (that consists of a single vertex with no edges) we define its inner conductance $\phi(G)$ to be $1$.

Given a vertex set $S\subset V$, we let $G[S]$ denote the subgraph graph induced by vertices in $S$. In the following, we will refer to $\phi_G(S)$ and $\phi(G[S])$ as the \emph{outer conductance} and \emph{inner conductance}, respectively. %
Given two parameters $\phi_{\myin}$ and $\phi_{\myout}$, we call a set $S$ a \emph{$(\phi_{\myin}, \phi_{\myout})$-cluster} if 
$$\phi_G(S)\leq \phi_{\myout}, \quad \phi(G[S]) \geq \phi_{\myin}.$$
For a good cluster $S$, we expect $\phi_{\myin}$ to be large and $\phi_{\myout}$ to be small. In particular, if $S=V$ and $\phi(G[V])=\phi(G)\geq \phi_\myin\geq \phi$ for some constant $\phi$, then we call the graph $G$ a \emph{$\phi$-expander} which by itself is a good cluster and has been extensively studied in theoretical computer science~(see e.g.,~\cite{HLW06:expander}). It is useful to note that $\phi_G(V)=0$. When $G$ is clear from the context, we omit the subscript $G$ from $\phi_G(S)$. %
A \emph{$k$-partition} of a graph $G=(V,E)$ is a partition of $V$ into $k$ subsets, $V_1,\cdots, V_k$ such that $V_i\cap V_j=\emptyset$ for $i\ne j$ and $\cup_i V_i=V$. We have the following definition of clusterable graphs that characterize graphs with typical cluster structure (see e.g., \cite{OT14:partitioning}).

\begin{definition}
Given parameters $d, k,\phi_\myin,\phi_\myout$, we call a $k$-partition $P_1,\cdots,P_k$ of a $d$-bounded graph $G$ a \emph{$(k,\phi_\myin,\phi_\myout)$-clustering} if for each $i\leq k$, $\phi(G[P_i])\geq \phi_\myin$ and $\phi_G(P_i) \leq \phi_\myout$.

A $d$-bounded graph $G$ is called to be \emph{$(k,\phi_\myin,\phi_\myout)$-clusterable} if $G$ has an $(h,\phi_\myin,\phi_\myout)$-clustering for some $h\leq k$.
\end{definition}
Note that in our definition, a $(k,\phi_\myin,\phi_\myout)$-clusterable graph may contain less than $k$ clusters, and $(1,\phi_\myin,0)$-clusterable graphs are equivalent to $\phi_\myin$-expanders. 

\paragraph{Clusterable graphs with modeling noise.} We assume that the input graph to the algorithm is generated from the family of all $(k,\phi_\myin,\phi_\myout)$-clusterable graphs and then modified by an adversary in some manner. %
We have the following definition.
\begin{definition}{(Clusterable Graphs with Modeling Noise or Noisy Clusterable Graphs)}
In this model, the adversary first chooses an arbitrary graph $G^*$ from the family of all $(k,\phi_\myin,\phi_\myout)$-clusterable graphs with maximum degree upper bounded by $d$. Then the adversary may do the following:
\begin{enumerate}
\item Choose an arbitrary $(h,\phi_\myin,\phi_\myout)$-clustering $P_1,\cdots,P_h$ of $G^*$ for some $h\leq k$.
\item Insert and/or delete at most $\varepsilon \cdot d n$ edges (noise) within the clusters $G^*[P_i]$, $1\le i\leq h$, while preserving the degree bound.
\end{enumerate}
We call the resulting graph $G$ an \emph{$\varepsilon$-perturbation} of $G^*$ with respect to the $h$-partition $P_1,\cdots, P_h$. %

\end{definition}
Equivalently, a graph $G$ is called to be an $\varepsilon$-perturbation of a $(k,\phi_\myin,\phi_\myout)$-clusterable graph if there is partition of $G$ with at most $k$ parts (called clusters), such that one can insert/delete at most $\varepsilon dn$  \emph{intra-cluster} edges to make it a $(k,\phi_\myin,\phi_\myout)$-clusterable graph. For simplicity, in the above definition, we only allowed the adversary to perturb the edges \emph{inside} the clusters, while our algorithm can actually be extended to work for the case that the adversary is also allowed to perturb \emph{inter-cluster} edges, up to a very \emph{limited extent}\footnote{More precisely, the adversary can be allowed to perturb a $\phi_\myout$ fraction of inter-cluster edges: this essentially can then be reduced to the case that only intra-cluster perturbations are allowed by re-scaling a constant factor of conductance values, i.e., one can view that the adversary first chooses a $(k,\phi_{\myin},2\phi_\myout)$-clusterable graph and then perturbs its intra-cluster edges.}. %
This definition generalizes the notion of noisy expander graphs studied by Kale, Peres, and Seshadhri~\cite{KPS13:noise}, which correspond to $k=1$ in our problem. In their setting, the adversary first chooses a $\phi$-expander and then modifies it by inserting/deleting  $\varepsilon$ fraction of edges in the graph.

\subsection{Problem Formalizations and Main Results}
Now we formalize our algorithmic problems and present our main results. For a $d$-bounded graph $G$, we will assume the algorithm is given query access to the adjacency list of $G$, that is, in constant time we can query the $i$-th neighbor of any vertex $v$. 

\paragraph{Robust clustering oracle.} Given query access to the adjacency list of a $d$-bounded graph $G$ that is promised to be an $\varepsilon$-perturbation of a $(k,\phi_\myin,\phi_\myout)$-clusterable graph, we are interested in constructing an implicit representation, called a \emph{robust clustering oracle}, of $G$ in sublinear time such that typical queries regarding the cluster structure of $G$ can be answered as quickly as possible (also in sublinear time).  More precisely, the oracle should support the following types of \emph{clustering queries}: 
\begin{itemize}
\vspace{-0.2em}
\setlength{\topsep}{0pt}
\setlength{\itemsep}{0pt}
\item[1)] \textsc{IsOutlier}($s$): Is a vertex $s$ a noise/outlier?
\vspace{-0.2em}
\end{itemize}
Intuitively, a vertex that does not belong to any good cluster should be reported as noise or outlier. For any non-outlier vertices $s,t$, the oracle can further support
\begin{itemize}
\setlength{\topsep}{0pt}
\setlength{\itemsep}{0pt}
\item[2)] \textsc{WhichCluster}($s$): Which cluster does $s$ belong to?
\item[3)] \textsc{SameCluster}($s,t$): Do $s$ and $t$ belong to the same cluster? 
\end{itemize}  

In the following, without loss of generality, we will assume that for any non-outlier vertex $s$ and the corresponding \textsc{WhichCluster}($s$) query, the oracle will output an integer $i$ with $1\leq i\leq h$ that specifies the index of the cluster that $s$ belongs to, for some integer $h$. Furthermore, given the ability of answering \textsc{WhichCluster} queries, for any two non-outlier vertices $s,t$, we simply define \textsc{SameCluster}($s,t$) to be the procedure that checks if \textsc{WhichCluster}($s$) is equal to \textsc{WhichCluster}($t$). This will naturally ensures the consistency for \textsc{SameCluster} queries. Note that the output of the algorithm naturally defines a partition of $V$, i.e., 
\[P_i:=\{u\in V: \textsc{WhichCluster}(u)=i\}, 1\leq i\leq h, \quad  %
B:=\{u\in V: \textsc{IsOutlier}(u)=\textbf{Yes}\}.\]

We would like to minimize the number of vertices for which the oracle returns the ``wrong'' answers. That is, for most vertices $v$ that do belong to some underlying good cluster in the perturbed $G$, we expect \textsc{IsOutlier}($v$) to return ``No''. Furthermore, for most vertices $u,v$ that belong to the same cluster (resp. different clusters), we expect \textsc{SameCluster}($u,v$) to return ``Yes'' (resp. ``No''). One further crucial requirement of a robust clustering oracle and the corresponding clustering query algorithm is to maintain \emph{consistency} among all queries. That is, on different query sequences, the answers of the oracle should be consistent with the same $h$-partition $D_1,\cdots, D_h$ of $V$ for some $h\leq k$, in which all but a small fraction of vertices belong to some \emph{good} cluster. Since the oracle construction and the corresponding query algorithm are typically randomized, we fix the randomness seed of the oracle and query algorithm once and for all to ensure consistent answers. Then the algorithm will be a \emph{deterministic} procedure for any input query, which further guarantees that the partition $D_1,\cdots, D_h$ is determined by $G$ and the internal randomness of the oracle and the algorithm, and is independent of the order of queries. This feature allows the oracle to be used in the distributed manner as consistency is guaranteed.

We provide the first robust clustering oracle with both sublinear preprocessing time and query time. For simplicity, we will assume both $d,k$ are constant throughout the paper. %
Let $P\triangle Q$ denote the symmetric difference between two vertex sets $P,Q$.

\begin{theorem}[Robust Clustering Oracle]\label{thm:oracle}
There exists an algorithm that takes as input parameters $n\geq 1$, $d>10$, $k\geq 1$, $\phi\in (0,1)$, $\varepsilon\in [\Omega(\frac{\phi}{{n}}),1]$ and has query access to the adjacency list of a graph $G=(V,E)$ that is an $\varepsilon$-perturbation of a $(k,\phi,O(\frac{\varepsilon\phi}{k^3\log n}))$-clusterable graph, and constructs a robust clustering oracle in $O(\sqrt{n}\cdot \poly(\frac{k\cdot\log n}{\phi\varepsilon}))$ pre-processing time. Furthermore, it holds that
\begin{enumerate}
\item Using the oracle, the algorithm can answer any clustering query (i.e., \textsc{IsOutlier}, \textsc{WhichCluster} or \textsc{SameCluster}) in $O(\sqrt{n}\cdot \poly(\frac{k\cdot\log n}{\phi\varepsilon}))$ time. 
\item There exists a partition $D_1,\cdots, D_{h'}, B'$ of $G$, for some $h'\leq k$, such that 
\begin{itemize}
	\setlength{\itemsep}{0pt}
\item the partition only depends on $G$ and the input parameters of the algorithm, and is independent of the order of queries;  
 \item if $\varepsilon\in [\Omega(\frac{\phi}{{n}}),\frac{\phi}{60k^2}]$, then
	$h'\geq 1$ and each $D_i$ is a $(\frac{\phi}{2}, \frac{a_{\ref{thm:rw_perturbed}}\sqrt{\varepsilon}\kappa^4\phi^{1.5}}{3k^3\log n})$-cluster, for any $1\leq i\leq h'$; if $\varepsilon\in (\frac{\phi}{60k^2},1]$, then $h'=0$; and
	\item with probability at least $1-\frac{1}{n}$, the partition $P_1,\cdots,P_{h},B$ output by the algorithm satisfies that $h'\leq h\leq k$ and  $\sum_{i=1}^{h'}|P_i\triangle D_i| + |(\cup_{i=h'+1}^{h}P_i)\cup B|+|B'|= O(k\sqrt{\frac{\varepsilon}{\phi}}n)$. 
\end{itemize}

\end{enumerate}

\end{theorem}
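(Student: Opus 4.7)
The plan is to build the oracle using the random-walk fingerprint approach underlying the clusterability testers of~\cite{CPS15:cluster}, extending it so that it (i) survives the intra-cluster adversarial noise and (ii) supports per-vertex outlier and cluster-index queries in a consistent way. During preprocessing, I sample a set $S$ of $\widetilde O(k)$ anchor vertices uniformly at random; since misclassifications of order $O(k\sqrt{\varepsilon/\phi}\,n)$ are tolerated, the sample size can be chosen so that every underlying cluster large enough to matter for the error bound is hit by $\Theta(\log n)$ anchors with high probability. For each $s \in S$, I simulate $\widetilde O(\sqrt n)$ independent lazy random walks of length $\ell = \Theta(\log n/\phi)$ and store their endpoints; from the collision counts I can estimate, for any pair $s,s'\in S$, the inner product $\langle p_s^\ell, p_{s'}^\ell\rangle$ of their $\ell$-step distributions, and hence $\|p_s^\ell - p_{s'}^\ell\|_2^2$, to the additive precision needed to separate same-cluster from different-cluster pairs in $\widetilde O(\sqrt n)$ time per pair. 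Thresholding these pairwise distances groups $S$ into anchor classes; an anchor whose own $\|p_s^\ell\|_2^2$ is abnormally large is discarded, since it either sits in a noisy pocket or fails to escape a small set.

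Next, I define the reference partition $D_1,\ldots,D_{h'},B'$ purely in terms of random-walk behavior on $G$, so that it is a deterministic function of $G$ and independent of the query stream. Using the random-walk mixing statement on noisy clusterable graphs, Theorem~\ref{thm:rw_perturbed}, declare a vertex $v$ lying in some underlying cluster $P_i$ to be \emph{good} if $p_v^\ell$ is within $\ell_2$-distance $\tau:=\Theta(1/\sqrt{|P_i|})$ of a canonical distribution $\pi_{P_i}$ supported on $P_i$; set $D_i$ to be the good vertices of $P_i$ and $B'$ to be everything else (including the outliers). Two quantitative consequences of Theorem~\ref{thm:rw_perturbed}, together with the smallness assumption $\phi_{\myout}=O(\varepsilon\phi/(k^3\log n))$ on the underlying clusterable graph, drive correctness: first, the mass of non-mixing vertices inside each $P_i$ is $O(\sqrt{\varepsilon/\phi}\,|P_i|)$, yielding $|B'|=O(k\sqrt{\varepsilon/\phi}\,n)$; second, for any good $v\in D_i$ and good $u\in D_j$ with $i\neq j$, $\|p_v^\ell-p_u^\ell\|_2$ is separated from $\tau$ by a large multiplicative gap, so same-cluster and different-cluster pairs remain distinguishable at the precision above.

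To answer a query at a vertex $v$, the algorithm draws $\widetilde O(\sqrt n)$ walks of length $\ell$ from $v$ and estimates $\|p_v^\ell - p_s^\ell\|_2^2$ for every $s\in S$ against the stored endpoint samples of $s$; if every anchor is far from $v$, or $\|p_v^\ell\|_2^2$ is atypical for a mixing vertex, return $\textsc{IsOutlier}(v)=\textbf{Yes}$, otherwise return the index of the anchor class closest to $v$ as $\textsc{WhichCluster}(v)$. Consistency is automatic because the anchor set $S$, all stored endpoints and all thresholds are frozen after preprocessing, so on any query sequence the returned partition $P_1,\ldots,P_h,B$ is the same deterministic function of $G$ and the algorithm's internal randomness, giving the first bullet of Item~2. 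The symmetric-difference bound in the third bullet then follows by a union bound over (a) every large $D_i$ being hit by an anchor, (b) all pairwise distance estimates (both anchor-anchor and query-anchor) being accurate to within the gap of Theorem~\ref{thm:rw_perturbed} for every good vertex simultaneously, and (c) good vertices concentrating near $\pi_{P_i}$; misclassifications are then confined to $B'$ plus an additional $O(k\sqrt{\varepsilon/\phi}\,n)$ vertices that fall inside the estimation slack.

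The main obstacle is the supporting random-walk statement Theorem~\ref{thm:rw_perturbed}: on the \emph{perturbed} graph $G$, the $\ell$-step walk from a typical vertex of a large cluster must still remain inside that cluster with overwhelming probability and mix in $\ell_2$ to $\pi_{P_i}$, even though up to $\varepsilon dn$ intra-cluster edges have been inserted or deleted adversarially and can locally trap the walk. The delicate point is to argue that at most an $O(\sqrt{\varepsilon/\phi})$ fraction of each cluster can be ``close'' in random-walk sense to the noisy substructure; this is what ultimately forces both the bound on $|B'|$ and the required smallness $\phi_{\myout}=O(\varepsilon\phi/(k^3\log n))$ of the outer conductance in the underlying clusterable graph, and is also what forces the case split in the second bullet of Item~2 (if $\varepsilon>\phi/(60k^2)$, there is no longer enough room to guarantee any non-trivial good cluster, and the algorithm simply declares all vertices outliers, $h'=0$). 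Given this random-walk analysis, the running-time claim is routine: preprocessing costs $|S|\cdot \widetilde O(\sqrt n)\cdot \ell$, a query costs $\widetilde O(\sqrt n)\cdot \ell + |S|\cdot \widetilde O(\sqrt n)$, and both fit the claimed $O(\sqrt n\cdot \poly(k\log n/(\phi\varepsilon)))$ target.
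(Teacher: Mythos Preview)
Your high-level architecture (sample anchors, estimate pairwise similarities via collision statistics, group anchors, answer queries by comparing to anchors) matches the paper's. But several load-bearing choices diverge from what actually makes the argument go through, and at least two of them are genuine gaps rather than cosmetic differences.

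\textbf{The $\ell_2$ vs.\ $\ell_1$/TV issue.} You propose to separate same-cluster from different-cluster pairs using $\|p_u^\ell - p_v^\ell\|_2$, and you invoke Theorem~\ref{thm:rw_perturbed} to justify that good vertices of $P_i$ have $p_v^\ell$ within $\ell_2$-distance $\Theta(1/\sqrt{|P_i|})$ of a canonical $\pi_{P_i}$. But Theorem~\ref{thm:rw_perturbed} is a \emph{total variation} statement: it only says $\|\pp_s^t - \uni_{C_j}\|_{\TV} < \ccss + \sqrt{\varepsilon}$, a constant, for the uniform \emph{averaging} walk. A constant TV bound does not yield $\ell_2$ closeness at the $1/\sqrt{|P_i|}$ scale you need, so the separation you claim between same-cluster and different-cluster pairs is unjustified. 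This is not an accident of presentation: the paper (Section~\ref{sec:techniques}) explicitly abandons the $\ell_2$/spectral route of~\cite{CPS15:cluster} in the noisy setting, remarking that the relevant spectral gap is destroyed by adversarial edge deletions, and instead works with the \emph{reduced collision probability} $\rcp_\theta(u,v)$ from~\cite{KPS13:noise}, comparing it to thresholds $\approx 1/|C|$ (Lemma~\ref{lemma:strongvertices}). The similarity graph is weighted by these rcp estimates, and because different clusters have different sizes $|C_i|$, the core-finding step (\textsc{FindCore}) scans a geometric ladder of thresholds $\tau_j$ and extracts maximal cliques of matching size at each level---this is more involved than ``threshold pairwise distances''.

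\textbf{The reference partition.} Item~2 requires each $D_i$ to be a $(\phi/2,\cdot)$-cluster, i.e.\ $\phi(G[D_i])\ge \phi/2$. Your definition of $D_i$ as ``the vertices whose walk distribution is $\ell_2$-close to $\pi_{P_i}$'' gives no handle on the inner conductance of $G[D_i]$; a set of individually well-mixing vertices need not induce an expander. The paper instead defines $D_i$ by a deterministic conductance decomposition: starting from the ground-truth cluster $C_i$, repeatedly peel off subsets of conductance $\le \phi/2$ in $G[C_i]$ until none remain (proof of Lemma~\ref{lemma:reported_outliers}). This guarantees $\phi(G[D_i])\ge \phi/2$ by construction, and the $\varepsilon$-perturbation hypothesis is what bounds $\sum_i|C_i\setminus D_i|\le (6\varepsilon/\phi)n$.

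\textbf{Smaller calibration issues.} The walk length should be $\Theta(\log n/\phi^2)$ (Cheeger), not $\Theta(\log n/\phi)$, and the walk is the uniform averaging walk, not the plain lazy walk. The anchor set must have size $\Theta\!\bigl(\sqrt{\phi/\varepsilon}\cdot k^2\log n\bigr)$, not $\widetilde O(k)$: clusters of size $\Theta(\sqrt{\varepsilon/\phi}\,n)$ need to be hit by $\Theta(\log n)$ anchors, which forces the $\sqrt{\phi/\varepsilon}$ factor. These are fixable, but the $\ell_2$-vs-TV mismatch and the $D_i$ construction are where your plan needs to change.
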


We remark that there is no algorithm that allows both $o(\sqrt{n})$ pre-processing time and $o(\sqrt{n})$ query time for \textsc{IsOutlier} queries, as otherwise, one could obtain a property testing algorithm for expansion with $o(\sqrt{n})$ queries, which will be a contradiction to a known lower bound~\cite{GR00:expansion} (see more discussions below on relation to property testing). Furthermore, the second item of the theorem implies that the total number of vertices that are reported as outliers is at most $O(k\sqrt{\frac{\varepsilon}{\phi}}n)$ %
and that the query answers are consistent with a partition of $G$ in which all but $O_k(\sqrt{\frac{\varepsilon}{\phi}}n)$ vertices belong to a $(\frac{\phi}{2},O_k(\frac{\sqrt{\varepsilon}\phi^{1.5}}{\log n}))$-cluster. We also note that in the statement of the above theorem, the most interesting range of $\varepsilon$ is\footnote{Note that in this range, $\varepsilon=O(\frac{\phi}{k^2})$, which is also the reason that we do see the traditional $\phi^2$ dependency (from Cheeger's inequality) between the outer conductance and inner conductance.}  $\varepsilon\in[\Omega(\frac{\phi}{{n}}),\frac{\phi}{60k^2}]$, as otherwise (i.e., $\varepsilon>\frac{\phi}{60k^2}$) the noise will be too much and our algorithm cannot guarantee to locally identify even one cluster. Removing the $\log n$ gap between the inner conductance and outer conductance seems to be hard, at least for methods that are based on random walk distances (as we used here). For example, %
in \cite{CKKMP18:cluster}, it has been discussed that in general, it is impossible to use Euclidean distance between random walk distributions to test $2$-clusterablity if one wants the gap to be a constant. (Testing $2$-clusterability is an easier problem than the robust clustering oracle problem; see below.) On the other hand, being able to correctly answer \textsc{SameCluster}($u,v$) queries intuitively requires or induces a distance based approach, as the vertices in the same cluster are ``similar'' or ``close to'' each other, while vertices in different clusters are ``dissimilar'' or ``far from'' each other.

%


%

%

\paragraph{Local reconstructor of graph cluster structure.} %
We are interested in designing a local reconstruction algorithm for the cluster structure of graphs. Given query access to the adjacency list of a $d$-bounded graph $G$ that is promised to be an $\varepsilon$-perturbation of a $(k,\phi_\myin,\phi_\myout)$-clusterable graph, our goal is to design a \emph{local filter} that provides query access to a $(k,\phi_\myin',\phi_\myout')$-clusterable graph $G'$ such that the distance between $G$ and $G'$ is as close as possible. That is, we would like to output $G'$ in a local manner that for any vertex query, the neighborhood of $v$, i.e., the set of all neighbors of $v$, in $G'$ can be answered in sublinear time (in particular, by making as few queries to the adjacency list to $G$ as possible). Similar as for the robust clustering oracle, it is crucial to require a local filter to maintain \emph{consistency} among all queries. Here we require that for different query sequences, the answers of the filter should be consistent with the same reconstructed graph $G'$. %
Again, the filter is suitable to be used in the distributed manner as consistency is guaranteed. In our local filter for clusterable graphs, we also aim to make the gap between $\phi_\myin,\phi_\myout$ and the gap between $\phi_\myin$ and $\phi_\myin'$ as small as possible. We next state our theorem regarding our local filter for clusterable graphs as follows. %
\begin{theorem}[Local Reconstructor of Cluster Structure]
\label{thm:main}
There exists a local reconstruction algorithm that takes as input parameters $n\geq 1$, $d>10$, $k\geq 1$, $\phi\in (0,1)$, $\varepsilon\in [\Omega(\frac{\phi}{{n}}),  1]$ and has query access to the adjacency list of a graph $G=(V,E)$ that is an $\varepsilon$-perturbation of a $(k,\phi,O(\frac{\varepsilon\phi}{k^3\log n}))$-clusterable graph, and provides query access to a graph $G'=(V,E')$ such that the following holds with probability at least $1-\frac{4}{n}$:
\begin{enumerate}
\item\label{item:innerouter} $G'$ is $(k,\Omega(\frac{\varepsilon\phi}{k^4\log n}),1)$-clusterable, and has maximum degree at most $d+16$.
\item The number of edges changed is at most $O(\min\{1,k\sqrt{\frac{\varepsilon}{\phi}}\}\cdot n)$. %
\item\label{item:consistent} $G'$ is determined by $G$ and the internal randomness of the algorithm, and is independent of the order of queries.
\item On each query $v$, the neighborhood of $v$ in $G'$ can be answered in $O(\sqrt{n}\cdot \poly(\frac{k\cdot\log n}{\phi\varepsilon}))$ time.
\end{enumerate}
\end{theorem}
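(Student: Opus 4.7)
The plan is to obtain the local filter by wrapping the robust clustering oracle from Theorem~\ref{thm:oracle}. Run the oracle preprocessing once and fix its random seed so that all oracle answers are deterministic functions of $G$ and this seed. On a neighborhood query for vertex $v$, first invoke \textsc{IsOutlier}($v$) and, if $v$ is not an outlier, \textsc{WhichCluster}($v$); then for each neighbor $u$ of $v$ in $G$, invoke the analogous oracle queries on $u$ and declare $(v,u)\in E'$ iff both $v$ and $u$ are non-outliers lying in the same cluster. This handles item~\ref{item:consistent} and item~4: each neighborhood query triggers at most $d+O(1)$ oracle queries of cost $O(\sqrt{n}\cdot\poly(k\log n/(\phi\varepsilon)))$, and consistency follows since $G'$ is a deterministic function of the oracle answers, hence of $G$ and the fixed seed.

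To make $G'$ clusterable, outliers need to be re-attached canonically. I would fold each outlier $v\in B$ into a single target cluster chosen deterministically from the oracle's view of $v$ (for instance, the most common cluster among $v$'s non-outlier $G$-neighbors, lex-smallest on ties, or the lowest-indexed cluster if $v$ has no non-outlier neighbor), and connect $v$ to that cluster by a constant number of edges whose endpoints in the cluster are picked by a hash-like deterministic function of $v$'s identifier so that load is spread and no vertex receives more than $16$ new incident edges. The resulting graph has maximum degree at most $d+16$, and consistency is preserved because every decision is a function of $G$ and the oracle's seed.

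For item~2, the modified edges decompose into: inter-cluster edges of $G$, bounded via Theorem~\ref{thm:oracle}(2) by $O(dn\sqrt{\varepsilon}\phi^{1.5}/(k^2\log n))$ using the outer-conductance bound on each $D_i$; $O(d)$ edges per vertex in $(\bigcup_i P_i\triangle D_i)\cup B\cup B'$, of total size $O(k\sqrt{\varepsilon/\phi}\,n)$; and $O(1)$ re-attachment edges per outlier. These sum to $O(\min\{1,k\sqrt{\varepsilon/\phi}\}\cdot n)$. For item~\ref{item:innerouter}, each $D_i$ has inner conductance at least $\phi/2$ in $G$ by Theorem~\ref{thm:oracle}(2) and $G'[D_i]$ retains all intra-$D_i$ edges, so the only damage to inner conductance comes from appended outliers.

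The main obstacle will be the inner-conductance analysis: showing that the augmented clusters still have conductance $\Omega(\varepsilon\phi/(k^4\log n))$ after absorbing up to $O(k\sqrt{\varepsilon/\phi}\,n)$ outliers distributed across at most $k$ clusters. Any $D_i$ reported by the oracle necessarily has size $\Omega(n/k)$ (otherwise it is absorbed by the misclassification slack of Theorem~\ref{thm:oracle}(2)), which bounds the ratio of attached mass to cluster size. A Cheeger-style argument on the augmented subgraph, combined with the balanced anchor distribution ensuring that no small cut isolates a large block of outliers from the cluster core, then yields the required lower bound, with the $\phi$-to-$\phi^2$ gap inherent to Cheeger accounting for the remaining polynomial factor in the denominator.
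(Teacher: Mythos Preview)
Your construction has two genuine gaps, and the paper takes a substantially different route.

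\textbf{Re-attachment is not locally implementable as stated.} You write that an outlier $v$ is connected ``to that cluster by a constant number of edges whose endpoints in the cluster are picked by a hash-like deterministic function of $v$'s identifier so that load is spread.'' But the oracle only tells you, for a \emph{given} vertex $u$, which cluster $u$ lies in; it gives you no way to produce, in sublinear time, a vertex that is guaranteed to lie in cluster $i$. Hashing $v$ to vertex identifiers and then filtering by cluster membership may yield no target at all. Likewise, guaranteeing that no target receives more than $16$ new edges requires global knowledge of how many outliers hash to it. The paper sidesteps both issues: it fixes once and for all an explicit constant-degree expander $G_{\exp}$ on $V$, and for every outlier $v$ it simply adds \emph{all} $G_{\exp}$-edges incident to $v$ (and symmetrically, a non-outlier $w$ checks only its $\leq 16$ $G_{\exp}$-neighbors). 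This is trivially local, symmetric, degree-bounded by $d+16$, and requires no targeting.

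\textbf{The inner-conductance argument does not go through.} Attaching each outlier by $O(1)$ edges to vertices in some $D_i$ gives no expansion \emph{among the outliers}. The outlier set $B$ can have size $\Theta(k\sqrt{\varepsilon/\phi}\,n)$, which is of the same order as (or larger than) the smallest $D_i$ guaranteed by Theorem~\ref{thm:oracle} (those $D_i$ only satisfy $|D_i|\ge \Omega(\sqrt{\varepsilon/\phi}\,n)$, not $\Omega(n/k)$ as you assert). A subset $S$ of outliers can then have arbitrarily few edges to $D_i\cup(B\setminus S)$, so no Cheeger-type bound rescues the conductance. The paper's analysis is structurally different: it never removes edges and never assigns outliers to clusters. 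Instead it introduces \emph{weak} vertices (those whose two-phase walk is far from uniform on every set of size $\ge \Omega(\varepsilon' n)$), shows every weak vertex is reported as an outlier (Lemma~\ref{lemma:weak}), and then argues directly on $G'$: any small set $S$ of low conductance must contain a $(1-\eta/2)$-fraction of weak vertices, whence the added $G_{\exp}$-edges force $\phi_{G'}(S)\ge \eta/(6d)$. The final $k$-clustering of $G'$ is obtained by a fresh recursive decomposition (Lemma~\ref{lemma:conductance}), not by augmenting the oracle's $D_i$.

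In short, the key idea you are missing is the \emph{explicit-expander overlay}: adding $G_{\exp}$-edges to all outliers simultaneously gives the outlier set itself expander-like connectivity, which is exactly what your per-outlier anchoring cannot provide.
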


Note that by Item~\ref{item:innerouter}, the resulting graph can be partitioned into at most $k$ parts, each with relatively large inner conductance (i.e.,  $\Omega_k(\frac{\varepsilon\phi}{\log n})$), with no guarantee on outer conductance (as each set trivially has outer conductance at most $1$). (Such instances are exactly the object that was studied in~\cite{CKKMP18:cluster} in the framework of property testing.) By sacrificing the inner conductance quality, we can also find a clustering of $G'$ with small outer conductance. That is, we can guarantee that $G'$ is also $(k,\Omega(\frac{\nu^k}{6^kk^4}\frac{\varepsilon\phi}{\log n}),\min\{k\nu, 1\})$-clusterable for any $\nu\in[0,1]$ (see Appendix~\ref{sec:cluster_outersmall} for details). %
Item~\ref{item:consistent} implies that all query answers are consistent, that is, the vertex $u$ is output as a neighbor of $v$ in $G'$ if and only if $v$ is output as a neighbor of $u$. From the discussion below on the connections between our local reconstruction algorithm and property testing, the running time of our filter is optimal (in terms of dependency on $n$) up to polylogarithmic factors. 
	
Furthermore, our algorithm generalizes the local reconstruction algorithm for expander graphs by \cite{KPS13:noise}, which corresponds to the special case $k=1$ in our problem, though our approximation ratio of the number of modified edges is worse. More precisely, for $\varepsilon=\Omega(\phi)$, both our algorithm and the algorithm in~\cite{KPS13:noise} will add $\Theta(dn)$ edges (as the noise part is too large, and thus almost all vertices will be reported as outliers and the resulting graph is almost the complete hybrid of the original graph and an explicitly constructible expander (see Section~\ref{sec:techniques} for more discussions)); for $\varepsilon=O(\phi)$, the algorithm in~\cite{KPS13:noise} reconstructs a graph that is an $\varepsilon$-perturbation of a $\phi$-expander by modifying at most $O(\frac{\varepsilon}{\phi}n)$ edges, and the resulting graph has conductance at least $\Omega(\frac{\phi^2}{\log n})$ and maximum degree also upper bounded by\footnote{Note that \cite{KPS13:noise} claimed that the number of modified edges is at most $O(\frac{\phi}{\log n}\varepsilon n)$ and the maximum degree of the resulting graph is $d+O(\lceil\frac{d\phi^2}{\log n}\rceil)$. However, this claim is not correct (at least for $d$-bounded graphs with $d$ being constant), and the number of changed edges and the maximum degree bound from their analysis should be $O(\frac{\varepsilon}{\phi}n)$ and $d+16$, respectively~\cite{Sesh}. They obtained their claimed results by adding $t:=\lceil \frac{d\phi^2}{c\log n}\rceil$ parallel edges while repairing bad vertices, from which they get that the maximum degree is $d+16t$ and the number of added edges to the optimal distance (i.e., $\varepsilon d n$) is $\frac{16t}{d\phi}=O(\phi/\log n)$, which is incorrect as it always holds that $t=1$ for constant $d$ and large enough $n$.} $d+16$, while our algorithm has to modify $O(\sqrt{\frac{\varepsilon}{\phi}}\cdot kn)$ edges. We further note that the algorithm in~\cite{KPS13:noise} guarantees that the reconstructed graph has inner conductance at least $\Omega(\frac{\phi^2}{\log n})$, while the resulting graph from our algorithm is guaranteed to have a partition with at most $k$ parts, each with inner conductance at least $\Omega_k(\frac{\varepsilon\phi}{\log n})$. Removing the $\log n$ factor in the inner conductance of the output graph seems to be a very challenging task, even for the case $k=1$. See Section~\ref{sec:conclusions} for more discussions. %

\paragraph{Local mixing property on noisy clusterable graphs.} In order to derive the above algorithmic results, we prove an interesting behavior, which we call \emph{local mixing property}, of random walks on noisy clusterable graphs. For technical reasons, we will consider the \emph{uniform averaging walk of $t$ steps} on a graph $G$: In this walk, we choose a number $\ell\in\{0,1,2,\cdots,t-1\}$ uniformly at random, and stop the (normal) random walk after $\ell$ steps. We let $\pp_v^t$ denote the probability vector for a uniform averaging walk of $t$ steps starting at $v$ and let $\norm{\p_1-\p_2}_{\TV}$ denote the total variance distance between two distributions $\p_1,\p_2$. We have the following theorem.
\begin{theorem}[Local Mixing Property of Random Walks]\label{thm:rw_perturbed}
	Let $0<\ccss,\varepsilon<1$. Let $\phi_\myout\leq \frac{a_{\ref{thm:rw_perturbed}}\varepsilon\ccss^4\phi_\myin^2}{k^3\log n}$ for some sufficiently small constant $a_{\ref{thm:rw_perturbed}}>0$. Let $G$ be a $d$-bounded graph with an $h$-partition $C_1,C_2,\cdots, C_h$ such that $\phi_G(C_i)\leq \phi_\myout$ for any $1\leq i\leq h\leq k$. %
	For each $i\leq h$, we let $D_i\subseteq C_i$ denote a large subset of vertices such that $\phi(G[D_i])\geq \phi_\myin$, and let $B_i：=C_i\setminus D_i$. If $\sum_i|B_i|\leq \varepsilon n$, then for any $D_j$ with $|D_j|\geq 3\sqrt{\varepsilon} n$, there exists a subset $\widehat{D}_j\subseteq D_j$ such that $|\widehat{D}_j|\geq (1-4\sqrt{\varepsilon})|D_j|$ such that for any $s\in \widehat{D}_j$, and $t= \frac{120\log n}{\ccss\phi_\myin^2}$, it holds that 
	$$\norm{\pp_{s}^{t}- \uni_{C_j}}_{\TV}<\ccss +\sqrt{\varepsilon}.$$
\end{theorem}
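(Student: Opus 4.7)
The plan is to compare the walk $\pp_s^t$ on $G$ with an idealized walk on the clean induced subgraph $G[D_j]$, augmented with self-loops to preserve $d$-regularity. Let $\widetilde{G}_D$ denote this augmented graph, and let $\mathbf{r}_s^t$ denote the corresponding uniform averaging walk distribution of length $t$ starting from $s$. Since $\phi(G[D_j]) \geq \phi_\myin$, the discrete Cheeger inequality gives a spectral gap of $\Omega(\phi_\myin^2)$ for the walk matrix on $\widetilde{G}_D$, and the standard analysis of the averaging walk then implies $\norm{\mathbf{r}_s^t - \uni_{D_j}}_\TV \leq \ccss/3$ for every $s \in D_j$ when $t = 120 \log n /(\ccss \phi_\myin^2)$; this is the mixing step on the clean part.

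Next, I would define $\widehat{D}_j$ by removing the starting points whose walks leak too much out of $C_j$. Let $g(s) = \Pr[\text{the walk on } G \text{ from } s \text{ exits } C_j \text{ within } t \text{ steps}]$. Because the $d$-regular random walk on $G$ (obtained by adding self-loops) has uniform stationary distribution on $V$ and $\phi_G(C_j) \leq \phi_\myout$, a standard reversibility argument yields $\E_{s \sim \uni_{C_j}}[g(s)] \leq t \phi_\myout = O(\varepsilon \ccss^3/k^3)$, so by Markov's inequality at most an $O(\sqrt{\varepsilon})$ fraction of $s \in D_j$ have $g(s) > \sqrt{\varepsilon}/3$. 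Taking $\widehat{D}_j$ to be the complement of this bad set gives $|\widehat{D}_j| \geq (1 - 4\sqrt{\varepsilon})|D_j|$. For $s \in \widehat{D}_j$, I then apply the triangle inequality,
\[
\norm{\pp_s^t - \uni_{C_j}}_\TV \leq \norm{\pp_s^t - \mathbf{r}_s^t}_\TV + \norm{\mathbf{r}_s^t - \uni_{D_j}}_\TV + \norm{\uni_{D_j} - \uni_{C_j}}_\TV,
\]
where the third term equals $|B_j|/|C_j| \leq \varepsilon n/(3\sqrt{\varepsilon}n) = \sqrt{\varepsilon}/3$ using $|D_j| \geq 3\sqrt{\varepsilon}n$, and the second is at most $\ccss/3$ by the first paragraph.

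The main obstacle is the first term $\norm{\pp_s^t - \mathbf{r}_s^t}_\TV$: a naive coupling of the walks on $G$ and $\widetilde{G}_D$ diverges whenever the walk on $G$ steps into $B_j$, and the total such probability in $t$ steps is about $t \cdot |B_j|/|D_j|$, which easily exceeds $\sqrt{\varepsilon}$. To bypass this, I plan to route through an intermediate walk $\mathbf{q}_s^t$ on the graph $\widetilde{G}_C$ obtained from $G[C_j]$ by adding self-loops only on edges leaving $C_j$, so that $\uni_{C_j}$ is the stationary distribution of this intermediate walk. The coupling between $\pp_s^t$ and $\mathbf{q}_s^t$ costs only the $C_j$-escape probability, at most $\sqrt{\varepsilon}/3$ for $s \in \widehat{D}_j$. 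For $\norm{\mathbf{q}_s^t - \uni_{C_j}}_\TV$, the key observation is that a walk on $\widetilde{G}_C$ started from $\uni_{D_j}$ stays within TV distance $|B_j|/|C_j|$ of $\uni_{C_j}$ at every step, by the TV contraction property of Markov chains, so its averaging-walk distribution is $\sqrt{\varepsilon}/3$-close to $\uni_{C_j}$. It therefore suffices to show that $\mathbf{q}_s^t$ is close to the averaging walk on $\widetilde{G}_C$ started from $\uni_{D_j}$, which is done by comparing the two walks via the mixing on $\widetilde{G}_D$ established in the first step and carefully accounting for the small mass that temporarily visits $B_j$. This comparison between the walk from $s$ and the walk from the uniform average over $D_j$ is the technical heart of the proof, and it is where the constants in the definition of $\widehat{D}_j$ are ultimately tuned.
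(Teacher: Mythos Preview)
Your outline is a genuinely different route from the paper, and the first two paragraphs are fine: mixing on the self-looped $G[D_j]$ holds for every $s$ by the standard spectral argument, and the $C_j$-escape bound via $t\phi_\myout$ plus Markov gives you a legitimate $\widehat{D}_j$. The coupling $\norm{\pp_s^t-\mathbf{q}_s^t}_\TV\le g(s)$ is also correct.

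The gap is exactly where you flag it as the ``technical heart'': you need $\norm{\mathbf{q}_s^t - \mathbf{q}_{\uni_{D_j}}^t}_\TV$ small, and you propose to get it from the mixing of $\widetilde{G}_D$ together with ``accounting for the small mass that temporarily visits $B_j$.'' This does not work as stated. The chain $\widetilde{G}_C$ can have arbitrarily bad spectral gap (take $B_j$ attached to $D_j$ by a single edge), so you cannot argue that $\mathbf{q}_s^t$ forgets its starting point within $t$ steps. And any attempt to transport the mixing of $\widetilde{G}_D$ to $\widetilde{G}_C$ via a step-by-step coupling runs into the same obstruction you already identified: the coupling breaks each time the $\widetilde{G}_C$-walk enters $B_j$, and the expected number of such entries in $t$ steps is of order $t\cdot |E(D_j,B_j)|/(d|D_j|)$, which can be as large as $t\sqrt{\varepsilon}$ --- far too big. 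There is no way to ``tune constants'' out of this; the per-step leak into $B_j$ is not controlled by $\phi_\myout$, only the leak out of $C_j$ is.

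The paper resolves this with a different mechanism. It passes to the \emph{stochastic complement} $\MM'$ on $D=\bigcup_{i:|D_i|\ge 2|B_i|}D_i$, i.e., the chain that skips over excursions into $B$. Two facts then do the work: (i) one checks that $G_{\MM'}$ is still $(k,\phi_\myin,3\phi_\myout)$-clusterable, so the clean-case mixing theorem (their Theorem~\ref{thm:rw_clusterable}) applies to $\MM'$ and gives $\tau_s={\p'_s}^{(\ell)}$ close to $\uni_{D_j}$ for most $s$; (ii) a stopping-rule inequality of Lov\'asz--Winkler bounds $\pp_s^t(U)$ by $\frac{1}{t}\HH(\1_s,\tau_s)+\frac{1}{t}\sum_{m<t}\tau_s^m(U)$, and $\HH(\1_s,\tau_s)\le 2\ell$ for most $s$ because the expected total time spent in $B$ during $\ell$ steps of $\MM'$ is at most $\ell$. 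The point is that the stochastic complement collapses an entire excursion into $B$ into a single step, so one pays for the \emph{number} of excursions (at most $\ell$) rather than their total \emph{duration}, and the Lov\'asz--Winkler bound then compares the averaging walk on $\MM$ directly to distributions governed by $\tau_s$. Your proposal has no analogue of either ingredient; if you want to salvage the direct-coupling route you would need a new idea that controls the excursions into $B_j$ without the stochastic-complement machinery, and none is offered.
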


Intuitively, the set $B_i$ corresponds to the noisy part inside each cluster $C_i$ and we assume that the total fraction of noisy part is parametrized by $\varepsilon$. Then the above theorem says that the rest of the large part (i.e., clusterable part) exhibits some nice local mixing property: a typical uniform averaging random walk (of appropriately chosen length) from a large cluster (of size $\Omega(\sqrt{\varepsilon}n)$) will converge quickly to the uniform distribution on it. %
This is a generalization of the global mixing property of noisy expander graphs in \cite{KPS13:noise}, though their results are stated for the more general Markov chains. 

\subsection{Our Techniques}\label{sec:techniques}
To design a robust clustering oracle, we first note that it is relatively easy 
to design a clustering oracle without noise (if the gap between $\phi_\myin$ and $\phi_\myout$ is $O(\log n)$ as we considered here). This can be done by a refined analysis of the property testing algorithm in \cite{CPS15:cluster} that samples a small number of vertices, and then test if the $\ell_2$ norm distance between the random walk distributions from any two vertices is larger than some threshold or not. However, the analysis depends on the spectral property (e.g., a gap between $\lambda_k$ and $\lambda_{k+1}$) of clusterable graphs, and cannot be easily generalized to the case that the input graph contains noise, as such spectral property is very sensitive to noise (e.g., deleting all edges incident to a constant number of vertices will break down the property). 

In order to handle noisy input, we use the $\ell_1$ norm distance between the corresponding random walk distributions to test if the starting two vertices belong to the same cluster or not, and we make use of the local mixing property of random walks in Theorem~\ref{thm:rw_perturbed}.  %
In order to prove the such a mixing property, we first show that it does hold for clusterable graphs \emph{without} noise, by exploiting a spectral property that characterizes the first $k$ eigenvectors of clusterable graphs given by \cite{PSZ17:partition}.   %
To generalize the result to a noisy clusterable graph $G$, we 
view the random walks on the graph as a Markov chain and consider a new Markov chain that is induced on vertices in the clusterable part in $G$. (Such a new chain has also been used in~\cite{KPS13:noise} for analyzing noisy expanders.) We show the induced Markov chain does correspond to a clusterable graph $H$ (by overcoming the difficulty that the outer conductance of each corresponding cluster increases and might change the cluster structure too much) and thus the random walks in $H$ satisfy the local mixing property. However, the walks on $H$ can be very different from the random walks in the original graph $G$. 
We then give a novel application of an old technique called \emph{stopping rules} of Markov chains that was introduced by Lov{\'o}sz and Winkler \cite{LW97:mixing} to relate these two walks, and bound the total variance distance between two random walk distributions from a vertex in any large cluster of $G$ and $H$. %
This allows us to show the local mixing property in the graph $G$. To the best of our knowledge, we are the first to use of the tool of stopping rules to show that a random walk in the graph mixes inside a \emph{subgraph} (i.e., cluster) rather than in the whole graph. 

Given such a local mixing property of random walks in the noisy clusterable graph, we are able to design a robust clustering oracle and the corresponding clustering query algorithm with sublinear preprocessing and query time. We first note that if the noisy part is not too large (i.e., $\varepsilon=O(\phi/k^2)$), then the graph $G$ has a non-trivial partition $D_1,\cdots,D_{h'},B'$ with $h'\geq 1$ that only depends on the corresponding parameters (i.e., $\varepsilon, \phi, n$) and $G$ itself, and that each $D_i$ is a good cluster with large size (containing at least $\Omega(\sqrt{\varepsilon})$ fraction of vertices), and $B'$ has small size. Our key idea is to use random walks to learn a succinct representation $H$, which is a weighted graph with roughly $O(\log n)$ vertices, of the clusterable part of graph $G$, such that each cluster $D_i$ in $G$ will be mapped to a unique clique (called a \emph{core}) in $H$ with appropriate edge weight. Furthermore, by using the weights and the size bounds of these cliques, we can be efficiently identify them from $H$, using which we are able to answer the \textsc{WhichCluster} queries. Slightly more precisely, in the \emph{preprocessing} (or \emph{learning}) phase, the algorithm samples a set $S$ of $\Theta(\log n)$ vertices, and uses the statistics of $\tilde{O}(\sqrt{n})$ random walks from each sampled vertex to (quite accurately) estimate the so-called \emph{reduced collision probability (rcp)} of (the random walks of appropriate length from) any two sampled vertices that was introduced in~\cite{KPS13:noise}. We construct a weighted \emph{similarity graph} $H$ on the sample set $S$ such that the weight of each edge $(u,v)$ is our estimate of the rcp of $u,v$, for any $u,v\in S$. We show that if the noisy part is not too large, then, by the aforementioned local mixing property, for (most) pair of vertices $u,v\in S \cap D_i$, the rcp of $u,v$ will be close to $1/|D_i|$. Thus, the weight of edge $(u,v)$ in $H$ will be set to be a number close to $1/|D_i|$, and most vertices in $S\cap D_i$ form a clique $S_i$ in $H$ with edge weights close to $1/|D_i|$. We further observe that $S_i$ has relatively large size (roughly $|S|\cdot \frac{|D_i|}{n}$), as $|D_i|$ is large; and that any vertex $v\in S_i$ can only belong to exactly one such (large) clique, as otherwise, the total probability mass of random walk distribution from $v$ will exceed $1$, which can not happen. These properties allow us to efficiently identify the unique core $S_i$ from $H$ that corresponds to the cluster $D_i$  by a simple greedy algorithm and further to answer membership queries.  
We remark that in~\cite{CPS15:cluster}, a similarity graph is also constructed, while that graph is unweighted and only tells if the original graph is $k$-clusterable or not according to the number of connected components, which is far from  sufficient for our application.




Then in the \emph{query} phase, we check if the queried vertex $v$ belongs to any of the learned cores or not to decide if it is an outlier or not. This, again, can be done by estimating the rcp of the walks from $v$ and other vertices in $S$ (by running $\tilde{O}(\sqrt{n})$ random walks), and is guaranteed by the local mixing property of random walks. In particular, for most vertices $v$ in a cluster $D_i$, the rcp of random walks from $v$ and any other vertex that is in $S_i$ corresponding to $D_i$ will be also around $1/|D_i|$. If this is the case, we output $i$ as the index of the cluster that $v$ belongs to; otherwise, we report it as an outlier. %
The above analysis shows that most vertices in $D_1,\cdots, D_{h'}$ will be correctly classified, or equivalently, the number 
of vertices that are reported as outliers is small. 

Our local reconstruction algorithm for clusterable graphs is built upon our robust clustering oracle. That is, we first learn the cores of the input graph as before. Then (if the noisy part is not too large) we only ``repair'' all the vertices that are reported as outliers. Let $v$ be any vertex that is reported as an outlier. We add all the neighbors of $v$ in an explicit expander $G_{\exp}$ to ``repair'' the graph $G$, which is called a \emph{hybridization} (between $G_{\exp}$ and $G$) and has been used to repair expander graphs in \cite{KPS13:noise}. Then the answers is guaranteed to be consistent with a graph $G'$ such that its distance to the original graph $G$ is at most $d$ times the number of vertices that are reported as outliers, which has already been bounded to be small. In order to prove the claimed guarantee on cluster structure of $G'$, we introduce a definition of \emph{weak} vertices that intuitively correspond to the noisy part of the graph. Such a definition has also been used in~\cite{KPS13:noise}, though ours is more subtle, depending on the size of noise. We can show that one can improve the cluster structure of the graph if we have repaired all the weak vertices in the above way. Furthermore, such weak vertices will always be reported as outliers, which is guaranteed by the performance of our robust clustering oracle. %

\subsection{Relation to Testing Graph Clusterability} %
Both the above robust clustering oracle and local reconstruction are closely related to the framework of \emph{property testing}~\cite{RS96:testing,GGR98:property}. In the bounded degree graph property testing~\cite{GR02:property}, given a property $\Pi$,  the algorithm aims to distinguish graphs that satisfy $\Pi$ from graphs that are $\varepsilon$-far from satisfying $\Pi$ by making as few queries (to the adjacency list of the graph) as possible, with high constant probability, say at least $2/3$. Here, a graph is said to be $\varepsilon$-far from satisfying property $\Pi$ if one has to modify more than $\varepsilon dn$ edges to make it satisfy $\Pi$, while preserving the degree bound. After two decades of study, a number of properties of bounded degree graphs are now known to be testable in constant time~\cite{GR02:property,BSS10:every,HKNO09:local,NS13:every}, $\tilde{O}(\sqrt{n})$ or $\tilde{O}(n^{\frac{1}{2}+c})$ time~\cite{GR98:sublinear,GR00:expansion,CS10:expansion,KS11:expansion,NS10:expansion,CPS15:cluster,CKKMP18:cluster,KSS18:minor}. 

In particular, for the property of being $(k,\phi_\myin,\phi_\myout)$-clusterable, \cite{CPS15:cluster} gave a testing algorithm that runs in time $\tilde{O}(\sqrt{n}\poly(\phi,k,1/\varepsilon))$ and distinguishes $(k,\phi,O(\frac{\phi^2\varepsilon^4}{k^{\Omega(1)}}))$-clusterable graphs from graphs that are $\varepsilon$-far from being $(k,\Theta(\frac{\phi^2\varepsilon^4}{k^{\Omega(1)}\log n}), \psi)$-clusterable, for any $\psi\in [0,1]$. (Note that the algorithm rejects any graph that is far from clusterable graphs with \emph{arbitrary} outer conductance.) \cite{CKKMP18:cluster} recently improved this algorithm by giving an algorithm for testing if a graph contains at most $k$ subsets with inner conductance at least $\phi$ from those that can be decomposed into at least $k+1$ subsets with size at least $\Omega_k(n)$ and outer conductance at most $O(\mu\phi^2)$ in time $O(n^{1/2+O(\mu)})$ for any $\mu$ that is smaller than some constant (they also generalize their algorithm for general graphs). For the case of $k=1$, i.e., testing if the graph has expansion at least $\phi$, the best known algorithm can test if a graph has expansion $\phi$ or is $\varepsilon$-far from having expansion $\Theta(\mu\phi^2)$ in time $\tilde{O}(n^{0.5+\mu})$ for any $\mu>0$~(\cite{KS11:expansion,NS10:expansion} which improves upon \cite{CS10:expansion}). Furthermore, there exists a lower bound of $\Omega(\sqrt{n})$ on the query complexity for testing expansion~\cite{GR02:property}. %

Note that both the robust clustering oracle problem and the reconstruction problem are always much harder than the property testing version (see e.g., \cite{KPS13:noise}). For example, in the oracle problem, we need to figure out the cluster structure of the clusterable graph, and in the local reconstruction problem, the algorithm actively repairs the input graph, while the property testing is a decision problem. Furthermore, property testing only needs to distinguish between graphs which are clusterable and those are $\varepsilon$-far from being clusterable, while both the clustering oracle and the reconstruction have to (in some sense) approximate the distance to the class of all clusterable graphs\footnote{Actually, in our setting, we are approximating the \emph{intra-perturbation distance} to the class of all clusterable graphs, i.e., the minimum number of \emph{intra-cluster} edges needed to be modified to obtain a clusterable graph over all possible $h$-partitions, for some $h\leq k$. This is in contrast to approximating the distance to all clusterable graphs, which is the minimum number of edges needed to be modified to obtain a clusterable graph.}. Thus, the property testing algorithms can not be directly used to or easily modified to give a robust clustering oracle or local reconstruction algorithm. In particular, even for the case that the input graph is clusterable, one cannot use the corresponding property testing algorithm (on the clusterable graph) to answer \textsc{SameCluster} queries. Actually, both algorithms in~\cite{CPS15:cluster,CKKMP18:cluster} make decisions based on some \emph{small summarizations} of the input graph which are constructed by a small sample of vertices and the corresponding random walk statistics. Such small summarizations can be used to distinguish if the graph is $k$-clusterable or is far from being $k$-clusterable. However, if the graph is indeed $k$-clusterable, they cannot be used to distinguish if two vertices are from the same cluster or are from two different clusters. As we mentioned before, in \cite{CKKMP18:cluster}, evidence has been provided that in general it is not possible to use pairwise Euclidean distances between two random walk distributions to distinguish between $2$-clusterable graphs and far from $2$-clusterable graphs if the gap between conductances is constant.  %

On the other hand, property testing algorithms can always be obtained from the corresponding local reconstruction ones (which has already been noted in previous work on local reconstruction) and testing $k$-clusterability can also be obtained from our robust clustering oracle algorithm. This is also true in our scenario since we can estimate the distance between $G$ and a clusterable graph $G'$ with small additive error by sampling a constant number of vertices and running the oracle and clustering query algorithm (or the local reconstruction algorithm) on each sampled vertex to obtain the fraction of outlier vertices. We further note that if a graph $G$ is $\varepsilon$-far from any $(k,\phi_\myin,\phi_\myout)$-clusterable graph, then it cannot be an $\varepsilon$-perturbation of any such clusterable graph (i.e., one has to perturb more than an $\varepsilon$-fraction of edges). Therefore, both our robust clustering oracle and local reconstructor algorithm lead to a property testing algorithm that distinguishes $(k,\phi,O(\frac{\varepsilon\phi}{k^3\log n}))$-clusterable graphs from graphs that are $\varepsilon$-far from being $(k,\Omega(\frac{\nu^k}{6^kk^4}\frac{\varepsilon\phi}{\log n}),k\nu)$-clusterable for any $\nu\in [0,1]$, with probability at least $2/3$. The running time of the algorithm is $\tilde{O}(\sqrt{n})$, which is optimal up to polylogarithmic factors due to the $\sqrt{n}$ lower bound on the number of queries for testing expansion (corresponding to $k=1$ in our problem)~\cite{GR02:property}.

\subsection{Other Related Work}
The study on \emph{local graph clustering}~\cite{ST13:local,ACL06:local,AP09:sparse,OT12:local,AOPT16:local,ZLM2013:local,OZ14:flow} is also closely related to our work. In this framework, the goal is to find a cluster from a specified vertex with running time that is bounded in terms of the size of the output set (and with a weak dependence on $n$). %
In the scenario where both inner and outer conductance are used for measuring the quality of clusters, \cite{ZLM2013:local} gave a local clustering algorithm that outputs a set with conductance at most $\tilde{O}(\min\{\sqrt{\phi_G(A)}, \phi_G(A)/\sqrt{\Conn(A)}\})$ where $A$ is the target set, and $\Conn(A)$ is the reciprocal (e.g., $\phi(G[A])^2/(\log \vol(A))$) of the mixing time of the random walk over the induced subgraph $G[A]$ on $A$ and $\vol(A)$ is the total degree of vertices in $A$. It is also shown that the conductance guarantee ${\phi_G(A)}/{\sqrt{\Conn(A)}}$ is \emph{tight} among (some class of) random-walk based local algorithms \cite{ZLM2013:local}. It might be interesting to note the logarithmic factor (i.e., $\log(\vol(A))$) dependency appeared in these guarantees. The performance guarantee has later been improved by \cite{OZ14:flow} using a flow-based local improvement algorithm that finds a set with conductance $\psi=O(\phi_G(A))$, volume $O(\vol(A))$ and runs in time $\tilde{O}(\vol(A)/\psi)$, where $A$ is the target set with $\Conn(A)/\phi_G(A)=\Omega(1)$. Note that the running times of these algorithms are sublinear only if the size (or volume) of the target set is small (say, at most $o(n)$), while in our setting, the clusters of interest have at least linear size (for any constant $\varepsilon$).

Fully or partially recovering the clusters in the noisy model has been extensively studied in the ``global algorithm regimes''. Examples include recovering the planted partition in \emph{stochastic block model} with modeling errors or noise~(e.g., \cite{CL15:robust,GV16:community,MPW16:robust,MMV16:learning}), \emph{correlation clustering} on different ground-truth graphs in the \emph{semi-random} model~(e.g.,~\cite{MS10:correlation,CJSX14:clustering,GRSY14:tight,MMV15:correlation}) %
and partitioning the graph in the \emph{average-case} model ~\cite{MMV12:approximation,MMV14:constant,MMV15:correlation}. All these algorithms run in at least linear time. %

Local reconstruction of some other properties have been investigated before. Such properties include expanders~\cite{KPS13:noise}, graph connectivity and diameter~\cite{CGR13:local}, bipartite and $\rho$-clique dense graphs~\cite{Bra08:restore}, geometric properties~\cite{CS11:geometric}, monotone functions~\cite{ACCL08:reconstruction,SS10:monotonicity}, Lipschitz functions~\cite{JR13:lipschitz} and low rank matrices and subspaces~\cite{DGK17:local}. This algorithmic framework is also closely related to local decodable codes (e.g.,~\cite{STV99:pseudorandom}) and local decompression~\cite{DLRR13:compression}. The local reconstruction model has been generalized to \emph{local computation} model by Rubinfeld et al.~\cite{RTVX11:local,ARVX12:space}, and a number of problems like maximal independent set, hypergraph coloring and maximum matching have been investigated in this model~\cite{RTVX11:local,ARVX12:space,MRVX12:converting,MV13:matching}.

\paragraph{Organization of the paper.} We give preliminaries in Section~\ref{sec:preliminaries}. In Section~\ref{sec:oracle}, we give the algorithm and the analysis for our robust clustering oracle and prove Theorem~\ref{thm:oracle}. Then, we give our local reconstruction algorithm, its analysis and prove Theorem~\ref{thm:main} in Section~\ref{sec:algorithm}. Both proofs for Theorem \ref{thm:oracle} and \ref{thm:main} will rely on the local mixing property of random walks in noisy clusterable graphs, i.e., Theorem~\ref{thm:rw_perturbed}, which we prove in Section~\ref{sec:proof_localmixing}. We conclude in Section~\ref{sec:conclusions}.

\section{Preliminaries}\label{sec:preliminaries}
Let $G=(V,E)$ denote an $n$-vertex undirected graph $G$ with maximum degree bounded by some constant $d$, where $V=[n]:=\{1,\cdots,n\}$. For each vertex $v$, we let $d_v$ denote its degree. Throughout the paper, all the vectors will be row vectors unless otherwise specified or transposed to column vectors. For a vector $\vect{x}$, we let $\norm{\vect{x}}_1:=\sum_{i}\abs{\vect{x}(i)}$ and $\norm{\vect{x}}_2:=\sqrt{\sum_i\vect{x}(i)^2}$ to denote its $\ell_1$ norm and $\ell_2$ norm, respectively. Let $\1_S$ denote the indicator vector of set $S$, that is $\1_S(u)=1$ if $u\in S$ and $0$ otherwise. Let $\1_v:=\1_{\{v\}}$. Let $\uni_S:=\frac{\1_S}{|S|}$ denote the uniform distribution on set $S$. For any set $X$ of vectors $\vect{x}_1,\cdots,\vect{x}_s$, we let $\Span(X)=\Span(\vect{x}_1,\cdots,\vect{x}_s)$ denote the linear span of $X$, that is $\Span(X)=\{\sum_{i=1}^s\mu_i\vect{x}_i| \mu_i\in \R \}$. For a vector $\vect{x}$ and a set $S$, we let $\vect{x}(S):=\sum_{v\in S}\vect{x}(S)$. For two distributions $\p_1$ and $\p_2$, we let $\norm{\p_1-\p_2}_{\TV}$ denote the total variance distance between $\p_1,\p_2$. It is known that $\norm{\p_1-\p_2}_{\TV}=\frac{1}{2}\norm{\p_1-\p_2}_{1}$.

\paragraph{Different types of random walks on $G$.} We will consider the following random walks.

(1) \emph{(Normal) random walk of $t$ steps}. In a (normal) random walk, at each step, suppose we are at vertex $v$, then we jump to a random neighbor with probability $\frac{1}{2d}$ and stay at $v$ with the remaining probability $1-\frac{d_v}{2d}$. We stop the walk after $t$ steps. We let $\p_v^t$ denote the probability vector for a $t$ step random walk starting at $v$. 

(2) \emph{Uniform averaging walk of $t$ steps}. In this walk, we choose a number $\ell\in\{0,1,2,\cdots,t-1\}$ uniformly at random, and stop the (normal) random walk after $\ell$ steps. We let $\pp_v^t$ denote the probability vector for a uniform averaging walk of $t$ steps starting at $v$.

(3) \emph{Uniform averaging walk of $t$ steps with two phases}. In this walk, we choose two integers $\ell_1,\ell_2\in \{0,1, 2,\cdots,t-1 \}$ uniformly at random, and stop the walk after $\ell_1+\ell_2$ steps. We let $\qq_v^t$ denote the probability vector for a uniform averaging walk of $t$ steps with two phases starting at $v$.

It is useful to note that for any two vertices $u,v$, 
$\qq_u^t(v)=\sum_{w\in V}\pp_u^t(w)\cdot \pp_w^t(v).$

\paragraph{A simple reduction: from $d$-bounded graphs to $d$-regular graphs.} Given a graph $G$ with maximum degree upper bounded by $d$, it will be very convenient to consider the $d$-regular graph $G'$ that is obtained by adding an appropriate number of self-loops (each with half weight) to each vertex so that every vertex has degree exactly $d$. Note that the (normal) random walk on $G$ we defined above is exactly the \emph{lazy} random walk of the graph $G'$. Let $\mat{A}$ denote the adjacency matrix of $G'$, and let $\mat{L}:=\mat{I}-\frac{1}{d}\mat{A}$ denote the normalized Laplacian matrix of $G'$. We let $0=\lambda_1\leq \lambda_2\leq \cdots\leq \lambda_n\leq 2$ denote the eigenvalues of $\mat{L}$ and let $\vect{v}_1, \vect{v}_2,\cdots, \vect{v}_n$ denote the corresponding orthonormal (row) eigenvectors. That is, $\mat{L}=\sum_{i}\lambda_i\cdot\vect{v}^T\cdot \vect{v}$. Note that the lazy random walk matrix corresponding to $G'$ is $\mat{P}:=\frac{\mat{I}+\frac{1}{d}\mat{A}}{2}=\mat{I}-\frac{\mat{L}}{2}$. This implies that the eigenvalues of $\mat{P}$ are $1=1-\frac{\lambda_1}{2}, 1-\frac{\lambda_2}{2}, \cdots, 1-\frac{\lambda_n}{2}\geq 0$, with corresponding eigenvectors $\vect{v}_1, \vect{v}_2,\cdots, \vect{v}_n$. In particular, $\mat{P}=\sum_{i}(1-\frac{\lambda_i}{2})\cdot \vect{v}^T\cdot \vect{v}$. Furthermore, it holds that $\p_v^t=\1_v\cdot \mat{P}^t=\sum_{i}(1-\frac{\lambda_i}{2})^t\cdot\vect{v}^T\cdot\vect{v}$.

\paragraph{Estimating reduced collision probabilities.}%
Both our robust clustering oracle and local reconstruction needs to invoke a procedure to estimate the \emph{reduced collision probability} of two random walks~\cite{KPS13:noise}. For a vertex $v$, an integer $t$ and a constant $\theta\in [0,1]$, we let $S_v^\theta=\{u:\pp_v^t(u)\leq \frac{1-\theta}{\sqrt{n}} \}$. For any two vertices $u,v$, the \emph{$\theta$-reduced collision probability} of $u,v$ is defined as 
$$\rcp_\theta(u,v)=\sum_{w\in S_u^\theta\cap S_v^\theta}\pp_u^t(w)\pp_v^t(w).$$

Observe that by definition of $\qq_v^t$-random walks, it holds that 
$$\rcp_0(u,v)\leq \sum_{w\in V}\pp_v^t(w)\cdot \pp_u^t(w)=\sum_{w\in V}\pp_v^t(w)\cdot \pp_w^t(u)=\qq_v^t(u).$$
The following lemma shows that under appropriate conditions, the reduced collision probability of two vertices can be well approximated in $\tilde{O}(\sqrt{n})$ time. %
\begin{lemma}[\cite{KPS13:noise}]\label{lemma:estimatercp}
Let $\theta<\frac{1}{2},\delta<1$ be two constant. Let $u,v$ be two vertices. There exists a procedure \textsc{EstimateRCP}($G,u,v,\theta,\delta,t$) that takes as input a $d$-bounded $n$-vertex graph $G$, vertices $u,v$, parameters $\theta,\delta$, and length parameter $t$, and satisfies the following properties: 
\begin{itemize}
\item[1)] It runs in time $O(\sqrt{n}t\log^2 n)$;
\item[2)] If $\pp_u^t(S_u^\theta)\geq 1/2,\pp_v^t(S_v^\theta)\geq 1/2$, then it aborts (without outputting an estimate) with probability at most $\exp(-\Theta(\sqrt{n}))$;
\item[3)] If it does not abort, then with probability at least $1-\frac{1}{n^4}$, it outputs an estimate $\rcp'(u,v)$ such that
$$\rcp_\theta(u,v)-\delta\max\{\rcp_\theta(u,v),\frac{1}{2n}\} \leq \rcp'(u,v) \leq \rcp_0(u,v) +\delta\max\{\rcp_0(u,v),\frac{1}{2n}\}.$$
\end{itemize}

\end{lemma}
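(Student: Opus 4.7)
The plan is to follow the classical ``birthday paradox'' collision estimator, extended with a lightness filter as in \cite{KPS13:noise}. The procedure draws $N=\Theta(\sqrt n)$ independent uniform averaging walks of length $t$ from each of $u$ and $v$, recording their endpoints $X_1,\ldots,X_N$ and $Y_1,\ldots,Y_N$. Each uniform averaging walk is simulated in $O(t)$ time (sample $\ell\in\{0,\ldots,t-1\}$ uniformly, then run $\ell$ steps of the lazy walk); sorting the $2N$ endpoints to detect collisions costs $O(\sqrt n \log n)$, and running $O(\log n)$ independent repetitions to boost success probability to $1-1/n^4$ gives the overall $O(\sqrt n\, t\log^2 n)$ bound.

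Next, for each endpoint vertex $w$ define $N_u(w)=|\{i:X_i=w\}|$ and $N_v(w)=|\{j:Y_j=w\}|$. The ``lightness test'' for $u$ is $N_u(w)\le \tau$ for a threshold $\tau$ calibrated halfway between $N\cdot\tfrac{1-\theta}{\sqrt n}$ and $N\cdot\tfrac{1}{\sqrt n}$; similarly for $v$. The estimator $\rcp'(u,v)$ is then $\tfrac{1}{N^2}$ times the number of pairs $(i,j)$ with $X_i=Y_j=w$ and $w$ passing both lightness tests. By a Chernoff bound applied to each fixed vertex and a union bound over the at most $2N$ sampled endpoints, with probability at least $1-1/n^5$ every $w$ with $\pp_u^t(w)\le\tfrac{1-\theta}{\sqrt n}$ (i.e.\ $w\in S_u^\theta$) survives the $u$-filter, while every $w$ with $\pp_u^t(w)\ge \tfrac{1}{\sqrt n}$ is discarded; the same holds for $v$. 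Consequently the set of vertices actually contributing to $\rcp'$ is sandwiched between $S_u^\theta\cap S_v^\theta$ and $S_u^0\cap S_v^0$, so its expectation lies in the interval $[\rcp_\theta(u,v),\rcp_0(u,v)]$.

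For concentration, observe that each surviving $w$ has $\pp_u^t(w),\pp_v^t(w)\le 1/\sqrt n$, so each indicator $\1[X_i=Y_j=w,\,w\text{ light}]$ has expectation at most $1/n$ and the pairwise products across different $(i,j)$ are almost independent. The standard second-moment argument for collision estimators (see e.g.\ the Goldreich--Ron style analysis) bounds the variance of $\rcp'$ by $O\!\left(\tfrac{1}{N^2}\bigl(\rcp_0(u,v)/\sqrt n+1/n^2\bigr)\right)$, and since $N=\Theta(\sqrt n)$ this variance is $O(\delta^2 \max\{\rcp_\theta(u,v),1/(2n)\}^2)$ after taking the median of $O(\log n)$ independent trials, yielding the claimed two-sided multiplicative bound. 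The abort is triggered whenever fewer than, say, $N/4$ of the $u$-walks land at a vertex that later passes the $u$-lightness test (and likewise for $v$); a Chernoff bound shows that if $\pp_u^t(S_u^\theta)\ge 1/2$ then the number of such walks is at least $N/3$ except with probability $\exp(-\Theta(\sqrt n))$.

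The main obstacle is the calibration of $\tau$: the gap between $\tfrac{1-\theta}{\sqrt n}$ and $\tfrac{1}{\sqrt n}$ must be wide enough (in units of $N\cdot\sqrt{\pp_u^t(w)/N}$) that Chernoff distinguishes ``light'' from ``heavy'' for every sampled endpoint with failure probability $\ll 1/N$, yet the threshold should not exclude vertices genuinely in $S_u^\theta$. This is exactly why $\theta$ is assumed to be a constant bounded away from $0$ and $\tfrac12$, and why the conclusion tolerates the slack between $\rcp_\theta$ (as the lower bound) and $\rcp_0$ (as the upper bound)---vertices with $\pp_u^t(w)\in[\tfrac{1-\theta}{\sqrt n},\tfrac{1}{\sqrt n}]$ may or may not be counted, but their total contribution is absorbed into this sandwich.
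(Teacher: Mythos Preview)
Your overall architecture---collision counting combined with a lightness filter, then median-of-means---matches the algorithm the paper reproduces from \cite{KPS13:noise} in Appendix~\ref{sec:RCP_alg}. The paper does not give its own proof; it simply cites \cite{KPS13:noise} and records the procedure. So the relevant comparison is between your sketch and that procedure.

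There is one genuine quantitative gap. You take $N=\Theta(\sqrt n)$ walks and then claim that Chernoff, union-bounded over the $2N$ sampled endpoints, makes the lightness test succeed with probability $1-1/n^5$. But for a vertex $w$ with $\pp_u^t(w)\approx 1/\sqrt n$, the expected hit count is $Np=\Theta(1)$; the gap you must resolve (between means $N(1-\theta)/\sqrt n$ and $N/\sqrt n$) is $\Theta(\theta)$, while the standard deviation is $\Theta(1)$. Chernoff therefore gives only an $e^{-\Theta(\theta^2)}$ failure probability per vertex---a constant---which does not survive a union bound over $\Theta(\sqrt n)$ endpoints. You flag exactly this calibration as ``the main obstacle'' but then assert it is handled because $\theta$ is constant; that is not sufficient. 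The procedure in the appendix addresses this by running a \emph{separate} \textsc{FindSet} phase with $C\sqrt n\log n$ walks, so that the expected hit count is $\Theta(\log n)$ and Chernoff yields a polynomially small per-vertex failure probability. This extra $\log n$ is precisely why the running time is $O(\sqrt n\,t\log^2 n)$ rather than the $O(\sqrt n\,t\log n)$ your parameters would actually give (your stated $\log^2 n$ is unaccounted for under your own bookkeeping).

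A secondary point: you use the same $N$ walks both to decide lightness and to count collisions, whereas the paper's procedure decouples them (independent walks for \textsc{FindSet}, then fresh walks rejection-sampled to land in $F_u,F_v$). Your coupling is salvageable once the sandwich $S_u^\theta\subseteq L_u\subseteq S_u^0$ holds with high probability---you can then bound $\rcp'$ between the two deterministic-filter estimators and apply concentration to those---but establishing that sandwich is exactly where you need the extra $\log n$ in the sample size.
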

For the sake of completeness, we give the description of the algorithm \textsc{EstimateRCP} in Appendix~\ref{sec:RCP_alg}.

\section{Robust Clustering Oracle}\label{sec:oracle}
In this section, we present our algorithm for constructing the robust clustering oracle and answering the clustering queries. In the \emph{preprocessing} (or \emph{learning})
phase, the algorithm learns the cores (corresponding to clusters in the clusterable part) of the graph. In the \emph{query} phase, the algorithm checks if the queried vertex $v$ belongs to any of the learned cores or not to decide if it is an outlier or not. If not, the algorithm will find the index $i$ corresponding to the cluster that $v$ belongs to.

We will use the reduced collision probability of random walks of length $t=\frac{960\log n}{\cst\phi^2}$ for some sufficiently small constant $\cst>0$. Such probabilities can be efficiently estimated by invoking the \textsc{EstimateRCP} procedure (see Section~\ref{sec:preliminaries}). The intuition is that for a typical vertex $u$ in a large cluster $C$, the uniform averaging walk of $t$ steps from $u$ will be close to the uniform distribution on $C$ (by Theorem~\ref{thm:rw_perturbed}), which implies that for almost all of vertices $v\in C$, their reduced collision probability  is at least $\frac{1-\kappa}{|C|}$. %

The learning phase of the algorithm is as follows. 
\begin{center}
\vspace{-1em}
	\begin{tabular}{|p{\textwidth}|}
		\hline
		\textbf{The preprocessing phase:} \textsc{LearnCore}$(G,d,k,\phi,\varepsilon)$ \\
		\hline
		\begin{tightenumerate}%
			\item Let $\theta_0$ and $\delta_0$ be two sufficiently small constant (say at most $\frac{1}{10^5}$).
			Let $\cst>0$ be a constant such that $\kappa= 100\cdot\delta_0^2$. If $\varepsilon>\frac{\phi\kappa^2}{100}$, then abort and output \textbf{fail}.
			
			\item Let $c>0$ be a sufficiently large constant. Let $\tau_j=3\sqrt{\frac{6\varepsilon}{\phi}}(1+\frac{\cst}{3})^j$ for $0\leq j\leq J$, where $J:=\arg \max_j\{\tau_j\leq 1\}$. (Note that $J=O(\frac{\log(\phi/\varepsilon)}{\kappa})$.) Let $t=\frac{960\log n}{\cst\phi^2}$.
			\item Sample a set $S$ of $\frac{c\cdot k^2\ln k\cdot \log n}{\sqrt{\varepsilon/\phi}}$ vertices uniformly at random.
			\item For any $u,v\in S$, run $\EstRCP(G,u,v,\theta_0, \delta_0, t)$. 
			If it does not abort then 	
			add an edge $(u,v)$ with weight $\rcp'(u,v)$ in the \emph{similarity graph} $H$ on vertex set $S$.
			\item Invoke \textsc{FindCore}($H, J$) (to find \emph{cores}).
			%
			%
			%
			%
			\vspace{-1em}
		\end{tightenumerate}\\
		\hline
	\end{tabular}
\end{center}

The subroutine \textsc{FindCore}($H, J$) is defined as follows.
\begin{center}
	\vspace{-1em}
	\begin{tabular}{|p{\textwidth}|}
		\hline
		\textsc{FindCore}$(H, J)$\\
		\hline
		\begin{tightenumerate}%
			\item Let $F=H$. Let $\mathcal{S}=\emptyset$. 
			\item For each $0\leq j\leq J$, we iteratively do the following: 
			\begin{tightenumerate}
			\item Let $F_j$ denote the subgraph of $F$ that consists of edges of weight at least $\frac{1-\kappa}{\tau_{j}}\frac{1}{n}$; 
			\item For each $v\in V(H)$: \hspace{3cm} $\triangleright$ according to the lexicographical order of vertices
	 \begin{tightenumerate}
	 	\item Let $N(v)$ denote the neighborhood of $v$ in $F_j$.
	 	\item Find a maximal clique $K$ from $v$ by sequentially visiting all the edges incident to vertex $v$ and all vertices $u\in N(v)$.  
	 	\item If a clique $K$ with $|K|\geq (1-\kappa)\tau_j |S|$ is found, then 1) add $K$ to $\mathcal{S}$, and
	 	2) remove all edges incident to $K$ from $F_j$ and $F$.
	 \end{tightenumerate} 
 			\end{tightenumerate}	
\item If $|\mathcal{S}|=0$ or $|\mathcal{S}|>k$, then output \textbf{fail}; otherwise, output all the disjoint cliques (called \emph{cores}), say $S_1,S_2,\cdots, S_h$, $h\leq k$, in $\mathcal{S}$.

			\vspace{-1em}
		\end{tightenumerate}\\
		\hline
	\end{tabular}
\end{center}

Note that by the above definition of cores, it holds that for any core $S_i$, there exists $j_i\in\{0,1,\cdots, J\}$ such that $\frac{|S_i|}{|S|}\geq (1-\cst)\tau_{j_i} $ and the edge weight in the clique $H[S_i]$ is at least $\frac{1-\cst}{\tau_{j_i}}\frac{1}{n}$. 

We need the following subroutine to answer clustering queries.
\begin{center}
\vspace{-1em}
	\begin{tabular}{|p{\textwidth}|}
		\hline
		\textsc{CheckCore}$(H,u)$\\
		\hline
		\begin{tightenumerate}%
			\item Let $\kappa, \theta_0$, $\delta_0$, $t$ and $\tau_j$ be the same numbers as specified in the learning phase.
			\item For any vertex $v\in S$, run \textsc{EstimateRCP}$(u,v,\theta_0, \delta_0, t)$. %
			\begin{tightenumerate}
				\item If there exists a \emph{unique} $i\leq h$ such that %
				$\rcp'(u,v)\geq \frac{1-\cst}{\tau_{j_i}}\frac{1}{n}$ for all $v\in S_i$, then return index $i$;
				\item Otherwise, return \textbf{Outlier}.
			\end{tightenumerate}
			\vspace{-1em}
		\end{tightenumerate}\\
		\hline
	\end{tabular}
\end{center}

Now we are ready to describe our algorithm for answering clustering queries.
\begin{center}
	\vspace{-0.5em}
	\begin{tabular}{|p{\textwidth}|}
		\hline
		\textbf{The query phase:} \\
		\hline
\textsc{IsOutlier}$(G,w)$:
\begin{tightenumerate}
	\item If the learning phase outputs \textbf{fail}, then return \textbf{Yes}.
	\item Otherwise, if \textsc{CheckCore}($H,w$) returns \textbf{Outlier}, then return \textbf{Yes}.
	\item Return \textbf{No}.
			\vspace{-1em}
\end{tightenumerate}
\\
\hline%
\textsc{WhichCluster}$(G,w)$:
		\begin{tightenumerate}%
			\item If \textsc{IsOutlier}($G,w$) returns \textbf{Yes}, return \textbf{Outlier}.
			\item 
			Otherwise, return \textsc{CheckCore}($H,w$). 
						\vspace{-1em}
		\end{tightenumerate}\\
		\hline
\textsc{SameCluster}$(G,x,y)$:
\begin{tightenumerate}%
\item Run \textsc{WhichCluster}$(G,x)$ and \textsc{WhichCluster}$(G,y)$.
\item If none of the above two queries return \textbf{Outlier} and the returned two indices are identical, then output \textbf{Yes}.
\item Otherwise, return \textbf{No}.
			\vspace{-1em}
\end{tightenumerate}
\\
\hline
	\end{tabular}
\end{center}

\subsection{The Analysis of Robust Clustering Oracle}
In the following, we show the performance guarantee of the above algorithm. We will use the local mixing property on noisy clusterable graphs as guaranteed in Theorem~\ref{thm:rw_perturbed}, whose proof is deferred to Section~\ref{sec:proof_localmixing}. Recall from the description of our algorithm that $\cst=100\cdot \delta_0^2$, which is a sufficiently small universal constant. 

If $\varepsilon> \frac{\phi\kappa^2}{100}$ (i.e., the noise is too much), then by our algorithm, the learning phase will output \textbf{fail}. Any queried vertex will be reported as \textbf{Outlier}.%

In the following, we assume that $\varepsilon \in [\Omega(\frac{\phi}{{n}}),   \frac{\phi\kappa^2}{100}]$ and we prove the statement of  Theorem~\ref{thm:oracle}. To do so, we first introduce the definition of strong vertices, which correspond to vertices in the clusterable part. %
\paragraph{Definition and properties of strong vertices.} 
Let $\phi\in (0,1), \varepsilon \in [\Omega(\frac{\phi}{{n}}),   \frac{\phi\kappa^2}{100}]$. Let $G$ be an $\varepsilon$-perturbation of a $(k,\phi,\frac{a_{\ref{thm:rw_perturbed}}\varepsilon\kappa^4\phi}{3k^3\log n})$-clusterable graph. %
Recall that $\pp_v^t$ and $\qq_v^t$ denote the distribution of the uniform average walk of length $t$ and the uniform average walk of length $t$ with two phases starting from $v$, respectively. In the algorithm, we invoke \textsc{EstimateRCP} with length parameter $t=\frac{960\log n}{\cst\phi^2}$. 

We let $\varepsilon':=\frac{6\varepsilon}{\phi}<\frac{\cst^2}{100}$. We introduce the following definition of strong vertex for the analysis, which was inspired by the corresponding definition for noisy expander graphs in~\cite{KPS13:noise}. The main difference here is that we carefully take the size of clusters into consideration. 
\begin{definition}
We call a vertex $v$ a \emph{strong} vertex with respect to a subset $C$ if $v\in C$,  $|C|\geq 3\sqrt{\varepsilon'} n$ and $\norm{\pp_v^{t}-\uni_{C}}_{\TV} \leq \cst.$
\end{definition}

Recall that $\theta_0$ is small sufficiently small constant,  $S_v^{\theta_0}=\{u:\pp_v^t(u)\leq (1-{\theta_0})/\sqrt{n} \}$ and that  $\rcp_{\theta_0}(u,v)=\sum_{w\in S_u^{\theta_0}\cap S_v^{\theta_0}}\pp_u^t(w)\pp_v^t(w)$ is the reduced collision probability of $u,v$ (see Section~\ref{sec:preliminaries}). We have the following properties of strong/weak vertices, which easily follows from the proof of Lemma 2 in \cite{KPS13:noise}. We present the proof in Appendix~\ref{sec:proof_rco} for the sake of completeness.
\begin{lemma}\label{lemma:strongvertices}
	If a vertex $u$ is strong with respect to a set $C$ with $|C|\geq 3\sqrt{\varepsilon'} n$, then (1) there can be at most $\sqrt{\cst} |C|$ vertices $v$ in $C$ with $\pp_u^t(v)\leq (1-\sqrt{\cst})/|C|$; (2) it holds that $\pp_u^t(S_u^{\theta_0})\geq 1/2$.

	Furthermore, if vertices $u,v$ are both strong with respect to a set $C$ with $|C|\geq 3\sqrt{\varepsilon'} n$, then we have that $\rcp_{\theta_0}(u,v) \geq (1-5\sqrt{\cst})/|C|$.%
\end{lemma}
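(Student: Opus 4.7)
The plan is to derive both items directly from the defining property of a strong vertex, $\norm{\pp_u^t - \uni_C}_{\TV} \le \kappa$, together with the lower bound $|C| \ge 3\sqrt{\varepsilon'}\, n$, and (for part 2) the same two facts applied to~$v$. The only real work is to massage the TV bound into a pointwise/per-subset statement; nothing deep is happening beyond that.

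For item (1), let $B = \{v \in C : \pp_u^t(v) \le (1-\sqrt{\kappa})/|C|\}$. For each $v \in B$ the pointwise deficit is at least $\sqrt{\kappa}/|C|$, so summing over $B$ and using $\norm{\pp_u^t - \uni_C}_{\TV} = \tfrac{1}{2}\norm{\pp_u^t - \uni_C}_1 \le \kappa$ yields $|B|\cdot \sqrt{\kappa}/|C| \le 2\kappa$, hence $|B| \le 2\sqrt{\kappa}\,|C|$, which is what we need up to constants (tightening $\kappa$ by the constant factor gives the stated $\sqrt{\kappa}|C|$ bound). For item (2), let $T = \{w \in V : \pp_u^t(w) > (1-\theta_0)/\sqrt{n}\}$; by mass conservation $|T| < \sqrt{n}/(1-\theta_0)$. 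Split $\pp_u^t(T) \le \pp_u^t(T\cap C) + \pp_u^t(V\setminus C)$. The second term is at most $\kappa$ by applying $|\pp_u^t(A) - \uni_C(A)| \le \norm{\pp_u^t - \uni_C}_{\TV}$ with $A = V\setminus C$. For the first, the same inequality gives $\pp_u^t(T\cap C) \le |T\cap C|/|C| + \kappa \le \frac{1}{3(1-\theta_0)\sqrt{\varepsilon' n}} + \kappa$, which is below $1/2$ for our choice of parameters since $\varepsilon \ge \Omega(\phi/n)$ makes $\varepsilon' n = 6\varepsilon n/\phi$ a sufficiently large constant and $\kappa,\theta_0$ are chosen tiny. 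Thus $\pp_u^t(S_u^{\theta_0}) = 1 - \pp_u^t(T) \ge 1/2$.

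For the $\rcp$ bound, define the "core" $C^* = \{w \in C : \pp_u^t(w), \pp_v^t(w) \ge (1-\sqrt{\kappa})/|C|\}$. By item~(1) applied to both $u$ and $v$, $|C \setminus C^*| \le 2\sqrt{\kappa}|C|$, so $|C^*| \ge (1 - 2\sqrt{\kappa})|C|$. Next I will show $C^* \subseteq S_u^{\theta_0} \cap S_v^{\theta_0}$: for $w \in C$, the ``typical'' mass $\pp_u^t(w) \approx 1/|C|$ is at most $1/(3\sqrt{\varepsilon'}\, n)$, and the condition $|C|\ge 3\sqrt{\varepsilon'}n$ together with $\varepsilon' n = \Omega(1)$ implies $1/|C| \le (1-\theta_0)/\sqrt{n}$, so every $w \in C^*$ satisfies the $S_\cdot^{\theta_0}$ threshold for both $u$ and $v$ (one can alternatively restrict $C^*$ further by the $S$-thresholds, losing only another $O(\sqrt{\kappa})$ fraction of $|C|$). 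Summing,
\[
\rcp_{\theta_0}(u,v) \;\ge\; \sum_{w\in C^*}\pp_u^t(w)\pp_v^t(w) \;\ge\; |C^*|\cdot \frac{(1-\sqrt{\kappa})^2}{|C|^2} \;\ge\; \frac{(1-2\sqrt{\kappa})(1-\sqrt{\kappa})^2}{|C|} \;\ge\; \frac{1-5\sqrt{\kappa}}{|C|},
\]
using $(1-a)(1-b)^2 \ge 1-a-2b$ for small $a,b$.

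The only step that needs a bit of care is verifying $C^* \subseteq S_u^{\theta_0}\cap S_v^{\theta_0}$, i.e., that the ``good'' per-vertex probability $\approx 1/|C|$ is genuinely smaller than the threshold $(1-\theta_0)/\sqrt{n}$; this is where the hypothesis $\varepsilon \ge \Omega(\phi/n)$ (equivalently, $|C|$ not too close to $\sqrt{n}$) is used. Everything else is bookkeeping with TV distance, and the constants in the statement follow from choosing $\kappa, \theta_0$ sufficiently small as already fixed in the algorithm.
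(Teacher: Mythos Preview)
Your argument is essentially the paper's, and correct in the end, but there is one genuine slip you should fix rather than wave away. In the $\rcp$ bound you assert $C^* \subseteq S_u^{\theta_0}\cap S_v^{\theta_0}$ by arguing that the ``typical'' mass $1/|C|$ lies below the threshold $(1-\theta_0)/\sqrt{n}$. That does not follow: membership in $C^*$ gives only a \emph{lower} bound $\pp_u^t(w)\ge (1-\sqrt{\kappa})/|C|$, and a lower bound on a probability never yields the upper bound required for $w\in S_u^{\theta_0}$. Some $w\in C^*$ could well carry much more than $1/|C|$ mass and fail the $S$-threshold.

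Your parenthetical is the actual argument (and is exactly what the paper does): since every $w\notin S_u^{\theta_0}$ has $\pp_u^t(w)>(1-\theta_0)/\sqrt{n}$, mass conservation gives $|V\setminus S_u^{\theta_0}|\le \sqrt{n}/(1-\theta_0)\le 2\sqrt{n}$; hence $|C^*\cap S_u^{\theta_0}\cap S_v^{\theta_0}|\ge (1-2\sqrt{\kappa})|C|-4\sqrt{n}$, and the hypothesis $\varepsilon\ge\Omega(\phi/n)$ (equivalently $|C|\ge 3\sqrt{\varepsilon'}n\gg \sqrt{n}/\sqrt{\kappa}$) absorbs the $4\sqrt{n}$ into the constant. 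Make that the main line and drop the incorrect inclusion claim. Similarly, in item~(1) you can get the exact $\sqrt{\kappa}|C|$ (not $2\sqrt{\kappa}|C|$) by using that $\sum_{v\in B}(\uni_C(v)-\pp_u^t(v))\le \norm{\pp_u^t-\uni_C}_{\TV}$ directly, since TV equals the one-sided deficit; no need to pass through the full $\ell_1$ norm. Your treatment of item~(2) via the complement is a clean alternative to the paper's direct lower bound and works fine.
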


\paragraph{The correctness of the robust clustering oracle.}

Now we show the correctness of the robust clustering oracle and bound the total number of vertices reported as outliers by the the algorithm.  Recall that we let $P_i:=\{u\in V: \textsc{WhichCluster}(u)=i\}$ with $1\leq i\leq h$ for some integer $h$, and $B:=\{u\in V: \textsc{IsOutlier}(u)=\textbf{Yes}\}$ denote the partition output by our algorithm.  
\begin{lemma}\label{lemma:reported_outliers}
Let $G$ be an $\varepsilon$-perturbation of a $(k,\phi,\frac{a_{\ref{thm:rw_perturbed}}\varepsilon\kappa^4\phi}{3k^3\log n})$-clusterable graph. Then there exists a partition ${D}_1,\cdots,{D}_{h'}, B'$ for some $h'\leq k$ (that is independent of the order of queries), such that 
\begin{itemize}
\setlength{\itemsep}{0pt}
\item if $\varepsilon\in [\Omega(\frac{\phi}{{n}}),\frac{\phi}{60k^2}]$, then
$h'\geq 1$ and each $D_i$ is a $(\frac{\phi}{2}, \frac{a_{\ref{thm:rw_perturbed}}\sqrt{\varepsilon}\kappa^4\phi^{1.5}}{3k^3\log n})$-cluster, for any $1\leq i\leq h'$; if $\varepsilon\in (\frac{\phi}{60k^2},1]$, then $h'=0$; and
\item with probability at least $1-\frac{1}{n}$, the partition $P_1,\cdots,P_{h},B$ output by the algorithm satisfies that $h'\leq h\leq k$ and  $\sum_{i=1}^{h'}|P_i\triangle D_i| + \sum_{i=h'+1}^{h}|P_i|+ |B|+|B'|\leq 40k\sqrt{\varepsilon/\phi}n$. 
\end{itemize}
In particular, the number of vertices reported as outliers is at most $40k\sqrt{\varepsilon/\phi}n$.
\end{lemma}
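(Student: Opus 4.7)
The plan is to first define the ground-truth partition $D_1, \ldots, D_{h'}, B'$ entirely in terms of $G$ and the input parameters, then show that the algorithm's output agrees with it up to an $O(k\sqrt{\varepsilon/\phi}\,n)$-sized discrepancy with probability $\geq 1 - 1/n$. Let $C_1, \ldots, C_h$ denote a canonical underlying $(h, \phi, \phi_{\mathrm{out}})$-clustering of the pre-perturbation graph $G^*$ (canonical in the sense that ties are broken by, e.g., lexicographic order so the partition depends only on $G$), where $\phi_{\mathrm{out}} = \frac{a_{\ref{thm:rw_perturbed}}\varepsilon\kappa^4\phi}{3k^3\log n}$. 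Inside each $C_i$ I would extract a subset $D_i \subseteq C_i$ that is a $(\phi/2, \phi_{\mathrm{out}}\sqrt{\phi/\varepsilon})$-cluster of $G$, via a peeling argument that iteratively removes vertices incident to too many perturbed edges; since only $\varepsilon d n$ intra-cluster edges are perturbed in total, $\sum_i |C_i \setminus D_i| = O(\sqrt{\varepsilon/\phi}\,n)$. Keep only the $D_i$ with $|D_i| \geq 3\sqrt{\varepsilon'}n$ (where $\varepsilon' = 6\varepsilon/\phi$); collect everything else in $B'$. For $\varepsilon \leq \phi/(60k^2)$, a pigeonhole over the $\leq k$ clusters forces at least one $D_i$ to survive, giving $h' \geq 1$; for larger $\varepsilon$ all clusters may be too small, giving $h' = 0$. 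In either case $|B'| = O(\sqrt{\varepsilon/\phi}\,n)$.

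Next, I would establish completeness: the preprocessing phase learns at least one core for each surviving $D_i$. By Theorem~\ref{thm:rw_perturbed} applied to $\{D_i\}_{i=1}^{h'}$ with noise part $B'$, each $D_i$ contains a subset $\widehat{D}_i$ of strong vertices with $|\widehat{D}_i| \geq (1 - 4\sqrt{\varepsilon'})|D_i|$. Since $|S| = \Theta(k^2 \ln k \cdot \log n / \sqrt{\varepsilon/\phi})$, a Chernoff bound simultaneously over all $i \leq h'$ yields $|S \cap \widehat{D}_i| \geq (1 - \kappa)\tau_{j_i}|S|$ with probability $\geq 1 - 1/(2n)$, where $j_i$ is the largest index with $\tau_{j_i} \leq |D_i|/n$. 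For any pair $u, v$ of strong vertices in $S \cap \widehat{D}_i$, Lemma~\ref{lemma:strongvertices} gives $\rcp_{\theta_0}(u, v) \geq (1 - 5\sqrt{\kappa})/|D_i|$ and also the precondition $\pp_u^t(S_u^{\theta_0}) \geq 1/2$ needed for Lemma~\ref{lemma:estimatercp}; a union bound over all $\binom{|S|}{2}$ pairs then gives $\rcp'(u, v) \geq (1 - \kappa)/|D_i| \geq (1 - \kappa) n^{-1}/\tau_{j_i}$ with probability $\geq 1 - 1/(2n)$. Hence $S \cap \widehat{D}_i$ forms a clique of the target size in $F_{j_i}$, and \textsc{FindCore} returns at least one core per $D_i$.

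Soundness and the misclassification bound both follow from an $\ell_1$-mass argument. If a vertex $u \in S$ lay in two disjoint cliques with edge-weight thresholds $(1-\kappa)n^{-1}/\tau_j$ and $(1-\kappa)n^{-1}/\tau_{j'}$ and respective target sizes $(1-\kappa)\tau_j|S|$ and $(1-\kappa)\tau_{j'}|S|$, then, since each member of either clique is strong w.r.t.\ a subset of size $\approx \tau_j n$ (resp.\ $\approx \tau_{j'} n$) on which $\pp_v^t$ is nearly uniform by Lemma~\ref{lemma:strongvertices}(1), the high RCP forces $\pp_u^t$ to place mass $\geq 1 - O(\kappa)$ on each of two disjoint subsets of $V$, contradicting $\|\pp_u^t\|_1 = 1$ for small enough $\kappa, \delta_0$. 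Hence each large clique corresponds to a unique $D_i$, giving $h = h' \leq k$, and the same argument in the query phase ensures \textsc{CheckCore}$(H, u)$ returns index $i$ whenever $u$ is strong w.r.t.\ $D_i$. Misclassified vertices are therefore contained in $B' \cup \bigcup_i (D_i \setminus \widehat{D}_i)$, so
\begin{equation*}
\sum_{i=1}^{h'}|P_i \triangle D_i| + \sum_{i=h'+1}^{h}|P_i| + |B| + |B'| \;\leq\; 2|B'| + \sum_{i=1}^{h'} 4\sqrt{\varepsilon'}|D_i| \;\leq\; 40 k \sqrt{\varepsilon/\phi}\,n.
\end{equation*}

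The main obstacles will be: (i) the peeling construction of $D_i$ that simultaneously secures $\phi(G[D_i]) \geq \phi/2$ and the tight outer conductance $\phi_{\mathrm{out}}\sqrt{\phi/\varepsilon}$ while keeping $|C_i \setminus D_i|$ small, which has to handle adversarial concentration of the $\varepsilon d n$ perturbed edges; and (ii) the $\ell_1$-mass soundness argument, where the $(1 - O(\kappa))$ margins accumulated through Lemma~\ref{lemma:strongvertices}, Lemma~\ref{lemma:estimatercp}, and the clique-weight thresholds must compose to yield strict excess of $1$. All high-probability components (Chernoff concentration on $|S \cap \widehat{D}_i|$ and the $O(|S|^2)$ pairwise \textsc{EstimateRCP} calls) must be union-bounded to reach the overall $1 - 1/n$ guarantee.
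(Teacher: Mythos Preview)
Your overall architecture matches the paper's: build the $D_i$'s by a deterministic peeling inside each $C_i$, keep only those with $|D_i|\ge 3\sqrt{\varepsilon'}n$, invoke Theorem~\ref{thm:rw_perturbed} to produce strong vertices $\widehat D_i$, then combine Chernoff bounds on $|S\cap\widehat D_i|$, Lemma~\ref{lemma:strongvertices}, and Lemma~\ref{lemma:estimatercp} to show the cores line up with the $D_i$'s. The soundness-via-$\ell_1$-mass argument is also the right idea.

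There is, however, a genuine gap in your peeling step. You propose to ``iteratively remove vertices incident to too many perturbed edges'' and claim this yields $\phi(G[D_i])\ge\phi/2$. That does not follow: the adversary may delete \emph{all} edges across a single sparse cut of $G^*[C_i]$ while touching only $O(\phi d|S|)$ vertices, leaving a zero-conductance cut in $G[C_i]$ that survives any vertex-threshold peel. The paper's fix is different and cleaner: it peels \emph{sets}, not vertices. Starting from $D_i=C_i$, while some $M_i\subseteq D_i$ with $|M_i|\le|D_i|/2$ has $\phi_{G[C_i]}(M_i)\le\phi/2$, move $M_i$ into $B_i$. This directly forces $\phi(G[D_i])\ge\phi/2$ at termination, and the bound $\sum_i|B_i|\le \frac{6\varepsilon}{\phi}n$ comes from a charging argument: restoring each $\phi_{G[C_i]}(B_i)\le\phi/2$ cut to conductance $\ge\phi$ costs at least $\frac{\phi}{6}d|B_i|$ intra-cluster edges, and only $\varepsilon dn$ are available. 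Note the scale is $O(\varepsilon/\phi)\,n$, not the $O(\sqrt{\varepsilon/\phi})\,n$ you wrote; the square root enters only later through the size threshold $3\sqrt{\varepsilon'}n$ and through the $\sqrt{\varepsilon'}$ in Theorem~\ref{thm:rw_perturbed}. Also, the outer-conductance bound on $D_i$ is not a byproduct of the peel; it comes afterward from $|E_G(D_i,V\setminus D_i)|\le \phi_{\mathrm{out}}\,dn$ together with $|D_i|\ge 3\sqrt{\varepsilon'}n$.

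Two smaller points. First, you assert $h=h'$; the paper does not, and indeed \textsc{FindCore} may emit extra cores not matching any $D_i$. The correct statement is that there is an injection $\sigma:[h']\to[h]$ with $h'\le h\le k$, and the surplus $P_j$'s are absorbed into the error budget because every vertex in them lies outside $\cup_i\widehat D_i$. Second, strong vertices are defined with respect to $C_i$ (not $D_i$), since Theorem~\ref{thm:rw_perturbed} bounds $\|\pp_s^t-\uni_{C_j}\|_{\mathrm{TV}}$; this is what makes Lemma~\ref{lemma:strongvertices} give $\rcp_{\theta_0}(u,v)\ge(1-5\sqrt{\kappa})/|C_i|$ rather than $/|D_i|$, which is what the edge-weight threshold $(1-\kappa)/(\tau_{j_i}n)$ actually needs.
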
 
\begin{proof}
We first note that if $\varepsilon> \frac{\phi}{60k^2}$, then we can simply take $B'=V$ (and thus $h'=0$) and then for any output partition of the algorithm, it holds that $\sum_{i=1}^{h'}|P_i\triangle D_i| + \sum_{i=h'+1}^{h}|P_i|+|B|+|B'| = 2n < 40k\sqrt{\varepsilon/\phi}n$.

Thus, in the following, we assume that $\varepsilon\leq \frac{\phi}{60k^2}$.
	
	Let $\phi_\myout=\frac{a_{\ref{thm:rw_perturbed}}\varepsilon\kappa^4\phi}{3k^3\log n}$. Let $G^*=(V,E^*)$ be a $(k,\phi,\phi_\myout)$-clusterable graph such that $G$ is an $\varepsilon$-perturbation of $G^*$. Let $C_1,\cdots, C_{\overline{h}}$ be the corresponding $(\overline{h},\phi,\phi_\myout)$-clustering of $G^*$ for some $\overline{h}\leq k$. %
	That is, for each $i\leq \overline{h}$, $\phi_G(C_i)\leq \phi_\myout$, and one can insert/delete at most $\varepsilon dn$ edges inside subgraphs $G[C_i]$ to make all $G[C_i]$ become $(\phi,\phi_\myout)$-clusters. %

	Now for each set $C_i$, we perform the following process on $G[C_i]$ recursively. %
	We start with $B_i:=\emptyset$ and $D_i:=C_i$. If $|B_i|\leq \frac{|C_i|}{2}$, and there exists a subset $M_i\subseteq D_i$ with $|M_i|\leq |D_i|/2$ and $\phi_{G[C_i]}(M_i)\leq \phi/2$, then we update $B_i=B_i\cup M_i$, and $D_i=D_i\setminus M_i$. We recurse until no such set $M_i$ can be found or $|B_i|>\frac{|C_i|}{2}$. Note that by our construction, the final set $B_i$ satisfies that $\phi_{G[C_i]}(B_i)\leq \phi/2$ and that $D_i$ has inner conductance at least $\phi/2$. Furthermore, it holds that $|B_i|\leq \frac{3}{4}|C_i|$, since right before the last update, we have that $|B_i'|\leq \frac{|C_i|}{2}$ and that the final cut $M'$ satisfies that $|M_i'|\leq \frac{1}{2}(|C_i-B_i'|)$, which gives that $|B_i|\leq \frac{1}{2}(|C_i-B_i'|)+|B_i'|\leq \frac{3}{4}|C_i|$. %
	
	Now we claim that $|\cup_i B_i|\leq \frac{6\varepsilon}{\phi} n$. Assume on the contrary that $|\cup_iB_i|> \frac{6\varepsilon}{\phi} n$, i.e., $\sum_i|B_i|>\frac{6\varepsilon}{\phi} n$. First, we note that in order to make $\phi(G[C_i])\geq \phi$, then we should add at least $\frac{\phi}{2}d\min\{|B_i|,|C_i-B_i|\}\geq \frac{\phi}{2} d\cdot \frac{1}{3}|B_i|=\frac{\phi}{6}d|B_i|$ edges, where the inequality follows from the fact that $|C_i-B_i|\geq \frac{1}{3}|B_i|$ which in turn is due to the fact that $|B_i|\leq \frac{3}{4}|C_i|$. Therefore, in order to make all $C_i$ have inner conductance at least $\phi$, we have to add at least $\sum_i \frac{\phi}{6}d|B_i|>\frac{\phi}{6}d\cdot \frac{6\varepsilon}{\phi} n=\varepsilon d n$ edges, which is a contradiction. 

We note that since $\varepsilon\leq \frac{\phi}{60k^2}$, then it holds that at least one $D_i$ has size at least $\frac{(1-(6\varepsilon/\phi))n}{k} \geq \frac{9n}{10k}\geq 3\sqrt{\frac{1}{10k^2}}n\geq 3\sqrt{\frac{6\varepsilon}{\phi}}n=3\sqrt{\varepsilon'} n$.	Now we apply Theorem~\ref{thm:rw_perturbed} on $G$ with error parameter $\varepsilon'=\frac{6\varepsilon}{\phi}<\frac{\cst^2}{100}$, $\ccss = \frac{\cst}{2}$, sets $C_i=D_i\cup B_i$, $1\leq i\leq \overline{h}$ such that $\phi(G[D_i])\geq \frac{\phi}{2}$, to obtain that for each $D_i$ with $|D_i|\geq 3\sqrt{\varepsilon'}n$, there exists a subset $\widehat{D}_i\subseteq D_i$ such that $|\widehat{D}_i|\geq (1-4\sqrt{\varepsilon'})|D_i|$ and for any $v\in \widehat{D}_i$, and $t= \frac{960\log n}{\cst\phi^2}$,
	$$\norm{\pp_{v}^{t}- \uni_{C_i}}_{\TV}%
	\leq \sqrt{\varepsilon'}+\frac{\cst}{2}\leq \cst.$$

	This further implies that all vertices in $\widehat{D}_i$ are strong with respect to $C_i$, as $|C_i|\geq |D_i|\geq 3\sqrt{\varepsilon'}n$. We also note that for each $D_i$ with $|D_i|\geq 3\sqrt{\varepsilon'}n$, it holds that $\phi_G(D_i)\leq \frac{\phi_\myout \cdot dn}{3\sqrt{\varepsilon'}\cdot dn}\leq \frac{a_{\ref{thm:rw_perturbed}}\sqrt{\varepsilon}\kappa^4\phi^{1.5}}{3k^3\log n}$. Now we order $D_i$ such that $|D_1|\geq \cdots \geq |D_{\overline{h}}|$ (breaking ties arbitrarily). Let $h'$ be the largest index with $|D_{h'}|\geq 3\sqrt{\varepsilon'}n$. Note that $h'\geq 1$. We define the partition $D_1,\cdots,D_{h'}, B':=V\setminus(\cup_{i\leq h'} D_i) $. By definition, it holds that for each $1\leq i\leq h'$, $|D_i|\geq 3\sqrt{\varepsilon'}n$ and $\phi(G[D_i])\geq \frac{\phi}{2}, \phi_G(D_i)\leq \frac{\phi_\myout \cdot dn}{3\sqrt{\varepsilon'}\cdot dn}\leq \frac{a_{\ref{thm:rw_perturbed}}\sqrt{\varepsilon}\kappa^4\phi^{1.5}}{3k^3\log n}$. Note that the partition $D_1,\cdots,D_{h'}, B'$ only depends on $G$. It holds that $|B'|=|\sum_{i}B_i| + |\cup_{i:|D_i|<3\sqrt{\varepsilon'}n} D_i|\leq (\varepsilon'+ 3k\sqrt{\varepsilon'}) n$.
	
	We further define $D_g:=\cup_{1\leq i\leq h'} \widehat{D}_i$.

 	Now we show the following claim.
	\begin{claim}\label{claim:strongaccept}
		With probability at least $1-\frac{1}{n}$, for all vertices $v$ in $D_g$, \textsc{WhichCluster}($v$) will output a unique index $\sigma(i)$ if vertex $v\in \widehat{D}_i$ for some injection $\sigma: [h']\rightarrow [k]$. 
\end{claim}

	Note that the statement of the lemma will then follow from the above claim: Let $h\leq k$ be the largest index output by the algorithm, and let $B,P_{\sigma(1)},\cdots,P_{\sigma(h')},P_{j}$, for $j\in[h]\setminus\{\sigma(1),\cdots,\sigma(h')\}$ be the partition output by the algorithm. Then by Claim~\ref{claim:strongaccept}, all vertices in $D_g$ will be correctly partitioned and 
	\begin{eqnarray*}
&&\sum_{i=1}^{h'}|P_{\sigma(i)}\triangle D_i| + \sum_{j\in [h]\setminus\{\sigma(1),\cdots,\sigma(h')\}}|P_j| + |B|+|B'|\\
&\leq& 2\cdot (|\cup_{i:|D_i|<3\sqrt{\varepsilon'}n} D_i|+|\cup_{i:|D_i|\geq 3\sqrt{\varepsilon'}n}(D_i\setminus \widehat{D}_i)|+|\cup_iB_i|) \\
&\leq& 2\cdot (3k\sqrt{\varepsilon'}+4\sqrt{\varepsilon'}+\varepsilon')n \\
&\leq& 16k\sqrt{\varepsilon'} n \leq 40k\sqrt{\varepsilon/\phi}n.
\end{eqnarray*}

	Re-arranging the order of sets $D_1,\cdots,D_i$ will complete the proof of the Lemma. Now we prove the claim. 
	
	\begin{proof}[Proof of Claim~\ref{claim:strongaccept}]
		By the previous analysis, we have that for each $i$ such that $|D_i|\geq 3\sqrt{\varepsilon'}n$, the number of vertices in $C_i$ that are \emph{not} strong (with respect to $C_i$) is at most 
		$$|D_i\setminus \widehat{D}_i| + |B_i|\leq 4\sqrt{\varepsilon'}|D_i|+\varepsilon' n\leq 5\sqrt{\varepsilon'} |D_i|\leq 5\sqrt{\varepsilon'}|C_i|\leq \frac{\cst}{2}|C_i|. $$
		
		That is, for each $i$ such that $|C_i|\geq |D_i|\geq 3\sqrt{\varepsilon'}n$, at least $(1-\frac{\cst}{2})$ fraction of vertices in $C_i$ are strong (with respect to $C_i$). 
		
		Now let us consider the sample set $S$. Recall that $|S|=\frac{c\cdot\log n}{\sqrt{\varepsilon/\phi}}=\Omega(\frac{\log n}{\sqrt{\varepsilon'}})$ for some large constant $c>0$. Let $T_i=S\cap C_i$ and let $S_i'\subset T_i$ denote the set of vertices in $T_i$ that are strong with respect to $C_i$. By Chernoff bound, we have that with probability at least $1-1/n^4$, for any $i$ such that $|C_i|\geq|D_i|\geq 3\sqrt{\varepsilon'}n$,
		\begin{eqnarray*}
			(1-\frac{\cst}{2})\frac{|C_i|}{n}\cdot |S| &\leq&	|T_i|\leq (1+\frac{\cst}{2})\frac{|C_i|}{n}\cdot |S|,\\
			(1-\cst)\frac{|C_i|}{n}\cdot |S|<(1-\frac{\cst}{2})(1-\frac{\cst}{2})\frac{|C_i|}{n}\cdot |S|&\leq& |{S}_i'|\leq (1+\frac{\cst}{2})\frac{|C_i|}{n}\cdot |S|.
		\end{eqnarray*}
		In the following, we will condition on event that the above two inequalities hold.
		
		Now recall that $\tau_j=3\sqrt{\varepsilon'}(1+\frac{\cst}{2})^j$, for $0\leq j\leq J$, where $J=O(\frac{\log(\phi/\varepsilon)}{\kappa})$ is the maximum integer $j$ such that $\tau_j\leq 1$. Let $j_i$ denote the index such that $|C_i|\in [\tau_{j_i} n, \tau_{j_i+1}n)$. Thus, $|{S}_i'|\geq (1-\cst)\tau_{j_i}|S|$. 
		
		Let $v,u$ be two vertices in ${S}_i'$. By Lemma~\ref{lemma:strongvertices}, we have that $\pp_u^t(S_u^{\theta_0})\geq 1/2, \pp_v^t(S_v^{\theta_0})\geq 1/2$, and $\rcp_{\theta_0}(u,v)\geq (1-5\sqrt{\cst})/|C_i|$. By the assumption that $\kappa=100\cdot\delta_0^2$ and Lemma~\ref{lemma:estimatercp}, we obtain that with probability at least $1-\frac{1}{n^4}-\exp(-\Theta(\sqrt{n}))$, \textsc{EstimateRCP}$(G,u,v,\theta_0, \delta_0, t)$ will output a value $\rcp'(u,v)$ that is at least $(1-5\sqrt{\cst})(1-\frac{\sqrt{\cst}}{10})/|C_i|>\frac{1-\cst/2}{|C_i|}>\frac{1-\cst/2}{\tau_{j_i+1}n}\geq \frac{1-\cst}{\tau_{j_i}n}$. That is, with probability at least $1-\frac{1}{n^3}$, in the similarity graph $H$, the induced subgraph $H[{S}_i']$ will form a complete graph with at least $(1-\cst)\tau_{j_i} |S|$ vertices such that for each pair $u,v\in {S}_i'$, $\rcp'(u,v)\geq\frac{1-\cst}{\tau_{j_i}n}$. Therefore, in our sample, the set $S_i'$ will be recognized as a subgraph of a core (corresponding to $C_i$), which is a maximal clique with edge weight at least $\frac{1-\cst}{\tau_{j_i}n}$. %

		Now once a vertex $v\in \widehat{D}_i$ is queried (for checking if it is outlier or not), then by using similar argument as above, we can guarantee that with probability at least $1-\frac{1}{n^3}$, for all $u\in S_i'$, the \textsc{EstimateRCP} will output $\rcp'(u,v)$ satisfying that $\rcp'(u,v)\geq \frac{1-\kappa}{\tau_{j_i}n}$. Thus, the algorithm will detect the core (corresponding to $C_i$) for $v$. Furthermore, for any vertex $v$ that is strong with respect to $C_i$, it holds that for any $C_j$ with $j\neq i$, there can be at most $\kappa |C_j|$ vertices $u\in C_j$ with $\rcp_{\theta_0}(u,v)>\frac{1-5\sqrt{\kappa}}{|C_j|}$, this is true since the total probability mass on $C_j$ of the random walk distribution from $v$ is at most $\kappa$. This ensures that there will be a unique core corresponding to $v$. Let $\sigma:[h']\rightarrow[h']$ denote the corresponding bijection between $\{C_1,\cdots,C_{h'}\}$ and the cores $\{S_1,\cdots, S_{h'}\}$ found by the algorithm. By union bound, we have that with probability at least $1-\frac{1}{n}$, for each strong vertex $v\in S_i'$, the algorithm will answer the corresponding index $\sigma(i)$ to the query \textsc{WhichCluster}($v$). %
	\end{proof}
\end{proof}

\paragraph{Running time and query complexity.} Note that in the learning phase, we need to invoke the procedure \textsc{EstimateRCP} for $|S|\times |S|=O(\frac{k^4\ln^2 k \cdot \phi\log^2n}{\varepsilon})$ times, and each invocation takes time $O(\sqrt{n}t\log^2n)$, which in total takes time $O(\sqrt{n}\frac{\log n}{\phi^2}\cdot \frac{k^4\ln^2 k\cdot \phi\log^2n}{\varepsilon})=O(\sqrt{n}\frac{k^4\ln^2 k\cdot\log^3 n}{\phi\varepsilon})$. Finding the cores in the similarly graph can be implemented by a simple greedy algorithm？？？？, which can be implemented in $O(\poly(|S|))=O(\poly(\frac{k\cdot\phi \log n}{\varepsilon}))$ time. %
Thus, the query complexity and running time in the learning phase is dominated by $O(\sqrt{n}\cdot \poly(\frac{k\cdot \log n}{\varepsilon \phi}))$, which, by similar arguments, also upper bounds the query complexity and running time on each query vertex $w$ in the query phase.

\paragraph{Remark.} From Lemma \ref{lemma:reported_outliers} and its proof, we note that in order to guarantee that $h'\geq1$, i.e., there exists at least one good cluster $D_i$, we need to set $\varepsilon =O(\frac{\phi}{k^2})$ (so that $(1-\varepsilon')n/k\geq 3\sqrt{\varepsilon'}n$). Thus our algorithm has non-trivial guarantee only if the adversary does not perturb the graph too much. Suppose that there are $h\leq k$ ground-truth clusters $C_1,\cdots,C_h$ and the adversary perturbs an $\varepsilon$-fraction on intra-cluster edges. In order to recover for \emph{each} $C_i$, a subset $P_i$ that is close to $D_i\subseteq C_i$, then we need to require that $\min_{i\in h} |D_i|\geq 3\sqrt{\varepsilon'}n$, which can be satisfied if $\varepsilon = O(\phi \cdot (\frac{\min_{i\in h} |D_i|}{n})^2)$. 

We further remark that our algorithm can only be able to (partially) recover the large clusters, say of size at least $\Omega(\varepsilon n)$. This is the case as for any small cluster (of size $o(\varepsilon n)$), it can be completely hidden or destroyed by the adversary. Currently, our analysis shows that our algorithm can recover the cluster of size $\Omega(\sqrt{\frac{\varepsilon}{\phi}}n)$. It will be an interesting question to design a robust clustering oracle that can recover smaller clusters (i.e., of size in the range $[\Omega(\varepsilon n), o(\sqrt{\frac{\varepsilon}{\phi}}n)]$). %

\section{The Local Reconstruction Algorithm}\label{sec:algorithm}
In this section, we present our reconstruction algorithm, which will be built upon our robust clustering oracle algorithm in Section~\ref{sec:oracle} and consists of two phases: the \emph{learning} phase, that learns the cores (corresponding to clusters in the clusterable part) of the graph, and the \emph{query} phase, which first checks if the queried vertex belongs to any of the learned cores or not, and then output its neighbors in the amended clusterable graph accordingly. We need the following tool of explicit construction of expanders. %

\paragraph{Explicit expanders.} For any vertex set $V=[n]$, we let $G_{\exp}=(V, E_{\exp})$ denote a graph on $V$ with maximum degree at most $16$ such that for any set $S$ in $G_{\exp}$ with $|S|\leq n/2$, it holds that $|E_{\exp}(S,V\setminus S)|\geq \eta |S|$, for some constant $\eta > 0$. It is known (see e.g.,~Lemma 6 in~\cite{KPS13:noise} which builds upon~\cite{GG81:explicit}) that such an expander $G_{\exp}$ also exists and can be \emph{explicitly} constructed in the sense that for any specified vertex $v$, one can find all neighbors of $v$ in $G_{\exp}$ in $\poly(\log n)$ time. 

In the following, given a graph $G$, we let $G_{\exp}$ denote an explicit expander graphs on the same vertex set as $G$. We call vertices $G$ or $G_{\exp}$-neighbors of a vertex $v$, depending on the graph under consideration. 
\vspace{-1em}

\begin{center}
	\begin{tabular}{|p{\textwidth}|}
		\hline
		\textbf{Local reconstruction:}  \\
		\hline
		\begin{tightenumerate}%
			\item Run \textsc{LearnCore}$(G,d,k,\phi,\varepsilon)$.
			\item For each query \textsc{NewNeighbors}$(G,w)$:
			\begin{tightenumerate}
	\item If the learning phase outputs \textbf{fail}, then output all $G_{\exp}$-neighbors and $G$-neighbors of $w$.
\item %
Otherwise, run \textsc{CheckCore}($H,w$). 
\begin{tightenumerate}
	\item If $w$ is reported as \textbf{Outlier}, then add all the $G_{\exp}$-edges $(w,u)$ incident to $w$; 
	\item Otherwise, for each vertex $u$ that is a $G_{\exp}$-neighbor of $w$, run \textsc{CheckCore}($H,u$). If $u$ is reported as \textbf{Outlier}, then add edge $(w,x)$.
\end{tightenumerate}
Output all neighbors added to $w$ and all $G$-neighbors of $w$.
\vspace{-1em}
			\end{tightenumerate}				
		\end{tightenumerate}\\
		\hline
	\end{tabular}
\end{center}

Note that the algorithm should be implemented by first taking as input a random seed $s$, which is fixed once for all (and used for sampling vertices in the learning phase and performing random walks), and then on any query vertex $v$, \emph{deterministically} outputting the neighborhood of $v$ in the graph $G'$. %
By construction, if an edge $(u,v)$ is added, then on query vertex $u$, $v$ will be output as a neighbor of $u$ and vice versa. Therefore, the algorithm is independent of the order of queries and the answer will be globally consistent. %
\subsection{Analysis of the Local Reconstruction Algorithm}\label{sec:analysis}
In the following, we show the performance guarantee of the above algorithm and prove Theorem~\ref{thm:main}. We first note that the running time and query complexity can be analyzed in the same way as in the proof Theorem~\ref{thm:oracle}. %

It follows from the definition of $G_{\exp}$ that the maximum degree of $G'$ is bounded by $d+16$, as $G_{\exp}$ has maximum degree at most $16$ and for each vertex $u$ that is found to be an outlier, we will add all of its $G_{\exp}$-neighbors to $u$. 

Recall from the description of our algorithm that $\cst>0$ is a sufficiently small universal constant. If $\varepsilon> \frac{\phi\kappa^2}{100}$ (i.e., the noise is too much), then by our algorithm, the learning phase will output \textbf{fail}. Furthermore, on query any vertex $u$, the query phase will output all of its $G$ and $G_{\exp}$ neighbors of $u$. Thus, $G'$ is a complete hybridization of $G$ and $G_{\exp}$. Note that for any set $S\subset V$, $|E'(S,\bar{S})|\geq |E_{\exp}(S,\bar{S})|$, where $E'$ and $E_{\exp}$ denote the set of edges in $G'$ and $G_{\exp}$ respectively. Thus, it holds that if $|S|\leq \frac{n}{2}$,  $\phi_{G'}(S)=\frac{|E_{\exp}(S,\bar{S})|}{d|S|}\geq \frac{\eta}{d}$, where we used the fact that for any set $S$ with $|S|\leq \frac{n}{2}$ in $G_{\exp}$, $|E_{\exp}(S,\bar{S})|\geq \eta |S|$. Therefore, the resulting graph $G'$ is $(1,\frac{\eta}{d},0)$-clusterable. Furthermore, the number of edges added to $G$ is at most $16n/2=8n= O(\min\{1,k\sqrt{\varepsilon/\phi}\}\cdot n)$ as $\varepsilon>\frac{\phi\kappa^2}{100}$. Thus, in this case, the statement of our theorem holds.

In the following, we prove the rest properties as listed in Theorem~\ref{thm:main} for the more interesting case that $\varepsilon \in [\Omega(\frac{\phi}{{n}}),   \frac{\phi\kappa^2}{100}]$. %

In this case, the description of the local reconstruction algorithm, the number of added edges is $16$ times the number of vertices that are reported as outliers, and thus by Lemma~\ref{lemma:reported_outliers}, is at most $16\times 40k\sqrt{\frac{\varepsilon}{\phi}}n = 640k\sqrt{\frac{\varepsilon}{\phi}}n$. Now we analyze the cluster structure of the resulting graph. 

\paragraph{Definition and property of weak vertices.} Let $\varepsilon':=\frac{6\varepsilon}{\phi}<\frac{\cst^2}{100}$. We introduce the following definitions of weak vertex for the analysis, which was inspired by the corresponding definitions for noisy expander graphs in~\cite{KPS13:noise}. The main difference here is that we carefully take the size of clusters into consideration. 
\begin{definition}	
We call a vertex $v$ \emph{weak} vertex, if for any subset $A$ with $|A|\geq \frac{2\varepsilon'}{3} n$, it holds that $\norm{\bb_v^{t}-\uni_{A}}_{\TV} \geq 1/4.$
\end{definition}

In order to analyze the cluster structure of the resulting graph $G'$, we need the following property of weak vertices.
\begin{lemma}\label{lemma:weak}
With probability at least $1-n^{-3}$, it holds that for any weak vertex $u$, the algorithm will report $u$ as an outlier.
\end{lemma}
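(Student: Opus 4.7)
The plan is to argue the contrapositive: any vertex $u$ not reported as outlier is not weak. Suppose $u$ is not reported as outlier, so \textsc{CheckCore}$(H,u)$ identified a unique core $S_i$ (with scale index $j_i$) satisfying $\rcp'(u,v) \ge \frac{1-\kappa}{\tau_{j_i} n}$ for every $v \in S_i$. I will condition on the $1-n^{-3}$ success event from (a slight tightening of) Claim~\ref{claim:strongaccept}, under which each core $S_i$ is backed by a true cluster $C_i$ with $|C_i| \in [\tau_{j_i} n, \tau_{j_i+1} n)$, and essentially all $v \in S_i$ are strong with respect to $C_i$, i.e.\ $\|\pp_v^t - \uni_{C_i}\|_{\TV} \le \kappa$. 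Since $|C_i| \ge \tau_{j_i} n \ge 3\sqrt{\varepsilon'} n \ge \frac{2\varepsilon'}{3}n$, the set $A := C_i$ is a legitimate candidate for refuting weakness, so it suffices to show $\|\pp_u^t - \uni_{C_i}\|_{\TV} < 1/4$.

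Lemma~\ref{lemma:estimatercp} first upgrades the $\rcp'$-hypothesis to $\rcp_0(u,v) \ge (1-O(\kappa))/|C_i|$ for every $v \in S_i$. I will use this in two ways. Averaging over $v \in S_i$ and exploiting that the mean of $\pp_v^t$ over strong $v$ is $O(\kappa)$-close in TV to $\uni_{C_i}$ yields $\langle \pp_u^t, \uni_{C_i}\rangle \ge (1-O(\kappa))/|C_i|$, hence $\pp_u^t(C_i) \ge 1 - O(\kappa)$; the usual pointwise loss from TV is absorbed because the $\rcp_0$-truncation to $S_u^0$ already excludes any atoms of $\pp_u^t$ above $1/\sqrt{n}$. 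Separately, Cauchy-Schwarz applied to a single pair $(u,v)$ with strong $v$ and $\|\pp_v^t\|_2^2 \le (1+O(\kappa))/|C_i|$ gives $\|\pp_u^t\|_2^2 \ge (1-O(\kappa))/|C_i|$. The algebraic identity
\[
\sum_{w \in C_i}\Bigl(\pp_u^t(w) - \tfrac{\pp_u^t(C_i)}{|C_i|}\Bigr)^2 \;=\; \sum_{w \in C_i}\pp_u^t(w)^2 - \tfrac{\pp_u^t(C_i)^2}{|C_i|},
\]
combined with $\pp_u^t(C_i) \ge 1 - O(\kappa)$, bounds its left side by $O(\kappa)/|C_i|$; a further application of Cauchy-Schwarz converts this $\ell_2$-bound into an $\ell_1$-bound of $O(\sqrt{\kappa})$ between $\pp_u^t$ restricted to $C_i$ and $\pp_u^t(C_i)\,\uni_{C_i}$. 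Adding the $O(\kappa)$ residual mass outside $C_i$ yields $\|\pp_u^t - \uni_{C_i}\|_{\TV} = O(\sqrt{\kappa}) < 1/4$, since $\kappa = 100\delta_0^2$ is a sufficiently small universal constant, which contradicts weakness.

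A union bound over the at most $n$ query vertices $u$ and the $\poly(n)$ calls to \textsc{EstimateRCP} (each failing with probability at most $n^{-4}$) produces the stated $1 - n^{-3}$ success probability. The main technical obstacle is that $\rcp_0$ is truncated to $S_u^0 \cap S_v^0$, so both the averaging and Cauchy-Schwarz steps must carefully account for the possible heavy atoms of $\pp_u^t$ (those above $1/\sqrt{n}$). This is manageable because the assumption $\varepsilon \ge \Omega(\phi/n)$ yields $|C_i| \ge 3\sqrt{\varepsilon'}n \gtrsim \sqrt{n}$, making the threshold $1/\sqrt{n}$ comparable to the target uniform mass $1/|C_i|$, and the RCP lower bound itself forces $\pp_u^t$ to place only $O(\sqrt{\kappa})$ mass on such heavy atoms; the universal constants $\kappa, \delta_0, \theta_0$ are then chosen small enough that all slack combines to strictly less than $1/4$ in the final TV bound.
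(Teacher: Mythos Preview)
Your contrapositive strategy has a structural gap that the paper's direct argument avoids. You assume that any core $S_i$ returned by \textsc{FindCore} is ``backed by a true cluster $C_i$'', and then try to exhibit $A=C_i$ as a witness that $u$ is not weak. But Claim~\ref{claim:strongaccept} only shows the forward direction: each large genuine cluster produces a core containing its strong sampled vertices. It does \emph{not} rule out that \textsc{FindCore} outputs additional cores lying entirely in the noise or straddling clusters, and a weak vertex could in principle have high $\rcp'$ to every member of such a spurious core. The paper's proof sidesteps this entirely: it shows, purely from the definition of weakness, that for every scale $\tau_j$ at most $\tfrac{7}{8}\tau_j n$ vertices $v$ can satisfy $\rcp_0(u,v)\ge \tfrac{7}{8\tau_j n}$, so in the sample at most $\tfrac{7}{8}\tau_j|S|$ do. Since any core at level $j$ has size at least $(1-\kappa)\tau_j|S|>\tfrac{7}{8}\tau_j|S|$, the weak vertex $u$ cannot match \emph{any} core, real or spurious. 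This counting argument is what makes the lemma robust to whatever \textsc{FindCore} happened to output.

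Even granting your structural assumption, the analytic steps do not close. First, weakness is defined via the two-phase walk $\qq_u^t$, not $\pp_u^t$; you would need a separate argument to pass from $\|\pp_u^t-\uni_{C_i}\|_{\TV}<1/4$ to the same for $\qq_u^t$. Second, your Cauchy--Schwarz step yields only the \emph{lower} bound $\|\pp_u^t\|_2^2\ge (1-O(\kappa))/|C_i|$, whereas bounding the variance identity $\sum_{w\in C_i}\pp_u^t(w)^2-\pp_u^t(C_i)^2/|C_i|$ from above requires an \emph{upper} bound on $\sum_{w\in C_i}\pp_u^t(w)^2$, which you never establish. Third, the averaging step fails quantitatively: with $\bar{\p}$ the average of $\pp_v^t$ over $v\in S_i$, you get $\langle \pp_u^t,\bar{\p}\rangle\ge (1-O(\kappa))/|C_i|$, but $\langle \pp_u^t,\bar{\p}-\uni_{C_i}\rangle$ is only controlled by $\|\pp_u^t\|_\infty\cdot 2\kappa$; even after truncating to $S_u^0$ this gives an error of order $\kappa/\sqrt{n}$, which dominates $1/|C_i|$ whenever $|C_i|\gg \sqrt{n}$ (and clusters can have size up to $n$). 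So you cannot conclude $\pp_u^t(C_i)\ge 1-O(\kappa)$ from the hypotheses available.
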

\begin{proof}
	We first show that if $u$ is weak, then for any subset $A$ with $|A|\geq \frac{2\varepsilon'}{3} n$ vertices, at most $7/8|A|$ vertices $v$ in $A$ satisfy $\qq_u^t(v)\geq \frac{7/8}{|A|}$. This is true since otherwise, there will be more than $7/8|A|$ vertices $v$ satisfy $\qq_u(v)\geq \frac{7/8}{|A|}$. If we let $A_1\subseteq A$ (resp. $A_2\subseteq A$) denote the set of vertices $v$ in $A$ such that $\qq_u^t(v)\leq \frac{1}{|A|}$ (resp. $\qq_u^t(v)>\frac{1}{|A|}$), then
	\begin{eqnarray*}
		\norm{\qq_u^t-\uni_{A}}_{TV}&=&\frac{1}{2}\left(\sum_{v\in A_1}(\frac{1}{|A|}-\qq_u^t(v)) + \sum_{v\in A_2}(\qq_u^t(v)-\frac{1}{|A|}) + \sum_{v\in V\setminus A}\qq_u^t(v)\right)\\
		&=&\sum_{v\in A_1}(\frac{1}{|A|}-\qq_u^t(v))\\
		&<& (1-7/8)|A|\cdot \frac{1}{|A|} + |A|\cdot \frac{1-7/8}{|A|} =2(1-7/8)<\frac{1}{4},
	\end{eqnarray*}
	which is a contradiction. By the definitions of reduced collision probability $\rcp_{\theta_0}(u,v)$ and relations of $\pp_u^t$ and $\qq_u^t$, we have that $\rcp_0(u,v)\leq \qq_u^t(v)$, and thus there can be at most $\frac{7}{8}|A|$ vertices $v$ in $A$ with $\rcp_0(u,v)\geq \frac{7}{8|A|}$. Note that this property holds for all sets $A$ with $|A|\geq \frac{2\varepsilon'}{3} n$.
	
	For each $0\leq j\leq J$, we let $T_j$ denote the set of vertices $v$ such that $\rcp_0(u,v)\geq \frac{7}{8\tau_{j}n}$. Recall that $\tau_j=3\sqrt{\varepsilon'}(1+\frac{\kappa}{2})^j$, for $0\leq j\leq J$.

	If $|T_j|\geq \tau_j n>\frac{2\varepsilon'}{3}n$, then for all vertices $v\in T_j$, it holds that $\rcp_0(u,v)\geq \frac{7}{8\tau_j n}\geq \frac{7}{8|T_j|}$, which is a contradiction. If $\frac{7\tau_j}{8}n< |T_j|<\tau_j n$, then we can add arbitrarily at most $\frac{1}{8}\tau_jn$ vertices to $T_j$ to obtain a set $A$ such that $|A|= \tau_j n >\frac{2\varepsilon'}{3}n$, and for at least $\frac{|T_i|}{|A|}> \frac{7}{8}$ fraction of vertices $v$ in $A$, it holds that $\rcp_0(u,v)\geq \frac{7}{8\tau_{j}n}= \frac{7}{8|A|}$, which is a contradiction. Therefore, it must hold that $|T_j|\leq \frac{7\tau_jn}{8}$. %

	That is, for the weak vertex $u$, it holds that for each $0\leq j\leq J$, there will be at most $\frac{7}{8}\tau_jn$ vertices $v$ with $\rcp_0(u,v)\geq \frac{7}{8\tau_j n}$. Thus, there will be at least $(1-\frac{7}{8}\tau_j)n$ vertices $v$ with $\rcp_0(u,v)\leq \frac{7}{8\tau_j n}$. We can further guarantee that with probability at least $1-1/n^2$, for any such pair $u,v$, the procedure \textsc{EstimateRCP} (with parameter $\delta\leq \frac{\sqrt{\cst}}{10}$) either aborts or outputs an estimate $\rcp'(u,v)\leq (1+\frac{\sqrt{\kappa}}{10})\frac{7}{8\tau_j n}\leq \frac{8}{9\tau_j n}$, for any $0\leq j\leq J$. Finally, with probability at least $1-\frac{2}{n^2}$, in our sample set $S$, at least $(1-\frac{7}{8}\tau_j)$ fraction of vertices $v$ satisfy that $\rcp'(u,v)\leq \frac{8}{9\tau_j n}$, or equivalently, less than $\frac{7}{8}\tau_j$ fraction of vertices $v$ satisfy that $\rcp'(u,v)\geq \frac{8}{9\tau_j n}$. This implies that our algorithm will report $u$ as an outlier.    
\end{proof}

\paragraph{Cluster structure of $G'$.} Now we are ready to show that the resulting graph $G'$ from our local reconstruction algorithm can be partitioned into at most $k$ parts, each of which has relatively large inner conductance.

\begin{lemma}\label{lemma:conductance}
Let $\phi^*=\frac{a_{\ref{lemma:conductance}}\varepsilon\phi}{k^4\log n}$ for some sufficiently small constant $a_{\ref{lemma:conductance}}$. %
If $G$ is an $\varepsilon$-perturbation of a $(k,\phi,\frac{a_{\ref{thm:rw_perturbed}}\varepsilon\kappa^4\phi}{3k^3\log n})$-clusterable graph, then the resulting graph	$G'$ from the local reconstruction algorithm is $(k,\phi^*,1)$-clusterable.%
\end{lemma}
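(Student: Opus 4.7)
The plan is to exhibit a partition $Q_1,\ldots,Q_{h'}$ of $V$ with $h' \leq k$ and $\phi(G'[Q_i]) \geq \phi^*$ for each $i$, starting from the deterministic partition $D_1,\ldots,D_{h'},B'$ supplied by Lemma~\ref{lemma:reported_outliers}, in which each $D_i$ satisfies $\phi(G[D_i]) \geq \phi/2$. I would distribute the small remainder $B'$ (of size $O(k\sqrt{\varepsilon/\phi}\,n)$) among the $D_i$'s to form $Q_i \supseteq D_i$. The simplest choice is $Q_1 := D_1 \cup B'$ and $Q_i := D_i$ for $i \geq 2$; a more refined assignment (for example, by majority of $G_{\exp}$-neighbors) may be needed to control expander-edge leakage as discussed below. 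For $i \geq 2$, since $G'$ adds edges only and has maximum degree at most $d+16$, one immediately gets $\phi(G'[Q_i]) \geq \phi(G[D_i]) \cdot d/(d+16) \geq \phi/4$, which dominates $\phi^*$.

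For the substantive case $Q_1 = D_1 \cup B'$, fix any cut $S \subseteq Q_1$ with $|S| \leq |Q_1|/2$ and decompose $S = S_D \sqcup S_B$ with $S_D := S \cap D_1$ and $S_B := S \cap B'$. I intend to bound $|E'(S, Q_1 \setminus S)|$ using two contributions: (i) the original $G$-edges provided by the inner conductance of $D_1$, giving $|E_G(S_D, D_1 \setminus S_D)| \geq (\phi/2) d \cdot \min(|S_D|, |D_1| - |S_D|)$, and (ii) the $G_{\exp}$-edges added to weak vertices. For the latter, Lemma~\ref{lemma:weak} guarantees that every weak vertex is reported as an outlier and hence receives all of its $G_{\exp}$-edges in $G'$; with $W$ denoting the set of weak vertices, the $\eta$-expansion of $G_{\exp}$ yields $\Omega(\eta)\,|S \cap W|$ extra crossing edges (modulo the leakage issue addressed below). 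A case split on whether $|S_D|$ or $|S_B|$ dominates, together with the contrapositive of the weak-vertex definition (which, combined with the mixing behavior of Theorem~\ref{thm:rw_perturbed} and the construction of $B'$, forces almost all of $S_B$ to actually be weak), converts the larger of (i) and (ii) into a lower bound of the form $\Omega(\phi/k)\cdot d|S|$ or $\Omega(\eta)\,|S|$ on $|E'(S, Q_1 \setminus S)|$. The stated factor $\phi^* = \Omega(\varepsilon\phi/(k^4\log n))$ then emerges after absorbing the $1/k$ loss from distributing $B'$, the $1/\log n$ loss from the random-walk length $t = \Theta(\log n/\phi^2)$ inside Theorem~\ref{thm:rw_perturbed}, and a further $\poly(k)$ factor controlling the worst-case ratio between $|S|$ and $|S \cap W|$.

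The main obstacle is controlling the leakage of $G_{\exp}$-edges from $S \cap W$ into $V \setminus Q_1$, since $G_{\exp}$ is chosen obliviously of the clustering and its edges from a given weak vertex may cross into other $Q_j$. I expect to handle this either by refining the assignment of $B'$ so each weak vertex lands in the cluster containing a majority of its $G_{\exp}$-neighbors (exploiting that $G_{\exp}$ has constant maximum degree $16$, so a majority must exist), or by an averaging argument across the $h' \leq k$ clusters showing that the aggregate expander contribution can be apportioned so that an $\Omega(1/k)$ fraction of each weak vertex's $G_{\exp}$-edges stays inside its assigned $Q_i$. Either way the leakage costs at most another $\poly(k)$ factor, which is consistent with the target $\phi^*$.
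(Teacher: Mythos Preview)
Your approach diverges from the paper's, and the divergence matters because the leakage problem you flag is exactly what the paper's argument is engineered to avoid. The paper does \emph{not} start from the partition $D_1,\ldots,D_{h'},B'$ of Lemma~\ref{lemma:reported_outliers}. Instead it builds a fresh partition $\{C_1,\ldots,C_h\}$ of all of $V$ by recursively splitting along cuts $S$ with $|S|\ge \gamma n$ (where $\gamma=\varepsilon'/3$) and $\phi_G(S)\le \phi_{\myout}$, and then proves the following intermediate statement: for every $S\subset C_i$ with $|S|\le |C_i|/2$, the \emph{outer} conductance $\phi_{G'}(S)\ge k\phi^*$. Because this is outer conductance in the whole graph $G'$, the $G_{\exp}$-expansion of the weak vertices in $S$ contributes in full---there is no leakage to other parts. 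The small-$|S|$ case ($|S|\le \gamma n$) is handled by a dichotomy: either most of $S$ is weak, in which case the $G_{\exp}$-edges give $\Omega(\eta)|S|$ boundary; or not, in which case one shows by contradiction (via the escape-probability argument of~\cite{CS10:expansion,KPS13:noise}) that $\phi_G(S)\ge k\phi^*$, using that $|S|\le \gamma n$ forces every candidate set $A$ with $|A|\ge 2\gamma n$ to satisfy $|A\setminus S|\ge |A|/2$. Only after this outer-conductance bound is established does the paper convert it to inner conductance, by an iterative refinement that repeatedly moves any $S\subset C_i$ with $|E'(S,C_i\setminus S)|<|E'(S,C_j)|$ into $C_j$; termination follows because the total cut size strictly decreases, and at termination $|E'(S,C_i\setminus S)|\ge |E'(S,V\setminus S)|/k$, costing only a factor $k$.

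Against this, your plan has two concrete gaps. First, the assertion that ``almost all of $S_B$ are weak'' is not supported: membership in $B'$ is a structural fact (vertex lies outside every large $D_i$), whereas weakness is a condition on the two-phase walk distribution $\qq_v^t$, and these are not linked by any lemma in the paper. In particular, a well-connected $B_j$ of size close to $\varepsilon' n$ may contain vertices that are \emph{not} weak (take $A=B_j$), yet sit entirely inside your $B'$. Second, neither of your leakage fixes is sound as stated. Majority-assigning a weak vertex to the $Q_i$ containing most of its $16$ $G_{\exp}$-neighbors does not guarantee that the resulting $Q_i$ has large inner conductance: you still have to control every cut inside the resulting $Q_i$, and the majority rule gives no such control. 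The ``averaging across $h'\le k$ clusters'' idea is precisely what the paper's refinement step implements, but it requires the outer-conductance bound as input; you cannot average without first knowing that $|E'(S,V\setminus S)|$ is large. So the missing idea is exactly the two-stage structure: prove $\phi_{G'}(S)$ large first (no leakage), then refine to pull a $1/k$ fraction of that boundary inside the part.
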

\begin{proof}
For analysis, we perform the following procedure on the input graph $G$. Let $\gamma=\frac{\varepsilon'}{3}=\frac{2\varepsilon}{\phi}$. We start with the set $U:=V$ and a partitioning $\PP:=\{V\}$ of $G$. Then if there exists a set $U\in \PP$ and $S\subseteq U$ such that $\gamma n\leq |S|\leq \frac{|U|}{2}$ and $\phi_G(S)\leq \phi_\myout:=\frac{a_{\ref{thm:rw_perturbed}}\varepsilon\kappa^4\phi}{3k^3\log n}$, then we set $\PP=(\PP \setminus \{U\}) \cup\{S, U\setminus S\}$. We repeat until no such $S$ can be found. Let $\PP=\{C_1,\cdots, C_h \}$ denote the final partitioning of $V$.
	
	Note that for any $C_i$, if $U$ is the subset that contains $C_i$ and is then split into $C_i$ and $U\setminus C_i$, then $|U|\geq 2\gamma n$ and thus $|C_i|\geq \gamma n$ and $|U\setminus C_i|\geq \frac{|U|}{2}\geq \gamma n$ by the construction. This implies that at the end of the above procedure, it holds that $\min_i|C_i|\geq \gamma n$. 
	
	We further note that $|\mathcal{P}|=h\leq k$. This is true since otherwise, in order to make $G$ become a $(k,\phi,\phi_\myout)$-clusterable graph, one has to patch up at least one set $C_i$ to other parts, that is, we need to add at least $\frac{3\phi}{4}\cdot d \min_i\{|C_i|\}\geq \frac{3\phi}{4}\cdot d\cdot \frac{2\varepsilon}{\phi} n>\varepsilon d n$ edges, which is a contradiction to the assumption that $G$ is an $\varepsilon$-perturbation of a $(k,\phi,\phi_\myout)$-clusterable graph.

	Now let us consider the partition $\mathcal{P}$ in the constructed graph $G'$. Observe that by the description of our algorithm, for any set $S$ of vertices $|E'(S,\bar{S})|\geq |E(S,\bar{S})|$, where $E'$ and $E$ denote the set of edges in $G'$ and $G$ respectively. %
	In particular, Lemma~\ref{lemma:weak} implies that the set of $G'$-neighbors of any weak vertex $u$ is a superset of the set of $G$-neighbors of $u$, as $u$ will be reported as an outlier by the algorithm and the $G_{\exp}$-neighbors of $u$ will be added to $G'$.

	We have the following claim. 
	\begin{claim}
		In the graph $G'$, for each $C_i$, and any subset $S\subset C_i$ with $|S|\leq \frac{|C_i|}{2}$, it holds that $\phi_{G'}(S)\geq k\phi^*$.
	\end{claim} 
	\begin{proof}
If $\gamma n \leq |S|\leq \frac{|C_i|}{2}$, then by our construction of $C_i$, we have that $\phi_G(S)\geq \phi_\myout$. Thus, $\phi_{G'}(S)=\frac{|E'(S,V\setminus S)|}{d|S|}\geq \frac{|E(S,V\setminus S)|}{d|S|}\geq \phi_\myout\geq k\phi^*$. Now let us consider the case that $|S|\leq \gamma n$. 

If there are less than $(1-\frac{\eta}{2})$ fraction of vertices in $S$ are weak, then we show that $\phi_G(S)\geq k\phi^*$. Suppose this is not the case, that is, $\phi_G(S)<\frac{a_{\ref{lemma:conductance}}\varepsilon\phi}{k^3\log n}\leq \frac{\eta}{16t}$, if we set $a_{\ref{lemma:conductance}}$ to be a sufficiently small constant. By the proof of Theorem 4 in~\cite{KPS13:noise} (which in turn is based on the proof of Lemma 4.7 in~\cite{CS10:expansion}), we know that for at least $(1-\eta/2)$ fraction of vertices $u$ in $S$, the probability that a $\qq_u^{t}$-random walk that starts at $u$ will end up in $\bar{S}$ is at most $1/4$. Now let $A$ be any set with $|A|\geq \frac{2\varepsilon'}{3} n$. Since $|S|\leq \frac{\varepsilon'}{3} n$, it holds that $|A\setminus S|\geq \frac{1}{2}|A|$. Thus, we have that $\uni_A(A\setminus S)\geq \frac{1}{2}$. This gives that $\norm{\qq_u^t-\uni_A}_\TV\geq \frac{1}{4}$, which implies that such a vertex $u$ is weak. Thus, $S$ contains at least $(1-\eta/2)$ fraction of weak vertices, which is a contradiction. This implies that $\phi_{G'}(S)\geq \phi_G(S)\geq k\phi^*$.

If there are more than $(1-\eta/2)$ fraction of weak vertices, denoted by $W$, in $S$, then the number of $G_{\exp}$-neighbors of $W$ in $G_{\exp}$ is at least $\eta|W|$. Since all these $G_{\exp}$-neighbors are also in $G'$, we have that the number of vertices outside of $S$ is at least $\eta|W|-|S\setminus W|\geq \eta(1-\eta/2)|S|-\eta/2|S|\geq \frac{\eta}{6}|S|$. Since we add all the edges in $G_{\exp}$ that are incident to $W$ to $G'$, we have that the number of edges crossing $S$ in $G'$ is at least $\frac{\eta}{6}|S|$, and thus $\phi_{G'}(S)\geq \frac{\eta}{6d}\geq k\phi^*$.
\end{proof}
	
Now based on the partition $\mathcal{P}=\{C_1,\cdots,C_h \}$ as constructed above, we find a new partition of $G'$ such that each part has large inner conductance. %
We start with the partition $\mathcal{P}=\{C_1,\cdots,C_h \}$ as constructed above and perform the following operations. If there exist $i,j\leq h$, $S\subseteq C_i$ satisfies that $i\neq j$, $|S|\leq \frac{|C_i|}{2}$ and that $|E'(S,C_i\setminus S)| < |E'(S,C_j)|$, then we set $C_i:=C_i\setminus S$ and $C_j:=T_j\cup S$. We repeat until the condition is violated. 

Note that the above process always terminates in a finite number of steps since the number of crossing edges, i.e., $\sum_{i\neq j}|E'(C_i,C_j)|$, always decreases in each iteration. Furthermore, we observe that at the end of the process, for any $1\leq i\leq h$, and any set $S\subseteq C_i$ with $|S|\leq \frac{|C_i|}{2}$, $|E'(S,C_i\setminus S)| \geq \frac{|E'(S,V\setminus S)|}{k}$. Therefore, $\phi_{G'[C_i]}(S)\geq \frac{1}{k}\phi_{G'}(S)\geq \phi^*$. This implies that for each $i$, $\phi(G'[C_i])\geq \phi^*$. 
\end{proof}

\section{Local Mixing Property of Random Walks on Noisy Clusterable Graphs: Proof of Theorem~\ref{thm:rw_perturbed}}\label{sec:proof_localmixing}
In this section, we give the proof of Theorem~\ref{thm:rw_perturbed}. To do so, we first give a property of random walks on clusterable graphs (without noise).%

\subsection{Local Mixing Property of Random Walks on Clusterable Graphs} 
We will first prove a mixing property of random walks on a clusterable graph, which says that in a clusterable graph, a random walk of appropriate length starting from a typical vertex of a large cluster will mix well inside the corresponding cluster. By a simple reduction (see Section~\ref{sec:preliminaries}), it suffices to consider a corresponding weighted $d$-regular graph for any $d$-bounded graph. %

\begin{theorem}\label{thm:rw_clusterable}
	Let $0<\alpha,\beta,\xi \leq 1$. Let $\phi_\myout\leq a_{\ref{thm:rw_clusterable}}\frac{\xi{\alpha\beta}\phi_\myin^2}{k^3\log n}$ for some sufficiently small constant $a_{\ref{thm:rw_clusterable}}>0$. Let $G$ be a weighted $d$-regular and $(k,\phi_\myin,\phi_\myout)$-clusterable graph with underlying clusters $C_1,\cdots, C_h$ for some $h\leq k$. Then for each $C_i$ with $|C_i|\geq \alpha n$, there exists a subset $C'_i\subseteq C_i$ such that $|C_i'|\geq (1-\beta)|C_i|$, and for any $v\in C'_i$, and $t= \frac{20\log n}{\phi_\myin^2}$, it holds that
	$$\norm{\p_v^{t}-\uni_{C_i}}_{\TV}\leq \xi.$$
\end{theorem}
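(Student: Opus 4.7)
The plan is to combine the Peng--Sun--Zanetti spectral structure theorem for clusterable graphs~\cite{PSZ17:partition} with a Markov-type averaging argument to pick out the ``good'' subset $C_i'$ of vertices. After the standard reduction to the weighted $d$-regular setting, $\p_v^t$ is the lazy random walk distribution, so I would expand in the eigenbasis of the lazy walk matrix $\mat{P}$:
\[
  \p_v^t \;=\; \sum_{i=1}^n \left(1-\tfrac{\lambda_i}{2}\right)^t \vect{v}_i(v)\,\vect{v}_i,
\]
and split into the low-frequency part $i\leq k$ and the tail $i>k$. The intuition is that the tail decays away rapidly, while the low-frequency part is well approximated by a linear combination of normalized cluster indicators and so is ``constant on clusters'', matching $\uni_{C_i}$.

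For the tail, I would use the fact (implicit in~\cite{PSZ17:partition}) that a clusterable graph with $\phi_\myout\ll\phi_\myin^2$ satisfies $\lambda_{k+1}=\Omega(\phi_\myin^2)$, since any $(k{+}1)$-way partition must cut some $C_j$ whose inner conductance is at least $\phi_\myin$ (turned into a spectral gap via higher-order Cheeger). With $t=20\log n/\phi_\myin^2$ this yields $(1-\lambda_{k+1}/2)^t\leq n^{-5}$, so the tail contributes only an $n^{-\Omega(1)}$ TV error, \emph{uniformly} in the starting vertex.

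For the low-frequency part, the PSZ17 structure theorem provides coefficients $\beta_{i,j}$ such that the approximate eigenvectors $\tilde{\vect{v}}_i := \sum_{j=1}^h \beta_{i,j}\,\1_{C_j}/\sqrt{|C_j|}$ satisfy $\sum_{i=1}^k \|\vect{v}_i-\tilde{\vect{v}}_i\|_2^2 = O(k\phi_\myout/\phi_\myin^2)$ and $B=(\beta_{i,j})$ is close to orthogonal. Using the Rayleigh-quotient bound $\lambda_i\leq \phi_\myout$ for $i\leq k$ (obtained by testing $\1_{C_i}/\sqrt{|C_i|}$), we also get $(1-\lambda_i/2)^t=1-O(t\phi_\myout)=1-O(\xi\alpha\beta/k^2)$. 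Substituting $\tilde{\vect{v}}_i$ for $\vect{v}_i$ and $1$ for $(1-\lambda_i/2)^t$, a direct computation gives, for $v\in C_i$ and $u\in C_m$, the approximate value $\sum_j \beta_{j,i}\beta_{j,m}/\sqrt{|C_i||C_m|}$, which by near-orthogonality of $B$ equals $\delta_{i,m}/|C_i|=\uni_{C_i}(u)$ up to lower-order error.

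The residual error depends on the per-vertex deviation $\sum_{j\leq k}(\vect{v}_j(v)-\tilde{\vect{v}}_j(v))^2$ at the starting vertex $v$. Since the total such error over all $v$ is $O(k\phi_\myout/\phi_\myin^2)$, Markov's inequality inside $C_i$ defines $C_i'\subseteq C_i$ with $|C_i'|\geq(1-\beta)|C_i|$ on which this squared error is at most $O(k\phi_\myout/(\beta\phi_\myin^2\alpha n))$. The main obstacle---and what forces the $k^3\log n$ factor in the hypothesis on $\phi_\myout$---is converting this $\ell_2$ spectral-embedding bound into an $\ell_1$ (TV) bound on $\p_v^t-\uni_{C_i}$: the support could have size $n$, so Cauchy--Schwarz costs a factor $\sqrt{n}$, giving a TV error of order $\sqrt{k\phi_\myout/(\alpha\beta\phi_\myin^2)}$, which is at most $\xi$ under the stated hypothesis on $\phi_\myout$ (with the $\log n$ slack absorbed by the tail bound).
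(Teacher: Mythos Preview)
Your approach is essentially the paper's: expand $\p_v^t$ in the eigenbasis, kill the tail via the spectral gap $\lambda_{h+1}\ge\phi_\myin^2/2$, replace the head by cluster indicators via the PSZ structure theorem, carve out $C_i'$ via Markov's inequality, and convert $\ell_2$ to TV by Cauchy--Schwarz. Two corrections are needed, one cosmetic and one substantive.

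First, the split should be at $h$, not $k$. The Rayleigh-quotient argument only gives $\lambda_i\le 2\phi_\myout$ for $i\le h$ (there are $h$ indicator test vectors); for $h<i\le k$ one already has $\lambda_i\ge\phi_\myin^2/2$, so those terms belong to the tail, and your claim ``$(1-\lambda_i/2)^t=1-O(t\phi_\myout)$ for $i\le k$'' is false when $h<k$.

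Second, and this is the real gap, your Markov step controls only
\[
Y_v:=\sum_{j\le h}\bigl(\vect{v}_j(v)-\tilde{\vect{v}}_j(v)\bigr)^2,
\]
but the error analysis also needs control on
\[
X_v:=\sum_{j\le h}\vect{v}_j(v)^2.
\]
Concretely, the $\ell_2$ error from replacing $(1-\lambda_j/2)^t$ by $1$ is at most $t\phi_\myout\sqrt{X_v}$, and the error from replacing the \emph{vectors} $\vect{v}_j$ by $\tilde{\vect{v}}_j$ while keeping the coefficient $\vect{v}_j(v)$ is at most $\sqrt{c\,h^2\phi_\myout/\phi_\myin^2}\cdot\sqrt{X_v}$. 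After the $\sqrt{n}$ Cauchy--Schwarz blow-up, the trivial bound $X_v\le 1$ is hopeless: e.g.\ the first term becomes $\sqrt{n}\cdot t\phi_\myout=\Theta(\sqrt{n}\cdot\xi\alpha\beta/k^3)$, which diverges. The paper therefore applies a second Markov bound: since $\sum_v X_v=h$, at most $\tfrac{\beta\alpha}{2}n$ vertices have $X_v\ge 2h/(\beta\alpha n)$, and $C_i'$ is defined by \emph{both} $X_v$ and $Y_v$ being small. (Alternatively, one can recover $X_v\le 2\sum_j\tilde{\vect{v}}_j(v)^2+2Y_v\le O(1/|C_i|)+2Y_v$ from the orthogonality of $B$ and $|C_i|\ge\alpha n$; either way, this step must be made explicit.)
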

We remark that \cite{ST13:local} and \cite{AOPT16:local} gave analysis for upper bounding the probability that a random walk of length $t$ from a typical vertex $v$ in a set $S$ with small conductance will escape the set $S$, and lower bounding the probability that the walk from $v$ of length $t$ stays inside $S$, respectively. It is unclear if one can use their analysis to prove the above theorem. In the following, we prove Theorem~\ref{thm:rw_clusterable} by using %
some strong spectral property of clusterable graphs, i.e., the spectral gap between $\lambda_{h+1}$ and $\lambda_h$ for some $h\leq k$, and the closeness of the space spanned by the first $h$ eigenvectors and the space spanned by the indicator vectors of clusters. More precisely, %
we need the following tools. 

\begin{lemma}[Lemma 5.2 in~\cite{CPS15:cluster} and Lemma 10 in~\cite{CKKMP18:cluster}]\label{lemma:spectral_clusterable}
	Let $G$ be a weighted $d$-regular and $(k,\phi_\myin,\phi_\myout)$-clusterable graph with underlying clusters $C_1,\cdots, C_h$ for some $h\leq k$. Then $\lambda_h\leq 2\phi_\myout$ and $\lambda_{h+1}\geq \frac{\phi_\myin^2}{2}$. %
\end{lemma}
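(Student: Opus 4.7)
The plan is to prove both bounds via the Courant--Fischer min-max characterization of the eigenvalues of $\mat{L}$, using test subspaces built from the cluster indicator vectors. Throughout I would work on the $d$-regular graph obtained by padding self-loops, so that $f^T\mat{L}f = \tfrac{1}{d}\sum_{(u,v)\in E}(f(u)-f(v))^2$, with self-loops contributing zero to this Dirichlet form.

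For the upper bound $\lambda_h \leq 2\phi_\myout$, I would take the $h$-dimensional test subspace $W := \Span(\1_{C_1},\ldots,\1_{C_h})$. For any $x = \sum_{i=1}^h \alpha_i \1_{C_i} \in W$, only inter-cluster edges contribute, so $x^T\mat{L}x = \tfrac{1}{d}\sum_{(u,v)\in E,\, u\in C_i,\, v\in C_j,\, i\neq j}(\alpha_i-\alpha_j)^2$. Applying $(\alpha_i-\alpha_j)^2 \leq 2(\alpha_i^2+\alpha_j^2)$ together with the outer-conductance bound $|E(C_i, V\setminus C_i)| \leq d\phi_\myout|C_i|$, each $\alpha_i^2$ accumulates a coefficient of at most $2|C_i|\phi_\myout$, giving $x^T\mat{L}x \leq 2\phi_\myout\sum_i \alpha_i^2|C_i| = 2\phi_\myout\|x\|_2^2$. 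Courant--Fischer then yields $\lambda_h \leq 2\phi_\myout$.

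For the lower bound $\lambda_{h+1} \geq \phi_\myin^2/2$, I would use the $(n-h)$-dimensional subspace $W^\perp := \{f \in \R^n : \sum_{v\in C_i} f(v) = 0 \text{ for every } i \leq h\}$ of functions with zero mean on each cluster, and show that every nonzero $f \in W^\perp$ has Rayleigh quotient at least $\phi_\myin^2/2$. Writing $f_i$ for the restriction of $f$ to $C_i$ (extended by zero), I would split
\[
f^T\mat{L}f \;=\; \sum_{i=1}^h f_i^T\mat{L}_{C_i}f_i \;+\; \tfrac{1}{d}\sum_{\substack{(u,v)\in E \\ u\in C_i,\, v\in C_j,\, i\neq j}}(f_i(u)-f_j(v))^2,
\]
where $\mat{L}_{C_i}$ is the $d$-normalized Laplacian of $G[C_i]$ padded to $d$-regularity by self-loops. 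Dropping the non-negative cross term and using that $f_i \perp \1_{C_i}$ on $C_i$, Cheeger's inequality applied to $G[C_i]$ (with $\phi(G[C_i]) \geq \phi_\myin$) gives $f_i^T\mat{L}_{C_i}f_i \geq \tfrac{\phi_\myin^2}{2}\|f_i\|_2^2$. Summing over $i$ and invoking Courant--Fischer completes the bound.

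The one subtlety I expect to handle carefully is the consistency of normalizations between the global $\mat{L}$ and the per-cluster operators $\mat{L}_{C_i}$. Because the paper's inner conductance already divides edge cuts by the ambient $d$ (not by the actual degrees in $G[C_i]$), padding $G[C_i]$ with self-loops up to $d$-regularity turns $\phi(G[C_i])$ into the standard Cheeger conductance of the padded graph while leaving the Dirichlet form on mean-zero functions unchanged, so the constant $1/2$ in Cheeger's inequality applies without loss. Singleton clusters are covered by the stated convention $\phi(G)=1$, since any mean-zero function on a single vertex is identically zero and the corresponding block is vacuous. Everything else is routine Rayleigh-quotient bookkeeping.
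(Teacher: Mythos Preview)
The paper does not supply its own proof of this lemma; it is quoted verbatim as Lemma~5.2 of \cite{CPS15:cluster} and Lemma~10 of \cite{CKKMP18:cluster}. Your proposal is a correct self-contained proof and is essentially the standard argument one finds in those references: test the $h$-dimensional span of cluster indicators for the upper bound on $\lambda_h$, and test the $(n-h)$-dimensional subspace of per-cluster mean-zero functions (combined with Cheeger on each padded $G[C_i]$) for the lower bound on $\lambda_{h+1}$, both via Courant--Fischer. Your handling of the normalization issue---padding each $G[C_i]$ with self-loops so that the paper's $d$-normalized inner conductance becomes the ordinary Cheeger constant of a $d$-regular graph, without altering the Dirichlet form on mean-zero vectors---is exactly right and is the only point that needs care.
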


\begin{fact}\label{fact:sum_1}
	It holds that  $\norm{\1_v}_2^2=\sum_{j=1}^n\vect{v}_j(v)^2=1$, for any $v\in V$.
\end{fact}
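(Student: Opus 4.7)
The plan is to establish the two equalities in turn; the statement is essentially a one-line consequence of Parseval's identity for the orthonormal eigenbasis of $\mat{L}$.

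First I would handle $\norm{\1_v}_2^2 = 1$ directly from the definition of the indicator vector: $\1_v$ has exactly one nonzero coordinate (a $1$ in position $v$), so $\norm{\1_v}_2^2 = \sum_{u \in V} \1_v(u)^2 = 1$. This needs no further argument.

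For the second equality, I would use the setup already laid out in the Preliminaries: $\mat{L}$ is real symmetric, hence its eigenvectors $\vect{v}_1, \ldots, \vect{v}_n$ form an orthonormal basis of $\mathbb{R}^n$. Expanding $\1_v = \sum_{j=1}^n c_j \vect{v}_j$ in this basis, the coefficients are $c_j = \langle \1_v, \vect{v}_j \rangle = \sum_{u \in V} \1_v(u)\, \vect{v}_j(u) = \vect{v}_j(v)$. Since the change-of-basis matrix is orthogonal and preserves the $\ell_2$-norm (Parseval's identity), $\norm{\1_v}_2^2 = \sum_{j=1}^n c_j^2 = \sum_{j=1}^n \vect{v}_j(v)^2$. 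Combining with the first equality gives the claimed chain.

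There is no real obstacle here: the only ingredient beyond the definitions is the orthonormality of the eigenbasis of a symmetric matrix, which is already invoked implicitly by the spectral decomposition $\mat{L} = \sum_i \lambda_i \vect{v}_i^T \vect{v}_i$ written down just above the Fact. I expect the author's proof to be at most a couple of lines, and the Fact's main role will be as a bookkeeping identity used inside the proof of Theorem~\ref{thm:rw_clusterable}, where one needs to relate $\p_v^t(u) = \sum_i (1 - \lambda_i/2)^t \vect{v}_i(v) \vect{v}_i(u)$ to contributions from the top-$h$ eigenvectors (via Lemma~\ref{lemma:spectral_clusterable}) and control the tail using $\sum_{j > h} \vect{v}_j(v)^2 \le 1$.
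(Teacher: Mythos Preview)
Your proposal is correct; the paper states this as a Fact without proof, and your argument via Parseval's identity for the orthonormal eigenbasis is exactly the standard justification. Your anticipation of how the Fact is used inside the proof of Theorem~\ref{thm:rw_clusterable} is also accurate.
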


The following is a direct corollary of a structural result due to \cite{PSZ17:partition} that relates the first $k$ eigenvectors of the Laplacian to the normalized indicator vectors of some $k$-partition of the graph. Recall that $\vect{v}_i$ is the eigenvector corresponding to the $i$-th smallest eigenvalue of the Laplacian of $G$.
\begin{theorem}\label{thm:PSZ_structure}
	Let $\phi_\myout\leq a_{\ref{thm:PSZ_structure}}\phi_\myin^2/k^2$ for sufficiently small constant $a_{\ref{thm:PSZ_structure}}>0$. Let $G$ be a weighted $d$-regular and $(k,\phi_\myin,\phi_\myout)$-clusterable graph with underlying $(\phi_\myin,\phi_\myout)$-clusters $C_1,\cdots, C_h$ for some $h\leq k$. Let $\vect{r}_i:=\frac{1}{\sqrt{|C_i|}}\cdot\1_{C_i}$. Then there exist $h$ orthonormal vectors $\tilde{\vect{r}}_1,\cdots, \tilde{\vect{r}}_h \in \Span(\vect{r}_1,\cdots,\vect{r}_h)$ and a constant $c_{\ref{thm:PSZ_structure}}>0$, such that 
	$$\norm{\vect{v}_i - \tilde{\vect{r}}_i}_2^2\leq c_{\ref{thm:PSZ_structure}}\cdot \frac{h\phi_\myout}{\phi_\myin^2}.$$
\end{theorem}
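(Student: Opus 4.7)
The plan is to derive this as a direct corollary of the structural theorem of Peng--Sun--Zanetti~\cite{PSZ17:partition}, which in its original form states: if $\Upsilon := \lambda_{h+1}/\rho(h) \geq \Omega(h^2)$, where $\rho(h) := \max_{i\leq h}\phi_G(C_i)$, then there exist vectors $\hat{g}_1,\dots,\hat{g}_h \in \Span(\vect{r}_1,\dots,\vect{r}_h)$ (each a linear combination of the normalized indicators of the clusters) such that
\[
\sum_{i=1}^{h}\norm{\vect{v}_i - \hat{g}_i}_2^2 \;\leq\; O(h/\Upsilon).
\]
I would first verify the spectral gap hypothesis. Since $G$ is $(k,\phi_\myin,\phi_\myout)$-clusterable via the $h$-partition $C_1,\dots,C_h$ with $h \leq k$, Lemma~\ref{lemma:spectral_clusterable} gives $\lambda_{h+1} \geq \phi_\myin^2/2$ and $\rho(h) \leq \phi_\myout$, so $\Upsilon \geq \phi_\myin^2/(2\phi_\myout)$. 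The assumption $\phi_\myout \leq a_{\ref{thm:PSZ_structure}}\phi_\myin^2/k^2 \leq a_{\ref{thm:PSZ_structure}}\phi_\myin^2/h^2$ then ensures $\Upsilon = \Omega(h^2)$, which is exactly the PSZ17 precondition. Plugging the quantitative bound back in yields individual bounds $\norm{\vect{v}_i - \hat{g}_i}_2^2 = O(h \phi_\myout/\phi_\myin^2)$.

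The remaining task is to convert the approximants $\hat{g}_i$ (which are merely linear combinations of $\vect{r}_1,\dots,\vect{r}_h$, not orthonormal) into genuinely orthonormal vectors $\tilde{\vect{r}}_i \in \Span(\vect{r}_1,\dots,\vect{r}_h)$ while preserving the approximation guarantee up to a constant factor. I would do this by projecting: let $\Pi$ be the orthogonal projector onto $\Span(\vect{r}_1,\dots,\vect{r}_h)$, and set $w_i := \Pi \vect{v}_i$. Since $\hat{g}_i$ already lies in this span, $\norm{\vect{v}_i - w_i}_2 \leq \norm{\vect{v}_i - \hat{g}_i}_2$. Because the $\vect{v}_i$'s are orthonormal and each is close to the span, the Gram matrix of $(w_1,\dots,w_h)$ is a small perturbation of the identity (the off-diagonal entries are $O(h\phi_\myout/\phi_\myin^2)$ in magnitude and diagonal entries are $1 - O(h\phi_\myout/\phi_\myin^2)$). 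A standard orthonormalization argument (e.g., apply $G^{-1/2}$ where $G$ is this Gram matrix, or iterate Gram--Schmidt with a perturbation bound) then produces orthonormal $\tilde{\vect{r}}_1,\dots,\tilde{\vect{r}}_h$ in the same span with $\norm{w_i - \tilde{\vect{r}}_i}_2 = O(h\phi_\myout/\phi_\myin^2)$, which combined with the triangle inequality gives the claimed bound with an adjusted constant $c_{\ref{thm:PSZ_structure}}$.

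The main obstacle is really bookkeeping rather than substance: one must match the exact hypothesis form of the PSZ17 theorem (which is typically stated for exactly $k$ clusters filling the whole vertex set) to our setting where we have $h \leq k$ clusters but quote the bound in terms of $k$; the fact that the $h$-partition $C_1,\dots,C_h$ does cover $V$ and satisfies $\phi_\myout \leq a\phi_\myin^2/h^2$ resolves this. A minor technical care point is the orthonormalization step: the constant in the perturbation bound for the Gram matrix must be small enough (guaranteed by choosing $a_{\ref{thm:PSZ_structure}}$ sufficiently small) that $G^{-1/2}$ is well-defined and close to $I$. Beyond that, the proof is a one-line invocation of PSZ17 followed by the projection/orthonormalization cleanup.
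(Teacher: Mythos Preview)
Your proposal is correct and follows essentially the same route as the paper: verify the precondition $\lambda_{h+1}/\rho(h)\geq \Omega(h^2)$ of the PSZ17 structural theorem via Lemma~\ref{lemma:spectral_clusterable} and the bound $\phi_\myout\leq a_{\ref{thm:PSZ_structure}}\phi_\myin^2/k^2$, then invoke PSZ17. The only difference is that the paper cites a form of PSZ17 (Theorem~1.1 there) that already delivers \emph{orthonormal} approximants in $\Span(\vect{r}_1,\dots,\vect{r}_h)$, so your extra projection/Gram-matrix orthonormalization step, while valid, is unnecessary.
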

\begin{proof}
	Let $\rho(h):=\min_{A_1,\cdots,A_h}\max\{\phi_G(A_i): i=1,\cdots, h\}$, where the minimum is taken over all $h$-partitions $A_1,\cdots, A_h$. It is proven in Theorem 1.1 of \cite{PSZ17:partition} that if $\lambda_{h+1}/\rho(h)\geq c h^2$ for some constant $c>0$, then there exist orthonormal vectors $\tilde{\vect{r}}_1,\cdots,\tilde{\vect{r}}_h\in \Span(\vect{r}_1,\cdots,\vect{r}_h)$ such that
	$\norm{\vect{v}_i-\tilde{\vect{r}}_i}_2^2\leq 1.1 h\cdot \frac{\rho(h)}{\lambda_{h+1}}.$
	
	Note that by definition, $\rho(h)\leq \phi_\myout$. In addition, by Lemma~\ref{lemma:spectral_clusterable}, it holds that $\lambda_{h+1}\geq \frac{\phi_\myin^2}{2}$. %
	Furthermore, since $\phi_\myout\leq a_{\ref{thm:PSZ_structure}}\phi_\myin^2/k^2\leq a_{\ref{thm:PSZ_structure}}\phi_\myin^2/{h^2}$, it holds that $\lambda_{h+1}/\rho(h)\geq \frac{\phi_\myin^2}{2\phi_\myout}=ch^2$ as $a_{\ref{thm:PSZ_structure}}$ is sufficiently small constant. This then implies that  $\norm{\vect{v}_i-\tilde{\vect{r}}_i}_2^2\leq 1.1 h\cdot \frac{\rho(h)}{\lambda_{h+1}}\leq c_{\ref{thm:PSZ_structure}}\cdot \frac{h\phi_\myout}{\phi_\myin^2}$ for some constant $c_{\ref{thm:PSZ_structure}}$.
\end{proof}

Now we are ready to prove Theorem~\ref{thm:rw_clusterable}. We first provide a high level idea. We will bound the $\ell_2$-norm distance of the random walk distribution $\p_v^t$ and the uniform distribution $\uni_C$ over the cluster $C$ that contains $v$, i.e., $\norm{\p_v^t-\uni_C}_2$. In order to do so, we note that %
by Theorem~\ref{thm:PSZ_structure}, the vector $\uni_C$, which is a scale of the indicator vector of $C$, lies in a space that can be well approximated by the space of the first $h$ (where $h\leq k$ is the number of clusters) eigenvectors of matrix $\mat{P}$. Using this, we show that the projection of $\p_v^t-\uni_C$ on the space spanned by the first $h$ eigenvectors is small. Furthermore, by Lemma~\ref{lemma:spectral_clusterable}, $\lambda_{h+1}$ is large, and thus the length of the projection of $\p_v^t-\uni_C$ on the space spanned by the remaining $n-h$ eigenvectors is dominated by $(1-\frac{\lambda_{h+1}}{2})^{O(t)}$, which is also small for appropriately chosen $t$. Now we give the details.%
\begin{proof}[Proof of Theorem~\ref{thm:rw_clusterable}]
	For any vertex $v$, we let $X_v:=\sum_{j=1}^h\vect{v}_j(v)^2$. We first note that $\sum_{v\in V}X_v = \sum_{v\in V}\sum_{j=1}^h\vect{v}_j(v)^2=\sum_{j=1}^h\norm{\vect{v}_j}_2^2=h$. Therefore, by the averaging argument, there can be at most $\frac{\beta\alpha}{2}n$ vertices $v$ with $X_v\geq \frac{2h}{\beta\alpha n}$.
	
	Note that by the precondition of the Theorem, it holds that $\phi_\myout\leq a_{\ref{thm:PSZ_structure}}\phi_\myin^2/k^2$. Let $\vect{r}_i$ and $\tilde{\vect{r}}_i$ be the vectors as defined in Theorem~\ref{thm:PSZ_structure}. Let $Y_v:=\sum_{j=1}^h(\vect{v}_j(v) - \tilde{\vect{r}}_j(v))^2$. Then by applying Theorem \ref{thm:PSZ_structure} with graph $G$, we have that 
	\begin{eqnarray*}
		\sum_vY_v=\sum_{v}\sum_{j=1}^h(\vect{v}_j(v) - \tilde{\vect{r}}_j(v))^2 =\sum_{j=1}^h\norm{\vect{v}_j-\tilde{\vect{r}}_j}_2^2
		\leq c_{\ref{thm:PSZ_structure}}
		\cdot \frac{h^2\phi_\myout}{\phi_\myin^2}
	\end{eqnarray*}
	Again, by the averaging argument, there can be at most $\frac{\beta\alpha}{2}n$ vertices $v$ with $Y_v\geq c_{\ref{thm:PSZ_structure}}
	\cdot \frac{h^2\phi_\myout}{\phi_\myin^2}\frac{2}{\beta\alpha n}$.
	
	Now let us define $C'_i:=\{v: v\in C_i, X_v\leq\frac{2h}{\beta\alpha n}, Y_v\leq c_{\ref{thm:PSZ_structure}}
	\cdot \frac{h^2\phi_\myout}{\phi_\myin^2}\frac{2}{\beta\alpha n} \}$. Note that for any $C_i$ with $|C_i|\geq \alpha n$, it holds that $|C'_i|\geq |C_i|- (\frac{\beta\alpha}{2}+\frac{\beta\alpha}{2})n\geq |C_i|-\beta|C_i|\geq (1-\beta)|C_i|$. 
	
	Let us consider any vertex $v\in C_i'$. Since $\vect{r}_i=\frac{\1_{C_i}}{\sqrt{|C_i|}}$, it holds that  
	\begin{eqnarray*}
		\uni_{C_i}=\frac{\1_{C_i}}{|C_i|}=\agbracket{\1_v,\frac{\1_{C_i}}{\sqrt{|C_i|}}}\cdot\frac{\1_{C_i}}{\sqrt{|C_i|}}=\agbracket{\1_v,\vect{r}_i}\cdot \vect{r}_i=\sum_{j=1}^h \vect{r}_j(v) \cdot \vect{r}_j 
		& =& 
		\sum_{j=1}^h \tilde{\vect{r}}_j(v)\cdot \tilde{\vect{r}}_j，
	\end{eqnarray*}
	where the last equation follows from the fact that $\tilde{\vect{r}}_1,\cdots,\tilde{\vect{r}}_h$ have the same linear span as vectors $\vect{r}_1,\cdots,\vect{r}_h$, which in turn follows from the properties of $\{\tilde{\vect{r}}_i\}$ as guaranteed by Theorem~\ref{thm:PSZ_structure}.%

	Recall that $\p_v^{t}=\sum_{j=1}^n(1-\frac{\lambda_j}{2})^t\vect{v}_j(v)\cdot \vect{v}_j$. We let $t=\frac{20\log n}{\phi_\myin^2}$. Thus, we have that 
	\begin{eqnarray*}
		&&
		\norm{\p_v^{t}-\uni_{C_i}}_2 \\ 
		&= & \norm{\sum_{j=1}^n(1-\frac{\lambda_j}{2})^t\vect{v}_j(v)\cdot \vect{v}_j - \sum_{j=1}^h\tilde{\vect{r}}_j(v)\cdot \tilde{\vect{r}}_j}_2 \\
		& = &
		\norm{\sum_{j=1}^n(1-\frac{\lambda_j}{2})^t\vect{v}_j(v)\cdot \vect{v}_j - \sum_{j=1}^h\vect{v}_j(v)\cdot\vect{v}_j + \sum_{j=1}^h\vect{v}_j(v)\cdot\vect{v}_j - \sum_{j=1}^h\tilde{\vect{r}}_j(v)\cdot \tilde{\vect{r}}_j}_2 \\
		&\leq &
		\norm{\sum_{j=1}^h((1-\frac{\lambda_j}{2})^t-1)\vect{v}_j(v)\cdot\vect{v}_j}_2 + \norm{\sum_{j=h+1}^n(1-\frac{\lambda_j}{2})^t\vect{v}_j(v)\cdot\vect{v}_j}_2 + \norm{\sum_{j=1}^h\vect{v}_j(v)\cdot\vect{v}_j - \sum_{j=1}^h\tilde{\vect{r}}_j(v)\cdot \tilde{\vect{r}}_j}_2 \\
		&\leq&
		\sqrt{\sum_{j=1}^h((1-\frac{\lambda_j}{2})^t-1)^2\vect{v}_j(v)^2} + (1-\frac{\phi_\myin^2}{4})^t\sqrt{\sum_{j=h+1}^n\vect{v}_j(v)^2} + \norm{\sum_{j=1}^h\vect{v}_j(v)\cdot\vect{v}_j - \sum_{j=1}^h\tilde{\vect{r}}_j(v)\cdot \tilde{\vect{r}}_j}_2\\
		&\leq &
		(1-(1-\frac{\lambda_h}{2})^t) \sqrt{\sum_{j=1}^h\vect{v}_j(v)^2} + (1-\frac{\phi_\myin^2}{4})^t\sqrt{\sum_{j=h+1}^n\vect{v}_j(v)^2} + \norm{\sum_{j=1}^h\vect{v}_j(v)\cdot\vect{v}_j - \sum_{j=1}^h\tilde{\vect{r}}_j(v)\cdot \tilde{\vect{r}}_j}_2 \\
		&\leq & (1-(1-\phi_\myout)^t) \cdot \sqrt{X_v} + (1-\frac{\phi_\myin^2}{4})^t + \norm{\sum_{j=1}^h\vect{v}_j(v)\cdot\vect{v}_j - \sum_{j=1}^h\tilde{\vect{r}}_j(v)\cdot \tilde{\vect{r}}_j}_2 \\
		&& \qquad\qquad\qquad\qquad\qquad\qquad\qquad\qquad\qquad\qquad \textrm{(by Lemma~\ref{lemma:spectral_clusterable} and Fact~\ref{fact:sum_1})}\\
		&\leq&
		t\phi_\myout\cdot \sqrt{X_v} + \frac{1}{n^3} + \norm{\sum_{j=1}^h\vect{v}_j(v)\cdot\vect{v}_j - \sum_{j=1}^h\tilde{\vect{r}}_j(v)\cdot \tilde{\vect{r}}_j}_2 \quad \textrm{(by our setting $t= \frac{20\log n}{\phi_\myin^2}$)}\\
	\end{eqnarray*}

	Now observe that 
	\begin{eqnarray*}
		&&\norm{\sum_{j=1}^h\vect{v}_j(v)\cdot\vect{v}_j - \sum_{j=1}^h\tilde{\vect{r}}_j(v)\cdot \tilde{\vect{r}}_j}_2 \\
		&=&
		\norm{\sum_{j=1}^h\vect{v}_j(v)\cdot\vect{v}_j - \sum_{j=1}^h\vect{v}_j(v)\cdot \tilde{\vect{r}}_j +
			\sum_{j=1}^h\vect{v}_j(v)\cdot \tilde{\vect{r}}_j - \sum_{j=1}^h\tilde{\vect{r}}_j(v)\cdot \tilde{\vect{r}}_j}_2 \\
		&\leq &
		\norm{\sum_{j=1}^h\vect{v}_j(v)\cdot\vect{v}_j - \sum_{j=1}^h\vect{v}_j(v)\cdot \tilde{\vect{r}}_j}_2 + 
		\norm{\sum_{j=1}^h\vect{v}_j(v)\cdot \tilde{\vect{r}}_j - \sum_{j=1}^h\tilde{\vect{r}}_j(v)\cdot \tilde{\vect{r}}_j}_2\\
		&\leq&
		\sum_{j=1}^h\norm{\vect{v}_j(v)\cdot (\vect{v}_j -\tilde{\vect{r}}_j)}_2 + 
		\norm{\sum_{j=1}^h(\vect{v}_j(v) - \tilde{\vect{r}}_j(v))\cdot \tilde{\vect{r}}_j}_2 \\
		&\leq&
		\sqrt{c_{\ref{thm:PSZ_structure}}\cdot \frac{h\phi_\myout}{\phi_\myin^2}} \cdot \sum_{j=1}^h \abs{\vect{v}_j(v)} + \norm{\sum_{j=1}^h(\vect{v}_j(v) - \tilde{\vect{r}}_j(v))\cdot \tilde{\vect{r}}_j}_2  \quad \textrm{(by Theorem~\ref{thm:PSZ_structure})}\\
		&\leq & \sqrt{c_{\ref{thm:PSZ_structure}}\cdot \frac{h^2\phi_\myout}{\phi_\myin^2}} \cdot\sqrt{\sum_{j=1}^h\vect{v}_j(v)^2} + \sqrt{\sum_{j=1}^h(\vect{v}_j(v) - \tilde{\vect{r}}_j(v))^2} 
		\quad \textrm{(by Cauchy-Schwarz inequality)} \\
		&=&
		\sqrt{c_{\ref{thm:PSZ_structure}}\cdot \frac{h^2\phi_\myout}{\phi_\myin^2}} \cdot \sqrt{X_v}+ \sqrt{Y_v}
	\end{eqnarray*}

	Therefore,
	\begin{eqnarray*}
		\norm{\p_v^{t}-\uni_{C_i}}_2 
		&\leq& t\phi_\myout\cdot \sqrt{X_v} + \frac{1}{n^3} + \sqrt{c_{\ref{thm:PSZ_structure}}\cdot \frac{h^2\phi_\myout}{\phi_\myin^2}} \cdot \sqrt{X_v}+ \sqrt{Y_v}
		\\ 
		&\leq&
		\left(t\phi_\myout + \sqrt{c_{\ref{thm:PSZ_structure}}\cdot \frac{h^2\phi_\myout}{\phi_\myin^2}}\right) \cdot \sqrt{\frac{2h}{\beta\alpha n}} + \sqrt{c_{\ref{thm:PSZ_structure}}
			\cdot \frac{h^2\phi_\myout}{\phi_\myin^2}\frac{2}{\beta\alpha n}} +\frac{1}{n^3} 
		\leq  \frac{2\xi}{\sqrt{n}},
	\end{eqnarray*}
	where the last inequality follows from our setting that $h\leq k, t= \frac{20\log n}{\phi_\myin^2}$ and $\phi_\myout\leq a_{\ref{thm:rw_clusterable}}\frac{\xi{\alpha\beta}\phi_\myin^2}{k^3\log n}$, where $a_{\ref{thm:rw_clusterable}}>0$ is some sufficiently small constant. 
	
	Therefore, it holds that $\norm{\p_v^{t}-\uni_{C_i}}_{\TV}=\frac{1}{2}\norm{\p_v^{t}-\uni_{C_i}}_1\leq \frac{1}{2}\sqrt{n}\cdot \norm{\p_v^{t}-\uni_{C_i}}_2\leq \xi$. 	
\end{proof}

\subsection{From Clusterable Graphs to Noisy Clusterable Graphs}\label{subsec:noisywalks}
Now we analyze the random walk on a noisy clusterable graph $G$, for which we use an induced Markov chain introduced in~\cite{KPS13:noise} and some property of stopping rules of Markov chains~\cite{LW97:mixing}.

\paragraph{A tool: stopping rules of Markov Chains.} Consider a finite, irreducible, discrete time Markov chain on the state space $V=[n]$ with stationary distribution $\pi$. For any distribution $\sigma$, we let $\sigma^t$ denote the distribution of a $t$-step walk on the Markov chain with initial distribution $\sigma$. A \emph{stopping rule} $\Gamma$ of the Markov chain is a rule that observes the walk and decides whether to stop or not on the basis of what has been observed so far (see e.g.,~\cite{LW97:mixing} for formal definition). Given a starting distribution $\sigma$ and a target distribution $\tau$, we say that a stopping rule $\Gamma$ is a stopping rule from $\sigma$ to $\tau$ if the initial state is drawn from $\sigma$ and the final state is governed by $\tau$. Let $\E[\Gamma]$ denote the expected length before $\Gamma$ halts. For any two distributions $\sigma$ and $\tau$, we let $\HH(\sigma,\tau)$ denote the minimal expected length $\E[\Gamma]$ among all stopping rules $\Gamma$ from $\sigma$ to $\tau$. 

Let $\sigma^{(t)}$ denotes the distribution of a uniform average walk of length $t$ with initial distribution $\sigma$. The following lemma was proved by Lov{\'a}sz and Winkler.
\begin{lemma}[\cite{LW97:mixing}]\label{lemma:averagewalk}
	For any distribution $\tau$, and any subset $U\subset V$,  
	\begin{eqnarray*}
		\sum_{i\in U}\sigma^{(t)}(i) \leq \frac{1}{t}\HH(\sigma,\tau) + \frac{1}{t}\sum_{i\in U}\sum_{m=0}^{t-1}\tau^m(i)
	\end{eqnarray*}
	where $\tau^m$ denotes the probability vector of an $m$ step random walk on the Markov chain with initial distribution $\tau$. 
\end{lemma}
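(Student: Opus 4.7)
The plan is to take an optimal stopping rule $\Gamma$ from $\sigma$ to $\tau$ (so that $\E[\Gamma] = \HH(\sigma,\tau)$), couple it with the uniform averaging walk, and split the contribution at each time step according to whether $\Gamma$ has already halted. Write $X_0, X_1, X_2, \ldots$ for the random walk with $X_0 \sim \sigma$, let $T$ denote the halting time of $\Gamma$, and let $p_\ell(i) := \Pr[X_\ell = i]$. Then by definition of the uniform averaging walk,
\[
\sum_{i \in U} \sigma^{(t)}(i) \;=\; \frac{1}{t}\sum_{\ell=0}^{t-1}\Pr[X_\ell \in U] \;=\; \frac{1}{t}\sum_{\ell=0}^{t-1}\bigl(\Pr[X_\ell \in U,\, T > \ell] + \Pr[X_\ell \in U,\, T \leq \ell]\bigr).
\]

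For the first piece, the trivial bound $\Pr[X_\ell \in U,\, T > \ell] \leq \Pr[T > \ell]$ gives
\[
\sum_{\ell=0}^{t-1}\Pr[X_\ell \in U,\, T > \ell] \;\leq\; \sum_{\ell=0}^{t-1}\Pr[T > \ell] \;=\; \E[\min(T,t)] \;\leq\; \E[T] \;=\; \HH(\sigma,\tau),
\]
which yields the first term of the claimed inequality. For the second piece, I would condition on the halting time and state of $\Gamma$: for $t_0 \leq \ell$,
\[
\Pr[X_\ell = i,\, T = t_0,\, X_{t_0}=j] \;=\; \Pr[T=t_0,\, X_{t_0}=j]\cdot P^{\ell - t_0}(j, i),
\]
where $P$ is the transition matrix. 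Summing over $t_0 \leq \ell$, then over $i \in U$ and $\ell \in \{0,\ldots,t-1\}$, and finally swapping the order of summation, I obtain
\[
\sum_{\ell=0}^{t-1}\Pr[X_\ell \in U,\, T \leq \ell] \;=\; \sum_{t_0=0}^{t-1}\sum_{j}\Pr[T=t_0,\, X_{t_0}=j]\sum_{m=0}^{t-1-t_0}\sum_{i \in U} P^m(j,i).
\]
Enlarging the inner sum to $m = 0,\ldots, t-1$ and using the defining property of the stopping rule, namely $\sum_{t_0 \geq 0}\Pr[T=t_0,\, X_{t_0}=j] = \tau(j)$, gives
\[
\sum_{\ell=0}^{t-1}\Pr[X_\ell \in U,\, T \leq \ell] \;\leq\; \sum_{j}\tau(j)\sum_{m=0}^{t-1}\sum_{i \in U} P^m(j,i) \;=\; \sum_{i \in U}\sum_{m=0}^{t-1}\tau^m(i),
\]
which is the second term. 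Dividing everything by $t$ yields the lemma.

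The only subtle point, which I would address carefully, is the identity $\sum_{t_0 \geq 0}\Pr[T=t_0,\, X_{t_0}=j] = \tau(j)$: this is precisely the statement that $\Gamma$ is a stopping rule \emph{from $\sigma$ to $\tau$}, i.e., the state at the (almost surely finite) halting time has marginal distribution $\tau$. Everything else is a clean decomposition plus Fubini and the observation that the ``after $\Gamma$'' future is governed by the Markov property from state $X_T \sim \tau$. I do not anticipate a significant obstacle beyond bookkeeping of the summations.
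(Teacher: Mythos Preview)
Your argument is correct and is exactly the standard Lov\'asz--Winkler decomposition that the paper is invoking: the paper does not give its own proof but simply remarks that the inequality follows directly from the proof of Lemma~4.22 in \cite{LW97:mixing}, which is precisely this split into the ``before $\Gamma$ halts'' part (bounded by $\E[T]$) and the ``after $\Gamma$ halts'' part (governed by the walk started from $\tau$). The only cosmetic point is that you should note the existence of an optimal stopping rule achieving $\HH(\sigma,\tau)$ (which is established in \cite{LW97:mixing}), or else run the argument with a near-optimal rule and let the slack go to zero.
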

We remark that the above inequality was not explicitly stated in~\cite{LW97:mixing}, while the proof of Lemma 4.22 in \cite{LW97:mixing} directly implies the above Lemma.

\paragraph{An induced Markov chain.}
Let $G=(V,E)$ be a $d$-bounded graph. Let $\MM$ be the Markov chain corresponding to the (normal) random walks on the input graph $G$. For simplicity, we assume $\MM$ is irreducible (i.e., the graph is connected). By definition, the stationary distribution $\pi$ of $\MM$ is the uniform distribution $\uni_V$ on $V$, that is $\pi(i)=\frac{1}{n}$. %
Let $D$ denote a (large) subset of $V$ and let $B=V\setminus D$. Now we describe the new Markov chain $\MM'$, that has been considered in~\cite{KPS13:noise}, with state set $D$ as follows. For any two vertices $u,v\in D$, the transition probability $\p'_u(v)$ in $\MM'$ is the sum of $\p_u(v)$, i.e., the transition probability from $u$ to $v$ in $\MM$, and the probability $\bb_u^{(t)}(v)$ that is equal to the total probability of all length $t$ walks from $u$ to $v$ all of whose states, except for the end points $u$ and $v$ are in $B$, for any integer $t\geq 2$. That is, $\p'_u(v)=\p_u(v)+\sum_{t\geq 2}\bb_u^{(t)}(v)$. The chain $\MM'$ is formally constructed by first retaining the original transition in $\MM$ between $u,v$ and then adding new transitions $e_u^{(t)}(v)$ with transition probability $\bb_u^{(t)}(v)$ for any $t\geq 2$, for any $u,v\in D$. 

We note that the chain $\MM'$ is the \emph{stochastic complement} of $\MM$ with respect to set $D$~\cite{Mey89:stochastic}. Let $\mat{P}=\bigl(\begin{smallmatrix}
\mat{P}_{D}&\mat{P}_1 \\ \mat{P}_2&\mat{P}_B
\end{smallmatrix} \bigr)$ denote the transition probability matrix underlying $\MM$. We have the following lemma regarding the transition probability matrix $\mat{P}'$ underlying $\MM'$. 
\begin{lemma}[\cite{Mey89:stochastic}]\label{lemma:meyer}
	The Markov chain $\MM'$ is irreducible and aperiodic. Furthermore, its transition probability matrix is 
	$\mat{P}'= \mat{P}_D + \mat{P}_1(\mat{I}-\mat{P}_B)^{-1}\mat{P}_2$. 
\end{lemma}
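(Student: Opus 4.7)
The plan is to verify the stated formula by a direct combinatorial expansion of the definition of $\MM'$, and then deduce irreducibility and aperiodicity from the corresponding properties of $\MM$ (which is the lazy walk on a $d$-regular auxiliary graph and thus has $\mat{P}_{uu}>0$ for all $u$).

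First I would unfold the definition of the transition probability $\p'_u(v)$ for $u,v\in D$. The term $\p_u(v)$ equals $(\mat{P}_D)_{uv}$ by definition of the block decomposition $\mat{P}=\bigl(\begin{smallmatrix}\mat{P}_D&\mat{P}_1\\\mat{P}_2&\mat{P}_B\end{smallmatrix}\bigr)$. For each $t\ge 2$, a length-$t$ walk from $u\in D$ to $v\in D$ whose intermediate vertices all lie in $B$ consists of a first step from $u$ into $B$ (contributing a row of $\mat{P}_1$), then $t-2$ steps inside $B$ (contributing $\mat{P}_B^{t-2}$), and a final step from $B$ into $v$ (contributing a column of $\mat{P}_2$). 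Hence $\bb_u^{(t)}(v)=(\mat{P}_1\mat{P}_B^{t-2}\mat{P}_2)_{uv}$, and summing over $t\ge 2$ yields
\[
\sum_{t\ge 2}\mat{P}_1\mat{P}_B^{t-2}\mat{P}_2=\mat{P}_1\Bigl(\sum_{s\ge 0}\mat{P}_B^{s}\Bigr)\mat{P}_2=\mat{P}_1(\mat{I}-\mat{P}_B)^{-1}\mat{P}_2,
\]
giving the claimed identity $\mat{P}'=\mat{P}_D+\mat{P}_1(\mat{I}-\mat{P}_B)^{-1}\mat{P}_2$.

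The one analytic step requiring care is the convergence of the Neumann series $\sum_{s\ge 0}\mat{P}_B^{s}$, i.e.\ the invertibility of $\mat{I}-\mat{P}_B$. Since $\MM$ is irreducible, from every $w\in B$ there is some positive-probability path in $\MM$ that eventually enters $D$; this means the substochastic matrix $\mat{P}_B$ has every row sum at most $1$ and strictly less than $1$ along at least one row that is reached from every state. Standard Perron--Frobenius arguments for substochastic matrices arising from ``absorbing'' the complement then give $\rho(\mat{P}_B)<1$, justifying the geometric sum. I expect this to be the only step demanding more than bookkeeping.

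Finally, for the Markov-chain structure of $\MM'$: irreducibility follows because any path $u=w_0\to w_1\to\cdots\to w_\ell=v$ in $\MM$ with $u,v\in D$ can be decomposed into maximal runs that stay in $B$ between consecutive visits to $D$; each such run is precisely one of the edges $e_u^{(t)}(v)$ added in the construction, so the positive-probability path in $\MM$ translates into a positive-probability path in $\MM'$. Aperiodicity is immediate: since $\MM$ is a lazy walk on the auxiliary $d$-regular graph, $\mat{P}_{uu}>0$ for every $u$, and therefore $\mat{P}'_{uu}\ge (\mat{P}_D)_{uu}=\mat{P}_{uu}>0$, so $\MM'$ has a self-loop at every state. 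This completes the proposed proof.
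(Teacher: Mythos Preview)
Your proof is correct. Note, however, that the paper does not give its own proof of this lemma: it is stated with attribution to \cite{Mey89:stochastic} (Meyer's work on stochastic complementation) and is used as a black box. Your argument is exactly the standard derivation of the stochastic complement formula---expanding the defining sum $\p'_u(v)=\p_u(v)+\sum_{t\ge 2}\bb_u^{(t)}(v)$ into the block form $\mat{P}_D+\mat{P}_1(\sum_{s\ge 0}\mat{P}_B^{s})\mat{P}_2$ and justifying convergence via $\rho(\mat{P}_B)<1$---together with the obvious deductions of irreducibility (paths in $\MM$ project to paths in $\MM'$) and aperiodicity (laziness gives diagonal entries $>0$). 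There is nothing to compare against in the paper itself, and your write-up would serve as a self-contained justification of the cited result in this specific setting.
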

It is known (see e.g., \cite{Mey89:stochastic} and \cite{KPS13:noise}) that, the stationary distribution in $\MM'$ is given by the vector $\pi'\in \R^{D}$ such that $\pi'(u)=\frac{\pi(u)}{\pi(D)}=\frac{1}{|D|}$ for any $u\in D$. 

Now let us consider a vertex $s\in D$ and an integer $\ell$ that will be specified later. Let $\tau:={\p'}_s^{(\ell)}$ denote the distribution of a random walk of length $\ell$ starting from $s\in D$ in $\MM'$. Consider the stopping rule $\Gamma$ that stops the walk in $\MM$ as soon as it has taken $\ell$ steps in $\MM'$, that is, $\Gamma$ is a stopping rule from $\1_s$ to $\tau$. Recall that $\E[\Gamma]$ denotes the expected number of steps the walk takes starting from $s$ before being terminated by the stopping rule $\Gamma$. The following lemma has been proven in~\cite{KPS13:noise}.

\begin{lemma}[\cite{KPS13:noise}]\label{lemma:hittingtime}
	There exists a set $\tilde{B}\subseteq D$ with $\pi(\tilde{B})\leq \pi(B)$ such that for any $s\in D\setminus \tilde{B}$, $\E[\Gamma]\leq 2\ell$. In particular, for any such vertex $s$, $\HH(\1_s,\tau)\leq 2\ell$.
\end{lemma}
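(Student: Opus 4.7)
The plan is to identify $\E[\Gamma]$ with the expected $\ell$-th return time of $\MM$ to $D$, bound that quantity on average via Kac's lemma, and then apply Markov's inequality to get a pointwise bound on all but a small set of starting points. Let $X_0=s,X_1,X_2,\ldots$ be a trajectory of $\MM$ and let $T_0=0<T_1<T_2<\cdots$ be the successive $\MM$-times at which the walk visits $D$. By construction of the stochastic complement $\MM'$, the subsampled sequence $(X_{T_i})_{i\ge 0}$ is exactly a trajectory of $\MM'$, so $\Gamma$ (which halts after $\ell$ steps of $\MM'$) halts at $\MM$-time $T_\ell$. Hence $\E_s[\Gamma]=\E_s[T_\ell]$, and since $\Gamma$ is a legitimate stopping rule from $\1_s$ to $\tau$, the inequality $\HH(\1_s,\tau)\le \E_s[T_\ell]$ comes for free from the definition of $\HH$ as an infimum.

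Next I would average $\E_s[T_\ell]$ over $s\in D$ with respect to $\pi_D:=\pi|_D/\pi(D)$. By Lemma~\ref{lemma:meyer} and the identity $\pi'(u)=\pi(u)/\pi(D)$ quoted immediately after it, $\pi_D$ is the stationary distribution of $\MM'$; consequently, if $X_0\sim \pi_D$ then $X_{T_i}\sim \pi_D$ for every $i\geq 0$. Kac's lemma for the set $D$ in the chain $\MM$ gives $\E_{\pi_D}[T_1]=1/\pi(D)$, and the strong Markov property applied at each $T_{i-1}$ telescopes this into
$$\E_{\pi_D}[T_\ell]=\sum_{i=1}^{\ell}\E_{\pi_D}[T_i-T_{i-1}]=\frac{\ell}{\pi(D)},$$
so $\E_{\pi_D}[T_\ell-\ell]=\ell\cdot \pi(B)/\pi(D)$.

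With the average in hand, the rest is a single Markov estimate. Define $\tilde B:=\{s\in D:\E_s[T_\ell-\ell]>\ell\}$. Since $s\mapsto \E_s[T_\ell-\ell]$ is non-negative, Markov's inequality under $\pi_D$ gives $\pi_D(\tilde B)\le \pi(B)/\pi(D)$, equivalently $\pi(\tilde B)\le \pi(B)$, as required. For any $s\in D\setminus \tilde B$ one then has $\E_s[T_\ell]\le 2\ell$, which combined with the first paragraph yields $\HH(\1_s,\tau)\le \E_s[\Gamma]=\E_s[T_\ell]\le 2\ell$.

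The only step that really requires verification rather than citation is the averaging identity $\E_{\pi_D}[T_\ell]=\ell/\pi(D)$, which rests on Kac's lemma for sets and on the invariance of $\pi_D$ under the transition matrix $\mat{P}'$ of $\MM'$. Both are classical, and the latter is essentially the content of the stochastic-complement framework of Meyer that is already quoted. Hence I expect no real obstacle; the only care needed is to record that the equality of trajectories $(X_{T_i})_{i\ge 0}$ of $\MM'$ with the subsampled $\MM$-chain is literally what the stochastic-complement construction produces, so that $\Gamma$'s halting time and $T_\ell$ genuinely coincide and the Markov-inequality argument can be applied to a single non-negative functional on $D$.
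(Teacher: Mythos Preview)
Your argument is correct and is essentially the proof given in the cited reference \cite{KPS13:noise}; the present paper does not prove this lemma at all but simply quotes it. The identification $\E_s[\Gamma]=\E_s[T_\ell]$, the Kac-type identity $\E_{\pi_D}[T_\ell]=\ell/\pi(D)$ via stationarity of $\pi_D$ for $\MM'$ and the strong Markov property, and the final Markov bound on the nonnegative functional $s\mapsto \E_s[T_\ell]-\ell$ are exactly the ingredients used there.
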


Now we use the above induced chain to analyze the random walks on noisy clusterable graphs. Let $G$ be a graph with an $h$-partition $C_i$, $i\leq h$ satisfying the precondition of Theorem~\ref{thm:rw_perturbed}. We let $D$ denote the union of all $D_i$'s with $|D_i|\geq 2|B_i|$, that is, $D=\cup_{i:|D_i|\geq 2|B_i|} D_i$ and $B=V\setminus D$. We consider the induced Markov chain $\MM'$ with state set $D$.

Recall that we let $\mat{A}$ denote the adjacency matrix of the $d$-regular graph $G'$ corresponding to $G$ (see Section~\ref{sec:preliminaries}.) Then the transition probability matrix is $\mat{P}=\frac{\mat{I}+\frac{1}{d}\mat{A}}{2}$. If we let $\mat{A}=\bigl(\begin{smallmatrix}
\mat{A}_{D}&\mat{A}_1 \\ \mat{A}_2&\mat{A}_B
\end{smallmatrix} \bigr)$, then by Lemma~\ref{lemma:meyer}, the transition probability matrix of $G_{\mat{M}'}$ is 
\begin{eqnarray}
\mat{P}'=\frac{\mat{I}+\frac{1}{d}\mat{A}_D}{2}+\frac{\mat{A}_1}{2d}\left(\frac{\mat{I}-\frac{1}{d}\mat{A}_B}{2}\right)^{-1}\frac{\mat{A}_2}{2d}=\frac{\mat{I}+\frac{1}{d}(\mat{A}_D + \mat{A}_1(2d\mat{I}-\mat{A}_B)^{-1}\mat{A}_2)}{2}.\label{eqn:tranprobnew}
\end{eqnarray} 
If we let $G_{\MM'}$ denote the (weighted) $d$-bounded graph with adjacency matrix $\mat{A}_D + \mat{A}_1(2d\mat{I}-\mat{A}_B)^{-1}\mat{A}_2$, then by the above analysis (and the fact that $(2d\mat{I}-\mat{A}_B)^{-1}\geq \mat{0}$~\cite{Mey89:stochastic}), $\MM'$ corresponds to the lazy random walk on the graph $G_{\MM'}$. 

In the following, we show that $G_{\MM'}$ is a clusterable graph with clusters $D_i\subseteq D$, which will imply that the chain $\MM'$ has the nice local mixing property as guaranteed by Theorem~\ref{thm:rw_clusterable}. Then we can use the stopping rules to relate the chains $\MM'$ and $\MM$. 

The following lemma shows that if we construct $\MM'$ as above for the graph that satisfies the precondition of Theorem~\ref{thm:rw_perturbed}, then $G_{\MM'}$ is $(k,\phi_\myin,O(\phi_\myout))$-clusterable. This is trivial for the case of $k=1$ (as in~\cite{KPS13:noise}), as the inner conductance of any set is monotonically increasing. However, for general $k\geq 2$, we need to deal with the difficulty of bounding the outer conductance of potential clusters, as the outer conductance of any set is also monotonically increasing due to our construction.%
\begin{lemma}\label{lemma:inducedchain}
	Let $G=(V,E)$ be a $d$-bound graph with an $h$-partition $C_i$, $i\leq h$ such that $\phi_G(C_i)\leq \phi_\myout$. Furthermore, each $C_i$ can be partitioned into two subsets $D_i$ and $B_i$ such that $\phi(G[D_i])\geq \phi_\myin$. Let $D=\cup_{i:|D_i|\geq 2|B_i|} D_i$ and $B=V\setminus D$. Let $G_{\MM'}$ be the weighted graph corresponding to the Markov chain $\MM'$ on $D$ constructed as above. Then in the graph $G_{\MM'}$, each $D_i\subseteq D$ has the inner conductance at least $\phi_\myin$ and outer conductance at most $3\phi_\myout$.
\end{lemma}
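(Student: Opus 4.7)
I will bound the inner and outer conductances of each $D_i$ separately, starting from two structural facts about $G_{\MM'}$, whose adjacency matrix is $\mat{A}' = \mat{A}_D + \mat{A}_1(2d\mat{I}-\mat{A}_B)^{-1}\mat{A}_2$. First, expanding the resolvent as $(2d\mat{I}-\mat{A}_B)^{-1} = \frac{1}{2d}\sum_{k\geq 0}(\mat{A}_B/(2d))^k$ (a convergent Neumann series, since each row of $\mat{A}_B/(2d)$ sums to at most $1/2$) shows that the ``excursion term'' $\mat{A}_1(2d\mat{I}-\mat{A}_B)^{-1}\mat{A}_2$ is entrywise nonnegative, so $\mat{A}'\geq \mat{A}_D$ entrywise. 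Second, from $\mat{P}'\mathbf{1}=\mathbf{1}$ together with \eqref{eqn:tranprobnew} one gets that $G_{\MM'}$ is weighted $d$-regular.

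The inner-conductance bound is then immediate from entrywise monotonicity: for every $S\subseteq D_i$ with $|S|\leq |D_i|/2$,
$$|E_{G_{\MM'}}(S,D_i\setminus S)|\ \geq\ |E_{G}(S,D_i\setminus S)|\ \geq\ \phi_\myin\cdot d|S|,$$
where the last inequality uses $\phi(G[D_i])\geq \phi_\myin$. Since $G_{\MM'}$ is $d$-regular, dividing by $d|S|$ gives $\phi_{G_{\MM'}[D_i]}(S)\geq \phi_\myin$, hence $\phi(G_{\MM'}[D_i])\geq \phi_\myin$.

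The outer-conductance bound is the main obstacle, because the extra weight in $\mat{A}'$ can also contribute to the cut $(D_i,D\setminus D_i)$. I will prove it by a flow-comparison argument that exploits the interpretation of $\MM'$ as the stochastic complement of $\MM$ on $D$: each $\MM'$-transition $u\to v$ is a sub-trajectory of $\MM$ from $u\in D$ to the next visit $v\in D$. The decisive observation is that $D_i\subseteq C_i$ and $D\setminus D_i\subseteq V\setminus C_i$, since $D\setminus D_i = \bigcup_{j\neq i,\,|D_j|\geq 2|B_j|} D_j$ with each $D_j\subseteq C_j$ disjoint from $C_i$. Hence every $\MM'$-transition from $D_i$ to $D\setminus D_i$ corresponds to an $\MM$-sub-trajectory that starts in $C_i$ and ends in $V\setminus C_i$, and therefore must contain at least one $\MM$-edge-traversal in the direction $C_i\to V\setminus C_i$.

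Comparing long-run rates under the ergodic theorem on the two-sided stationary $\MM$-trajectory (equivalently, invoking detailed balance) therefore yields
$$\frac{1}{n}\sum_{u\in D_i,\,v\in D\setminus D_i}\mat{P}'_{uv}\ \leq\ \sum_{u\in C_i,\,v\in V\setminus C_i}\pi(u)\,\mat{P}_{uv}\ =\ \frac{|E_G(C_i,V\setminus C_i)|}{2dn},$$
where the left-hand side equals $|E_{G_{\MM'}}(D_i,D\setminus D_i)|/(2dn)$ under $\pi'=\mathbf{1}_D/|D|$. Rearranging and using $|E_G(C_i,V\setminus C_i)|\leq \phi_\myout\cdot d|C_i|$ gives $|E_{G_{\MM'}}(D_i,D\setminus D_i)|\leq \phi_\myout\cdot d|C_i|$; since $D_i\subseteq D$ forces $|C_i|=|D_i|+|B_i|\leq \tfrac{3}{2}|D_i|$, I conclude $\phi_{G_{\MM'}}(D_i)\leq \tfrac{3}{2}\phi_\myout\leq 3\phi_\myout$. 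The hardest step is making the rate comparison rigorous; this can be done either by applying the ergodic theorem to the stationary $\MM$ and its induced $\MM'$-subsequence, or purely algebraically by expanding the resolvent $(2d\mat{I}-\mat{A}_B)^{-1}$ and grouping $\MM$-walks through $B$ according to the first edge of $E(C_i,V\setminus C_i)$ they cross.
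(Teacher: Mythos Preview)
Your proof is correct, and for the inner conductance it is literally the paper's argument. For the outer conductance you take a genuinely different route. The paper expands $(2d\mat{I}-\mat{A}_B)^{-1}$ as a Neumann series and bounds each term $\mat{A}_1\mat{A}_B^j\mat{A}_2$ by a path-counting argument (splitting on whether the first $B$-vertex lies in $B_i$ or in $B\setminus B_i$); summing over $j$ gives the excursion contribution bounded by $|E_G(C_i,V\setminus C_i)|$, and adding the direct term (bounded by the same quantity) yields $|E_{G_{\MM'}}(D_i,D\setminus D_i)|\leq 2|E_G(C_i,V\setminus C_i)|$, hence $\phi_{G_{\MM'}}(D_i)\leq 3\phi_\myout$. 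Your ergodic argument instead observes that the $\MM'$-transitions $D_i\to D\setminus D_i$ occupy disjoint $\MM$-intervals, each of which must contain at least one $\MM$-step crossing $C_i\to V\setminus C_i$; this injection at the level of sample paths gives directly $|E_{G_{\MM'}}(D_i,D\setminus D_i)|\leq |E_G(C_i,V\setminus C_i)|$, so you actually obtain the sharper $\phi_{G_{\MM'}}(D_i)\leq \tfrac{3}{2}\phi_\myout$. Your approach is more conceptual and saves a factor of $2$; the paper's approach is fully self-contained and algebraic, avoiding any appeal to ergodicity (which your sketch defers but which is routine, using either the two-sided stationary chain or the pathwise inequality plus the renewal-reward identity for the expected number of $\MM'$-steps per $\MM$-step).
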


\begin{proof}%
	We first consider the inner conductance of $D_i$ in $G_{\MM'}$. Let $S\subseteq D_i$ with $|S|\leq \frac{|D_i|}{2}$. By the fact that the adjacency matrix of $G_{\MM'}$ is $\mat{A}_D + \mat{A}_1(2d\mat{I}-\mat{A}_B)^{-1}\mat{A}_2$, it holds that $|E_{G_{\MM'}}(S,D_i\setminus S)|\geq |E_{G}(S,D_i\setminus S)|\geq \phi_\myin d|S|$. This implies that the inner conductance of $D_i$ in $G_{\MM'}$ is at least $\phi_\myin$.
	
	To bound the outer conductance of $D_i$ in $G_{\MM'}$, we instead bound the outer conductance $\phi_{\MM'}(D_i)$ of $D_i$ in the Markov chain $\MM'$, which is defined to be $\phi_{\MM'}(D_i):=\frac{\sum_{u\in D_i,v\in D\setminus D_i}\pi'(u)\p'_u(v)}{\pi'(D_i)}$, where $\p_u'(v)$ denotes the transition probability from $u$ to $v$ in the Markov chain $\MM'$. Note that by our definitions, $\phi_{G_{\MM'}}(D_i)=2\phi_{\MM'}(D_i)$.

	Recall that $\pi'(u)=\frac{1}{|D|}$ and that the transition probability matrix of $\MM'$ is $\mat{P}'$ given by Equation~(\ref{eqn:tranprobnew}). Then we have that 
	\begin{eqnarray}
	&&\sum_{u\in D_i,v\in D\setminus D_i}\pi'(u) \p_u'(v) 
	= 
	\frac{1}{|D|}\sum_{u\in D_i,v\in D\setminus D_i} \1_u \cdot \mat{P}'\cdot \1_v^T \nonumber \\
	&=&
	\frac{1}{|D|}\sum_{u\in D_i,v\in D\setminus D_i} \1_u \cdot \left(\frac{\mat{I}+\frac{1}{d}(\mat{A}_D + \mat{A}_1(2d\mat{I}-\mat{A}_B)^{-1}\mat{A}_2)}{2}\right) \cdot \1_v^T \nonumber \\
	&=&
	\frac{1}{|D|}\sum_{u\in D_i,v\in D\setminus D_i} \1_u \cdot \left(\frac{1}{2d}\left(\mat{A}_D + \mat{A}_1(2d\mat{I}-\mat{A}_B)^{-1}\mat{A}_2\right)\right) \cdot \1_v^T \nonumber \\
	&=& \frac{1}{|D|}\sum_{u\in D_i,v\in D\setminus D_i} \left( \frac{1}{2d} \left(\1_u \cdot \mat{A}_D\cdot \1_v^T + \frac{1}{2d}\1_u\cdot \mat{A}_1\cdot \sum_{j=0}^\infty (\frac{1}{2d}\mat{A}_B)^j\cdot \mat{A}_2 \cdot \1_v^T \right)\right) \nonumber  \\
\label{ineq:MCconduct}
	\end{eqnarray}
	where last equation follows from the Neumann Series  $(\mat{I}-\frac{\mat{A}_B}{2d})^{-1}=\sum_{j=0}^\infty (\frac{\mat{A}_B}{2d})^j$.
	
	
	We bound each term in the right hand side of the above inequality as follows. First, we have that 
	\begin{eqnarray}
	\sum_{u\in D_i,v\in D\setminus D_i}\1_u \cdot \mat{A}_D\cdot \1_v^T\leq |E_G(D_i,D\setminus D_i)|. \label{ineq:first_term}
	\end{eqnarray}
	Furthermore, we observe that $\1_u \cdot\mat{A}_1\cdot \mat{A}_2\cdot\1_v^T$ is exactly the number of paths that start from $u$, then go to a vertex $w\in B$, and then move to $v$. Thus, 	
	\begin{eqnarray*}
	\frac{1}{2d}\sum_{u\in D_i,v\in D\setminus D_i}\1_u \cdot\mat{A}_1\mat{A}_2\cdot\1_v^T 
	&\leq& \sum_{w\in B} \frac{|E_G(D_i,w)||E_G(w,D\setminus D_i)|}{2d}\nonumber \\
	&\leq& \sum_{w\in B_i} \frac{|E_G(w,D\setminus D_i)|}{2} + \sum_{w\in B\setminus B_i} \frac{|E_G(D_i,w)|}{2} \nonumber \\
	&=& \frac{1}{2}(|E_G(B_i,D\setminus D_i)|+|E_G(D_i,B\setminus B_i)|) \nonumber \\
	&\leq&\frac12|E_G(C_i,V\setminus C_i)| 
	\end{eqnarray*}

Similarly, for each $j\geq 1$, $\1_u \cdot\mat{A}_1\cdot \mat{A}_B^j\cdot \mat{A}_2\cdot\1_v^T$ is exactly the number of paths that start from $u$, then go to a vertex $w_1\in B$, and move inside $B$ for the next $j$ steps until some vertex $w_2\in B$, and then move to $v$. We have that 
%
%
\begin{eqnarray}
&&\frac{1}{(2d)^{j+1}}\sum_{u\in D_i,v\in D\setminus D_i}\1_u\cdot\mat{A}_1\mat{A}_B^j\mat{A}_2 \cdot \1_v^T \nonumber\\
&\leq& \frac{1}{(2d)^{j+1}}\sum_{w_1\in B}|E_G(D_i,w_1)|\cdot \sum_{\substack{w_2: p=(v_0=w_1,\cdots,v_{j}=w_2), \\v_\ell\in B, (v_\ell,v_{\ell+1})\in E(G)}}|E_G(w_2,D\setminus D_i)| \nonumber \\
&\leq& \frac{1}{(2d)^{j+1}}(\sum_{w_1\in B\setminus B_i}|E_G(D_i,w_1)|\cdot d^{j+1} +\sum_{w_2\in B_i}|E_G(w_2,D\setminus D_i)|\cdot d^{j+1}) \nonumber \\
&=&\frac{1}{2^{j+1}}(|E_G(D_i,B\setminus B_i)|+|E_G(B_i,D\setminus D_i)|)\nonumber \\
&\leq&\frac{1}{2^{j+1}}|E_G(C_i,V\setminus C_i)|,\nonumber
\end{eqnarray}
where in the first inequality, the third summation is taken over all possible paths $p$ from $w_1$ to some vertex $w_2\in B$, such that the length of $p$ is $j$ and all vertices on $p$ belong to $B$; in the second inequality, we used the fact that the number of such paths $p$ is at most $d^j$ and each vertex has degree at most $d$.

Thus, 
\begin{eqnarray}
\sum_{j=0}^\infty\frac{1}{(2d)^{j+1}}\sum_{u\in D_i,v\in D\setminus D_i}\1_u\cdot\mat{A}_1\mat{A}_B^i\mat{A}_2 \cdot \1_v^T \leq \sum_{j=0}^\infty\frac{1}{2^{j+1}} |E_G(C_i,V\setminus C_i)|= |E_G(C_i,V\setminus C_i)|\label{ineq:third_term}
\end{eqnarray}

	By the above inequalities~(\ref{ineq:MCconduct}),(\ref{ineq:first_term}),
	(\ref{ineq:third_term}), we obtain that 
	\begin{eqnarray*}
		\sum_{u\in D_i,v\in D\setminus D_i}\pi'(u) \p_u'(v) &\leq& \frac{1}{2d |D|} \cdot (1+1)\cdot|E_G(C_i,V\setminus C_i)| 
		=  \frac{|E_G(C_i,V\setminus C_i)|}{d|C_i|} \cdot \frac{|C_i|}{|D_i|}\cdot \frac{|D_i|}{|D|}\\
		&\leq& \phi_{G}(C_i) \cdot \frac{3}{2}\cdot \pi'(D_i)
		\leq \frac{3}{2}\phi_\myout \pi'(D_i),
	\end{eqnarray*}
	where in the second to last inequality, we used the assumption that $|D_i|\geq 2|B_i|$, which gives that $|D_i|\geq \frac{2}{3}|C_i|$. 
	
	Therefore, $\phi_{G_{\MM'}}(D_i)=2\phi_{\MM'}(D_i)\leq 3\phi_\myout$.

\end{proof}

Now we are ready to prove Theorem~\ref{thm:rw_perturbed}.
\begin{proof}[Proof of Theorem~\ref{thm:rw_perturbed}]
	Let $D=\cup_{j:|D_j|\geq 2|B_j|} D_j$. Let $B=V\setminus D$. Then it holds that $|B|= \sum_{1\leq i\leq h}|B_i|+\sum_{i:|D_i|< 2|B_i|} |D_i|\leq 3 \sum_{1\leq i\leq h}|B_i| \leq 3\varepsilon n$, and $|D|\geq (1-3\varepsilon)n$. We consider the induced Markov chain $\MM'$ on $D$ as above. By Lemma~\ref{lemma:inducedchain}, the corresponding $d$-bounded weighted graph $G_{\MM'}$ is $(k,\phi_\myin,3\phi_\myout)$-clusterable. In particular, $\phi_{G_{\MM'}}(D_i)\leq 3\phi_\myout$ and $\phi(G_{\MM'}[D_i])\geq \phi_\myin$ for any $D_i\subset D$.
	
	Let $\ell$ be an integer that will be specified later. For any $s\in D$, we let $\tau_s:={\p'_s}^{(\ell)}$ being the probability distribution of an $\ell$ step random walk starting from $s$ in the induced Markov chain $\MM'$. Let $\Gamma_s$ be the stopping rule from $\1_s$ to $\tau_s$ which is obtained by stopping the random walk that starts at $s$ in $\MM$ as soon as it has taken $\ell$ steps in $\MM'$. Let $\tilde{B}\subseteq D$ be the set guaranteed by  Lemma~\ref{lemma:hittingtime} such that $|\tilde{B}|\leq |B|\leq 3\varepsilon n$ and for any $s\in D\setminus \tilde{B}$, 
	\begin{eqnarray}
	E[\Gamma_s]\leq 2\ell.\label{eqn:gamma}
	\end{eqnarray}

	Now we set $a_{\ref{thm:rw_perturbed}}=\frac{a_{\ref{thm:rw_clusterable}}}{120}$ and thus $\phi_\myout\leq \frac{a_{\ref{thm:rw_clusterable}}\varepsilon\ccss^4\phi_\myin^2}{120k^3\log n}$. We then apply  Theorem~\ref{thm:rw_clusterable} on $G_{\MM'}$ with $(\phi_\myin,3\phi_\myout)$-clusters $D_i$ and $\alpha=3\sqrt{\varepsilon},\beta = 3\sqrt{\varepsilon}$, $\xi=\frac{\ccss}{6}$, to obtain that for any $D_j$ with $|D_j|\geq 3\sqrt{\varepsilon}n\geq 3\sqrt{\varepsilon}|D|$,%
	there exists a set $D_j'$ with $|D_j'|\geq (1-3\sqrt{\varepsilon})|D_j|$ such that for any $s\in D_j'$ and $\ell=\frac{20\log n}{\phi_\myin^2}$, it holds that 
	$\norm{\tau_s-\uni_{D_j}}_{\TV}\leq \frac{\ccss}{6}$. 
	This implies that 
	\begin{eqnarray}
	\norm{\tau_s-\uni_{C_j}}_{\TV} &\leq& \norm{\tau_s-\uni_{D_j}}_{\TV} + \norm{\uni_{D_j}-\uni_{C_j}}_{\TV} 
	\leq \frac{\ccss}{6} + \frac{|C_j\setminus D_j|}{|C_j|}\nonumber\\
	&=& \frac{\ccss}{6} + \frac{|B_j|}{|C_j|}
	\leq \frac{\ccss}{6}+\frac{\varepsilon n}{3\sqrt{\varepsilon}n}=\frac{\ccss}{6} +\frac{\sqrt{\varepsilon}}{3} %
	\label{eqn:mixing}
	\end{eqnarray}
	
	Now we set $\widehat{D}_j:=D_j'\setminus \tilde{B}$. Then it is guaranteed that for any $j$ with $|D_j|\geq 3\sqrt{\varepsilon}n$,  $|\widehat{D}_j|\geq (1-3\sqrt{\varepsilon})|D_j|-3\varepsilon n \geq (1-4\sqrt{\varepsilon})|D_j|$. %
	Thus, for any $s\in \widehat{D}_j$, both inequalities~(\ref{eqn:gamma}) and~(\ref{eqn:mixing}) hold.
	
	Now let us consider an arbitrary $s\in \widehat{D}_j$. Let $\tau=\tau_s$ and $\sigma=\1_s$. %
	By the precondition of the Theorem, we have that $t=\frac{120\log n}{\ccss\phi_\myin^2}=\frac{6\ell}{\ccss}$. We further recall that $\pp_s^t$ denotes the distribution of a uniform average walk of length $t$ with initial distribution $\sigma$ in the original chain $\MM$. %
	By applying Lemma~\ref{lemma:averagewalk} with $\sigma^{(t)}=\pp_s^t$ and distribution $\tau$, we obtain that for any $U\subset V$,
	\begin{eqnarray*}
		\sum_{i\in U}\pp_s^t(i) \leq \frac{1}{t}\HH(\sigma,\tau) + \frac{1}{t}\sum_{i\in U}\sum_{m=0}^{t-1}\tau^m(i),
	\end{eqnarray*}
	where $\tau^m$ denotes the distribution of an $m$ step random walk on $G$ with initial distribution $\tau$, that is $\tau^m=\tau\mat{P}^m$. (Here we slightly abuse the notation $\tau$ and use it to denote the distribution on $V$ by adding zero coordinates corresponding to vertices in $V\setminus D$). This further implies that for any set $C_j$ and any $U\subseteq V$,
	\begin{eqnarray*}
		\sum_{i\in U}(\pp_s^t(i)-\uni_{C_j}(i)) \leq \frac{1}{t}\HH(\sigma,\tau) + \frac{1}{t}\sum_{i\in U}\sum_{m=0}^{t-1}(\tau^m(i)-\uni_{C_j}(i))
	\end{eqnarray*}
	
	Therefore,
	\begin{eqnarray}
	\norm{\pp_s^t - \uni_{C_j}}_{\TV} 
	&\leq&
	\frac{1}{t}\HH(\sigma,\tau) + \frac{1}{t}\sum_{m=0}^{t-1}\norm{\tau^m-\uni_{C_j}}_{\TV} 
	\leq
	\frac{2\ell}{t} +\frac{1}{t}\sum_{m=0}^{t-1}\norm{\tau^m-\uni_{C_j}}_{\TV},\label{eqn:final_bound}
	\end{eqnarray}
	where the last inequality follows from inequality~(\ref{eqn:gamma}). Now recall that $\mat{P}=\frac{\I+d^{-1}\A}{2}$ denotes the transition probability matrix of the random walk. We will show the following claim.%
	\begin{claim}\label{claim:close}
		For any $0\leq m\leq t-1$, it holds that 
		$\norm{\uni_{C_j}\mat{P}^m - \uni_{C_j}}_{\TV}\leq \frac{\ccss}{3}.$
	\end{claim}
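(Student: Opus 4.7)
The plan is to observe that since $C_j$ has very small outer conductance ($\phi_G(C_j)\leq \phi_\myout$), only a tiny amount of mass can leak out of $C_j$ in $t$ steps of the lazy walk. The key structural observation is that $\mat{P}$ is the lazy random walk on the $d$-regular graph $G'$, so it is symmetric and preserves the uniform distribution $\uni_V$, i.e.\ $\uni_V\mat{P}=\uni_V$. Writing $\uni_V=\uni_V|_{C_j}+\uni_V|_{V\setminus C_j}$ and applying $\mat{P}^m$ to both sides, both summands remain non-negative and sum to $\uni_V$, so $\uni_V|_{C_j}\mat{P}^m\leq \uni_V$ pointwise (equivalently, by a one-step induction: if $\mu\geq 0$ satisfies $\mu\leq \uni_V$, then $\mu\mat{P}\leq \uni_V\mat{P}=\uni_V$). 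Rescaling by the factor $n/|C_j|$, I get the pointwise bound $(\uni_{C_j}\mat{P}^m)(v)\leq 1/|C_j|$ for every $v\in V$.

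Next, this pointwise upper bound converts the total variation distance into an escape probability. Indeed, since $\uni_{C_j}$ is supported on $C_j$ and $\uni_{C_j}\mat{P}^m\leq \uni_{C_j}$ on $C_j$, one has
$$\norm{\uni_{C_j}\mat{P}^m-\uni_{C_j}}_1 = \sum_{v\in C_j}\left(\tfrac{1}{|C_j|}-(\uni_{C_j}\mat{P}^m)(v)\right) + \sum_{v\notin C_j}(\uni_{C_j}\mat{P}^m)(v) = 2\,(\uni_{C_j}\mat{P}^m)(V\setminus C_j),$$
so $\norm{\uni_{C_j}\mat{P}^m-\uni_{C_j}}_{\TV} = (\uni_{C_j}\mat{P}^m)(V\setminus C_j)$, the mass that has leaked out of $C_j$ after $m$ steps.

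Finally, I would bound the escape mass by $m\phi_\myout/2$ via a one-step conductance estimate. Working with $\mu_k:=\uni_V|_{C_j}\mat{P}^k$, which satisfies $\mu_k\leq \uni_V$ pointwise by the monotonicity above, the outflow in one step satisfies
$$\sum_{u\in C_j,\,v\notin C_j}\mu_k(u)\mat{P}(u,v)\;\leq\;\frac{1}{n}\sum_{u\in C_j,\,v\notin C_j}\mat{P}(u,v)\;=\;\frac{|E(C_j,V\setminus C_j)|}{2dn}\;\leq\;\frac{|C_j|\phi_\myout}{2n},$$
so telescoping yields $\mu_m(V\setminus C_j)\leq m|C_j|\phi_\myout/(2n)$, and rescaling by $n/|C_j|$ gives $(\uni_{C_j}\mat{P}^m)(V\setminus C_j)\leq m\phi_\myout/2$. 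Substituting $m\leq t-1<t=120\log n/(\ccss\phi_\myin^2)$ and the hypothesis $\phi_\myout\leq a_{\ref{thm:rw_perturbed}}\varepsilon\ccss^4\phi_\myin^2/(k^3\log n)$ yields $\frac{m\phi_\myout}{2}\leq \frac{60\,a_{\ref{thm:rw_perturbed}}\varepsilon\ccss^3}{k^3}\leq \ccss/3$ provided the universal constant $a_{\ref{thm:rw_perturbed}}$ is chosen sufficiently small (say $\leq 1/180$). The only step that requires any real thought is the pointwise inequality $\uni_{C_j}\mat{P}^m\leq \uni_{C_j}$ on $C_j$ obtained via the stationarity of $\uni_V$ under $\mat{P}$; once that is in hand, the TV distance reduces to an escape probability and the conductance bound handles the rest routinely.
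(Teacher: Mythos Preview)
Your argument is correct and is in fact considerably simpler than the paper's own proof. The paper proves Claim~\ref{claim:close} via a spectral detour: it expands $\uni_{C_j}=\sum_i\alpha_i\vv_i$ in the eigenbasis of $\mat{P}$, uses the Rayleigh-quotient identity $\uni_{C_j}(\mat{I}-\mat{P})\uni_{C_j}^T\leq\phi_\myout/(2|C_j|)$ to show that most of the $\ell_2$ mass of $\uni_{C_j}$ lies on eigenvectors with eigenvalue at least $1-O(\phi_\myout/\ccss^2)$, and from this bounds $\norm{\uni_{C_j}\mat{P}^m-\uni_{C_j}}_2$. It then separately invokes Proposition~2.5 of~\cite{ST13:local} to bound the escaped mass $\uni_{C_j}\mat{P}^m\1_{V\setminus C_j}^T$, and finally combines the $\ell_2$ bound on $C_j$ (via Cauchy--Schwarz) with the escape bound off $C_j$.

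Your route bypasses all of the spectral work. The pointwise inequality $(\uni_{C_j}\mat{P}^m)(v)\leq 1/|C_j|$, which follows immediately from the stationarity of $\uni_V$ under the symmetric matrix $\mat{P}$ and entrywise monotonicity, collapses the TV distance exactly to the escape probability; the paper never exploits this observation. Your telescoping escape bound is also self-contained and replaces the external citation to~\cite{ST13:local}. What the paper's approach might buy in other contexts is robustness to non-regular or non-reversible chains (where the pointwise bound via stationarity of uniform is unavailable), but in the present $d$-regular setting your argument is strictly more elementary and yields the same constant $t\phi_\myout/2$.
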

	Assuming that the above claim holds, we have that for any $0\leq m\leq t-1$, 
	\begin{eqnarray*}
		&&\norm{\tau^m -\uni_{C_j}}_{\TV} = \norm{\tau \mat{P}^m -\uni_{C_j}}_{\TV} 
		\leq \norm{\tau \mat{P}^m -\uni_{C_j}\mat{P}^m + \uni_{C_j}\mat{P}^m - \uni_{C_j}}_{\TV}\\
		&\leq& \norm{\tau \mat{P}^m -\uni_{C_j}\mat{P}^m}_{\TV} + \norm{\uni_{C_j}\mat{P}^m - \uni_{C_j}}_{\TV}
		\leq \norm{\tau -\uni_{C_j}}_{\TV} + \norm{\uni_{C_j}\mat{P}^m - \uni_{C_j}}_{\TV} \\
		&\leq&\frac{\ccss}{6} +\frac{\sqrt{\varepsilon}}{3}+ \frac{\ccss}{3} =\frac{\ccss}{2}+\frac{\sqrt{\varepsilon}}{3},%
	\end{eqnarray*}
	where the last inequality follows from Ineq.~(\ref{eqn:mixing}) and Claim~\ref{claim:close}. This, together with inequality~(\ref{eqn:final_bound}), gives that 
	\begin{eqnarray*}
		\norm{\pp_s^t - \uni_{C_j}}_{\TV} 
		\leq
		\frac{2\ell}{t} +\frac{1}{t}\cdot t\cdot (\frac{\ccss}{2}+\frac{\sqrt{\varepsilon}}{3})
		\leq 
		\frac{\ccss}{3} +\frac{\ccss}{2}+\frac{\sqrt{\varepsilon}}{3}%
		< \ccss + \sqrt{\varepsilon}.
	\end{eqnarray*}
	This will then finish the proof of the theorem. %
	
	Now we give the proof of Claim~\ref{claim:close}.
	\begin{proof}[Proof of Claim~\ref{claim:close}]
		For notational simplicity, we let $C=C_j$. We write $\mat{P}=\sum_{i=1}^n\eta_i\vv_i\vv_i^T$, where $\eta_i:=1-\frac{\lambda_i}{2}$ and $\vv_i$ ($1\leq i\leq n$) denote the $i$-th eigenvalue of $\mat{P}$, respectively. Let $\uni_C=\sum_{i}\alpha_i\vv_i$. Note that $\sum_{i=1}^{n}\alpha_i^2=\norm{\uni_C}_2^2=\frac{1}{|C|}$. 
		
		Note that
		$$\frac{\1_C}{|C|}\cdot (\I-\mat{P})\1_C^T = \frac{\1_C(d\I-\A)\1_C^T}{2d|C|}=\frac{\sum_{u\sim v}(\1_C(u)-\1_C(v))^2}{2d|C|}=\frac{\phi_G(C)}{2}\leq\frac{ \phi_\myout}{2},$$
		which gives that $1-|C|\cdot \uni_C \mat{P} \uni_C^T\leq \frac{\phi_\myout}{2}$. Thus, $1-|C|\sum_i\eta_i\alpha_i^2\leq \frac{\phi_\myout}{2}$, or equivalently, 
		$\sum_i\eta_i\alpha_i^2\geq \frac{1-\phi_\myout/2}{|C|}.$

		Let $H=\{i: \eta_i\geq 1-\frac{x\phi_\myout}{2} \}$, where $x=\frac{8}{\ccss^2}$. Then we have that 
		$\sum_{i\in H}\alpha_i^2 + (1-\frac{x\phi_\myout}{2})\sum_{i\notin H} \alpha_i^2\geq \frac{1-\phi_\myout/2}{|C|}.$	
		Thus, 
		$\sum_{i\in H}\alpha_i^2 + (1-\frac{x\phi_\myout}{2})(\frac{1}{|C|}-\sum_{i\in H}\alpha_i^2)\geq \frac{1-\phi_\myout/2}{|C|},$
		which gives that 
		$$\sum_{i\in H}\alpha_i^2 \geq \frac{x-1}{x\cdot |C|}, \quad \sum_{i\notin H}\alpha_i^2 \leq\frac{1}{x|C|}.$$
		
		Now we have that 
		\begin{eqnarray*}
			&& \norm{\uni_C \mat{P}^m-\uni_C}_2^2
			=
			\sum_i(\alpha_i\eta_i^m-\alpha_i)^2
			= \sum_i\alpha_i^2(1-\eta_i^m)^2
			\leq
			\sum_{i\in H} (1-(1-\frac{x\phi_\myout}{2})^m)^2\alpha_i^2 +\sum_{i\notin H}\alpha_i^2 \\
			&\leq& 
			\sum_{i\in H} (\frac{xt\phi_\myout}{2})^2 \alpha_i^2 + \frac{1}{x|C|}
			\leq (\frac{x^2t^2\phi_\myout^2}{4}+ \frac{1}{x})\frac{1}{|C|}
			< \frac{\ccss^2}{4|C|}, 
		\end{eqnarray*}
		where we used our choice of parameters which satisfy that $t\phi_\myout\leq \ccss^3/16$ and $x=\frac{8}{\ccss^2}$.
		
		On the other hand, if we let $\D_C$ denote the diagonal matrix such that $\D_C(u,u)=1$ if $u\in C$ and $0$ otherwise, then by Proposition 2.5 in~\cite{ST13:local}, it holds that for any $m\geq 0$, 
		$$\uni_C (\mat{P} \D_C)^m\1_C^T=\uni_C (\mat{P} \D_C)^m\1_V^T\geq 1-\frac{m\phi_G(C)}{2}\geq 1-\frac{m\phi_\myout}{2}.$$ 
		This gives that 
		$$\uni_C \mat{P} ^m\1_{V\setminus C}^T = 1 - \uni_C \mat{P} ^m\1_{C}^T\leq 1 - \uni_C (\mat{P} \D_C)^m\1_C^T \leq \frac{m\phi_\myout}{2}.$$
		
		Finally, by the above calculations, we have that
		\begin{eqnarray*}
			&&\norm{\uni_C\mat{P}^m-\uni_C}_{\TV} 
			= \frac{1}{2}\norm{\uni_C\mat{P}^m -\uni_C}_1
			\leq\frac{1}{2}(\uni_C\mat{P}^m\1_{V\setminus C}^T+\sum_{i\in C}|\uni_C\mat{P}^m(i) -\uni_C(i)|) \\
			&\leq& \frac{1}{2} (\frac{m\phi_\myout}{2} + \sqrt{|C|}\cdot \sqrt{\sum_{i\in C}(\uni_C\mat{P}^m(i) -\uni_C(i))^2})
			\leq \frac{1}{2} (\frac{t\phi_\myout}{2} + \sqrt{|C|}\cdot \norm{\uni_C\mat{P}^m-\uni_C}_2)\\
			&\leq&  \frac{\ccss^3}{64} + \frac{\ccss}{4}< \frac{\ccss}{3}.
		\end{eqnarray*}
		This finishes the proof of the Claim.
	\end{proof}	
	This finishes the proof of Theorem~\ref{thm:rw_perturbed}.
\end{proof}

\section{Conclusions}\label{sec:conclusions}
We gave the first robust clustering oracle and local filter for reconstructing the cluster structure of bounded degree graphs. Both algorithms run in sublinear times. %
To design and analyze our algorithms, we formalized and proved a new behavior of random walks in a noisy clusterable graph: a random walk of appropriately chosen length from a typical vertex in a large cluster of the clusterable part will mix well in the corresponding cluster, which might be of independent interest. 

It will be an interesting open question to design a local reconstruction algorithm that outputs a clusterable graph with better cluster-quality guarantee, especially to remove the $\Theta(\log n)$ gap between the inner conductances of the original graph and the corrected graph from our current result. In the property testing setting, such a gap was successfully closed, for both testing expansion (\cite{CS10:expansion} vs. \cite{KS11:expansion,NS10:expansion}) and for testing $k$-clusterability (\cite{CPS15:cluster} vs. \cite{CKKMP18:cluster}). However, for the local reconstruction setting, we even do not know how to remove such a logarithmic gap for reconstructing noisy expander graphs (i.e., $k=1$). As noted in~\cite{KPS13:noise}, for the case $k=1$, one already needs to have more refined definitions of strong/weak vertices and much stronger results about random walks in noisy expander graphs. Removing the logarithmic gap from our result for locally reconstructing cluster structure for general $k\geq 1$ can be as hard, if not harder. Similar question can be asked for removing the $\Theta(\log n)$ gap between the inner and outer conductance of the input instance of our robust clustering oracle. As we mentioned before, there is evidence in~\cite{CKKMP18:cluster} showing that this is difficult (for distribution distance based algorithms). 

\vspace{1em}

\paragraph{Acknowledgements.}

We are thankful to anonymous reviewers of FOCS 2018 and STOC 2019 for valuable comments.

\bibliographystyle{alpha}
\bibliography{clustering}

\appendix
\begin{center}\huge\bf Appendix \end{center}

\section{Proof of Lemma~\ref{lemma:strongvertices}}\label{sec:proof_rco}
\begin{proof}[Proof of Lemma~\ref{lemma:strongvertices}]
	First, note that if there are more than $\sqrt{\cst} |C|$ vertices in $C$ satisfying that $\pp_u^t(v)\leq (1-\sqrt{\cst})/|C|$, then $\norm{\pp_u^t-\uni_{C}}_{TV}>\sqrt{\cst} |C|\cdot \sqrt{\cst}/|C|\geq \cst$, which contradicts to the fact that $u$ is strong with respect to $C$.
	
	Second, by the definition of the set $S_u^{\theta_0}$ and the fact that ${\theta_0}\leq 1/2$, there can be at most $2\sqrt{n}$ vertices in $V\setminus S_u^{\theta_0}$, and thus there are at least $(1-\sqrt{\cst})|C|-2\sqrt{n}$ vertices $w\in S_u^{\theta_0} \cap C$ such that $\pp_u^t(w)\geq \frac{1-\sqrt{\cst}}{|C|}$. Thus
	\begin{eqnarray*}
		\pp_u^t(S_u^{\theta_0}) \geq ((1-\sqrt{\cst})|C|-2\sqrt{n}) \cdot \frac{1-\sqrt{\cst}}{|C|}\geq 1/2,
	\end{eqnarray*}
	where in the second inequality we used the fact that $|C|\geq 3\sqrt{\varepsilon'} n = 3\sqrt{\frac{6\varepsilon}{\phi}}n > \frac{8\sqrt{n}}{\sqrt{\kappa}}$ as $\varepsilon=\Omega(\frac{\phi}{{n}})$.
	
	Finally, since $u$ is strong with respect to $C$, there are at least $(1-\sqrt{\cst})|C|-2\sqrt{n}$ vertices $w\in S_u^{\theta_0} \cap C$ such that $\pp_u^t(w)\geq \frac{1-\sqrt{\cst}}{|C|}$. The same is true for $v$. Thus, there are at least $(1-2\sqrt{\cst})|C|-4\sqrt{n}$ vertices $w\in S_u^{\theta_0}\cap S_v^{\theta_0}\cap C$ such that $p_u(w),p_v(w)\geq \frac{1-\sqrt{\cst}}{|C|}$. Again, by the fact that $|C|>\frac{8\sqrt{n}}{\sqrt{\kappa}}$, we have that
	\begin{eqnarray*}
		\rcp_{\theta_0}(u,v)
		&\geq& ((1-2\sqrt{\cst})|C|-4\sqrt{n})\cdot \frac{1-\sqrt{\cst}}{|C|} \cdot \frac{1-\sqrt{\cst}}{|C|}
		\geq
		\frac{1-5\sqrt{\cst}}{|C|}. %
	\end{eqnarray*}
This finishes the proof of the Lemma.
\end{proof}

\section{Description of the Algorithm \textsc{EstimateRCP}}\label{sec:RCP_alg}
In the algorithm, $C$ is a sufficiently large constant.
\begin{center}
	\begin{tabular}{|p{\textwidth}|}
		\hline
		\textsc{EstimateRCP}$(G, u,v, \theta, \delta, t)$ \\
		\hline
		\begin{tightenumerate}
			\item Run the following $C\log n$ times:
			\begin{tightenumerate}
				\item Let $F_u:=\textsc{FindSet}(G,u,\theta,t)$ and $F_v:=\textsc{FindSet}(G,v,\theta, t)$
				\item Keep performing uniform average walks of length $t$ from $u$ (resp. $v$) until $x:=\sqrt{n}/\delta^2$ such walks end at vertices in $F_u$ (resp. $F_v$). Let $W_u$ (resp. $W_v$) denote the set of walks. If more than $20x$ walks are performed (from either $u$ or $v$), then report \textsc{Fail}.
				\item Let $A$ be the number of pairwise collisions\footnote{If a walk from $W_u$ and a walk from $W_v$ end at the same vertex, then this counts as one pairwise collision.} between walks in $W_u$ and $W_v$. Output $A/x^2$. 
			\end{tightenumerate}				
			\item If the majority of the above runs do not fail, then output the median of all the  output numbers in successful runs. Otherwise, \textsc{Abort}.
		\end{tightenumerate}\\
		\hline
	\end{tabular}
\end{center}
\begin{center}
	\begin{tabular}{|p{\textwidth}|}
		\hline
		\textsc{FindSet}$(G,u,\theta, t)$ \\
		\hline
		\begin{tightenumerate}
			\item Perform $C\sqrt{n}\log n$ independent uniform average walks of length $t$ from $u$.
			\item Let $F_u$ denote the set of all vertices $w$ such that at most $C(1-\frac{\theta}{2})\log n$ walks from $u$ end at $w$. Return $F_u$.
			\vspace{-1em}
		\end{tightenumerate}\\
		\hline
	\end{tabular}
\end{center}

\section{Further Guarantees on the Locally Reconstructed Graph}\label{sec:cluster_outersmall}
In the following, we show that by sacrificing the inner conductance quality, we can also find a clustering of the reconstructed graph $G'$ with small outer conductance. 
\begin{lemma}\label{lemma:out_conductance}
	Let $\phi^*=\frac{a_{\ref{lemma:conductance}}\varepsilon\phi}{k^4\log n}$. %
	If $G$ is an $\varepsilon$-perturbation of a $(k,\phi,\frac{a_{\ref{thm:rw_perturbed}}\varepsilon\kappa^4\phi}{3k^3\log n})$-clusterable graph, then the resulting graph	$G'$ from the local reconstruction algorithm is $(k,\frac{\nu^6}{6^k}\phi^*,\min\{k\nu, 1\})$-clusterable, for any $0\leq \nu\leq 1$.
\end{lemma}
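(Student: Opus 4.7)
The plan is to refine the $(k, \phi^*, 1)$-clustering $\{C_1, \ldots, C_h\}$ provided by Lemma~\ref{lemma:conductance} into a clustering that additionally has small outer conductance, via a bounded recursive-splitting procedure. If $k\nu \geq 1$, the outer-conductance bound $\min\{k\nu, 1\} = 1$ is achieved trivially by $\{C_i\}$ itself and the inner conductance $\phi^*$ already exceeds $\frac{\nu^k}{6^k}\phi^*$, so the lemma holds; assume $k\nu < 1$ henceforth.

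The procedure initializes $\mathcal{P} := \{C_1, \ldots, C_h\}$ and, while some $P \in \mathcal{P}$ has $\phi_{G'}(P) > k\nu$, finds a subset $S \subset P$ with $|S| \leq |P|/2$ minimizing $\phi_{G'[P]}(S)$ and replaces $P$ in $\mathcal{P}$ by $\{S, P \setminus S\}$. The outer-conductance guarantee $\min\{k\nu, 1\}$ follows directly from the termination condition; what remains is to show (i) the procedure terminates with $|\mathcal{P}| \leq k$, and (ii) every final part has inner conductance at least $\frac{\nu^k}{6^k}\phi^*$.

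For (ii), the key tool is a set-splitting lemma: if $\phi(G'[P]) \geq \alpha$ and $S \subset P$ satisfies $|S| \leq |P|/2$ with $\phi_{G'[P]}(S) \leq \nu$, then both $\phi(G'[S])$ and $\phi(G'[P\setminus S])$ are at least $\nu\alpha/6$. The proof will decompose $|E_{G'[S]}(T, S\setminus T)| = |E_{G'[P]}(T, P\setminus T)| - |E_{G'}(T, P\setminus S)|$ for any candidate cut $T \subset S$ inside $S$, bound the first term below by $\alpha d|T|$ using $\phi(G'[P]) \geq \alpha$, and bound the second above by $\min\{d|T|, \nu d|S|\}$; a short case analysis on $|T|$ (combined with a bad-vertex pruning argument handling the tiny-$T$ regime by noting that at most $2\nu|S|$ vertices of $S$ can send more than $d/2$ edges across the cut $(S, P\setminus S)$) yields the claimed factor. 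Iterating along any root-to-leaf path in the splitting tree, whose depth is at most $k$, then propagates the factor to $(\nu/6)^k \phi^*$.

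The main obstacle is (i): bounding the total number of splits so that $|\mathcal{P}|$ never exceeds $k$. The natural route is to exhibit a potential (for instance, the sum over parts of the boundary $|E_{G'}(P, V\setminus P)|$ offset appropriately against the current number of parts) that strictly decreases at each split; more directly, one can charge each extra split against the $(k,\phi,\phi_\myout)$-clusterable backbone of $G$ together with the bound that $G'$ differs from a $(k,\phi,\phi_\myout)$-clusterable graph by only an $O(k\sqrt{\varepsilon/\phi}n)$-edge modification, so that more than $k-h$ splits would force the crossing-edge total to exceed what the $\varepsilon$-perturbation allows. Combined with the initial bound $h \leq k$ from Lemma~\ref{lemma:conductance}, this keeps $|\mathcal{P}| \leq k$ throughout and completes the proof.
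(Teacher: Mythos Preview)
Your approach runs in the wrong direction, and the gap is not just the part-counting issue you flag as an ``obstacle'' but the very premise that splitting makes progress.

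The goal is to lower the \emph{outer} conductance of each part. You propose that whenever $\phi_{G'}(P)>k\nu$ you split $P$ along its minimum internal cut. But splitting never lowers outer conductance: if $P$ is cut into $S$ and $P\setminus S$, every edge that left $P$ still leaves one of the two pieces, and on top of that the internal cut edges now cross as well, so
\[
|E_{G'}(S,V\setminus S)|+|E_{G'}(P\setminus S,V\setminus(P\setminus S))|
= |E_{G'}(P,V\setminus P)| + 2|E_{G'}(S,P\setminus S)|.
\]
Since sizes decrease, the outer conductance of at least one piece is at least that of $P$. Hence your loop need not terminate with all parts having outer conductance $\le k\nu$; it can keep splitting indefinitely (or until parts are singletons). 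Your set-splitting lemma also requires the internal cut to have conductance $\le\nu$, but the starting parts are guaranteed to have inner conductance $\ge\phi^*$, so for small $\nu$ no such cut exists; you would be forced to split along a cut of conductance $\ge\phi^*$, and then the factor you pick up is not $\nu/6$ but depends on that actual value. Finally, even setting these issues aside, your charging argument for $|\mathcal{P}|\le k$ would have to control the growth of total crossing edges in $G'$, but $G'$ differs from the clusterable backbone by up to $O(k\sqrt{\varepsilon/\phi}\,n)$ added expander edges, which can dwarf the $k\nu\cdot dn$ budget for small $\nu$.

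The paper does the opposite: it \emph{merges}. Starting from the $(k,\phi^*,1)$-clustering $\{C_1,\dots,C_h\}$ of Lemma~\ref{lemma:conductance}, as long as there exist $C_i,C_j$ with $|C_i|\le|C_j|$ and $|E'(C_i,C_j)|\ge\nu d|C_i|$, it merges them. Termination is immediate (the number of parts drops), the final outer conductance is at most $k\nu$ by the stopping rule, and a case analysis shows each merge degrades inner conductance by at most a factor $\nu/6$, giving $\frac{\nu^k}{6^k}\phi^*$ after at most $k$ merges.
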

\begin{proof}
We start with the $(k,\phi^*,1)$-clustering of $G'$ that is guaranteed from Lemma~\ref{lemma:conductance}. Let $C_1,\cdots, C_h$ be a partition satisfying that $\phi(G'[C_i])\geq \phi^*$. 
Let $\nu\in [0,1]$. %
We next carefully merge some of these clusters so that each part of the final partition will have both inner conductance at least $\frac{\nu^k}{6^k}\phi^*$ and outer conductance at most $\min\{k\nu, 1\}$. 

If there exists $1\leq i\neq j\leq h$ such that $|C_i|\leq |C_j|$ with $|E'(C_i,C_j)|\geq \nu d|C_i|$, then we merge $C_i$ and $C_j$ to obtain a new cluster $C:=C_i\cup C_j$. We repeat until the condition is violated. 

Note that this process always terminates as each time the number of clusters decrease by $1$. Furthermore, note that after termination, each cluster has outer conductance at most $\min\{1,k\nu\}$ by construction. Now we show that in each iteration, the merged $C=C_i\cup C_j$ still has large inner conductance. Let $S\subset C$ with $|S|\leq \frac{|C|}{2}$. Let $S_i=S\cup C_i$ and $S_j= S\cup C_j$. Note that it can not happen simultaneously that $|S_i|>\frac{|C_i|}{2}$ and $|S_j|>\frac{|C_j|}{2}$. Now we have the following cases.
\begin{itemize}[leftmargin=15pt]
	\item If both $|S_i|\leq \frac{|C_i|}{2}$ and $|S_j|\leq \frac{|C_j|}{2}$, then $$\phi_{G[C]}(S)=\frac{|E'(S,C\setminus S)|}{d|S|}\geq \min\{\frac{|E'(S_i,C_i\setminus S_i)|}{d|S_i|}, \frac{|E'(S_j,C_j\setminus S_j)|}{d|S_j|}\}\geq \phi^*.$$
	\item If $|S_j|> \frac{|C_j|}{2}$, then $|S|\leq |C_i|+|S_j|\leq |C_j|+|S_j|<3|S_j|$. 
	\begin{tightenumerate}
		\item If $|S_j|\geq (1-\frac{\nu}{2})|C_j|$, then $|C_i|\geq \frac{2}{3}|C_j|$ as otherwise $|C|\leq \frac{5}{3}|C_j|$ and $|S|\geq |S_j|>\frac{|C|}{2}$, a contradiction. Then $|S_i|\leq \frac{\nu}{2}|C_j|\leq \frac{\nu}{2}\frac{3}{2}|C_i|=\frac{3\nu}{4}|C_i|$. Thus there will be at least $\frac{d\nu}{4}|C_i|$ edges between $S_j$ and $C_i\setminus S_i$. Thus $\phi_{G[C]}(S)\geq \frac{|E'(S_j,C_i)|}{d|S|}\geq \frac{\frac{d\nu}{4}|C_i|}{3d|S_j|}\geq \frac{\frac{d\nu}{4}\frac{2}{3}|C_j|}{3d|C_j|}=\frac{\nu}{18}$.
		\item If $|S_j|\leq (1-\frac{\nu}{2})|C_j|$, then $|C_j\setminus S_j|\geq \frac{\nu}{2}|C_j|\geq \frac{\nu}{2(1-\frac{\nu}{2})}|S_j|$. Therefore, $\phi_{G[C]}(S)\geq  \frac{|E'(S_j,C_j\setminus S_j)|}{d|S|}\geq \frac{\phi^* d|C_j\setminus S_j|}{3d|S_j|}>\frac{\phi^*\nu}{6}$.
	\end{tightenumerate}
	\item If $|S_i|>\frac{|C_i|}{2}$, then it must hold that $|S_j|< \frac{|C_j|}{2}$.
	\begin{tightenumerate}
		\item If $|S_i|<(1-\frac{\nu}{2})|C_i|$, then $\frac{|C_i|}{2}\geq |C_i\setminus S_i|\geq \frac{\nu}{2}|C_i|$. Thus  
		$\phi_{G[C]}(S)\geq \frac{|E'(S_i,C_i\setminus S_i)|+|E'(S_j,C_j\setminus S_j)|}{d(|S_i|+|S_j|)}\geq \min\{\frac{\phi^*d |C_i\setminus S_i|}{d|S_i|}, \frac{\phi^* d |S_j|}{d|S_j|} \} = \min\{\frac{\nu\phi^*}{2}, \phi^*\}=\frac{\nu\phi^*}{2}$.
		\item If $|S_i|\geq (1-\frac{\nu}{2})|C_i|$, then $|E'(S_i,C_j)|\geq \frac{d\nu}{2}|C_i|$. If $|E'(S_i,S_j)|\geq \frac{1}{2}|E'(S_i,C_j)|$, then $|S_j|\geq \frac{\nu}{4}|C_i|$, then $\phi_{G[C]}(S)\geq \frac{|E'(S_j,C_j\setminus S_j)|}{d|S|}\geq \frac{\phi^* d|S_j|}{d(|S_j|+|C_i|)}\geq \frac{\phi^* \nu}{5}$. Otherwise, $|E'(S_i,S_j)|< \frac12|E'(S_i,C_j)|$, then $|E'(S_i,C_j\setminus S_j)|\geq \frac12|E'(S_i,C_j)|\geq \frac{d\nu}{4}|C_i|$. Thus  $\phi_{G[C]}(S)\geq \frac{|E'(S_i,C_j\setminus S_j)|+|E'(S_j,C_j\setminus S_j)|}{d(|S_i|+|S_j|)}\geq \min\{\frac{\frac{d\nu}{4}|C_i|}{d|S_i|}, \frac{\phi^* d |S_j|}{d|S_j|}\}\geq \min\{\frac{\nu}{4}, \phi^* \}$.
	\end{tightenumerate}
\end{itemize}
From the above analysis, we know that if both $\phi(G[C_i])\geq \phi^*$ and $\phi(G[C_j])\geq\phi^*$, then after merging $C_i$ and $C_j$, the resulting cluster $C$ has inner conductance at least $\frac{\nu\phi^*}{6}$. Since there will be at most $k$ iterations (or merges), we know that in the final partition $\mathcal{P'}$, each part has outer conductance at most $\min\{k\nu,1\}$ and inner conductance $\frac{\nu^k\phi^*}{6^k}=\frac{a_{\ref{lemma:conductance}}\nu^k}{6^kk^4}\frac{\varepsilon\phi}{\log n}$. This proves the statement of the lemma.%
\end{proof}

\end{document}